\newtheorem{prop}		{Proposition}
\newtheorem{lm}[prop]{Lemma}
\renewcommand{\leq}{\leqslant}
\renewcommand{\geq}{\geqslant}
\renewcommand{\le}{\leqslant}
\renewcommand{\ge}{\geqslant}
\newcommand{\Oubliettes}[1]{}
\newcommand{\enlevediapo}[1]{}
\definecolor{gris-ligne-vide}{gray}{0.90}
\newcommand{\ION}{
\renewcommand{\algorithmicrequire}{\textbf{Input:}}
\renewcommand{\algorithmicensure}{\textbf{Output:}}
}
\newcommand{\IOB}{
\renewcommand{\algorithmicrequire}{\textcolor{white}{\textbf{Input:}}}
\renewcommand{\algorithmicensure}{\textcolor{white}{\textbf{Output:}}}
}
\newcommand{\set}[1]{\left\{#1\right\}}
\newcommand{\size}[1]{\left|#1\right|}
\DeclareMathOperator{\F}		{\mathbb{F}}
\DeclareMathOperator{\Q}		{\mathbb{Q}}
\DeclareMathOperator{\Z}		{\mathbb{Z}}
\DeclareMathOperator{\N}		{\mathbb{N}}
\newcommand{\OK}	{\mathcal{O}_K}
\newcommand{\OL}	{\mathcal{O}_L}
\newcommand{\OF}	{\mathcal{O}_F}
\DeclareMathOperator{\OGab}	{\mathcal{O}_{\Gab}}
\DeclareMathOperator{\OC}	{\mathcal{O}_{\mathcal C}}
\DeclareMathOperator{\M}		{\mathcal{M}}
\DeclareMathOperator{\roots}	{Roots}
\DeclareMathOperator{\Auto}	{Aut}
\DeclareMathOperator{\Vect}	{Vect}
\DeclareMathOperator{\Ann}	{\mathcal{A}}
\DeclareMathOperator{\Int}	{\mathcal{I}}
\DeclareMathOperator{\rang}	{rank}
\DeclareMathOperator{\Deg}	{deg}
\DeclareMathOperator{\Dim}	{dim}
\DeclareMathOperator{\Min}	{min}
\DeclareMathOperator{\Gab}	{Gab}
\DeclareMathOperator{\Dec}	{Dec}
\DeclareMathOperator{\NLR}	{NLR}
\DeclareMathOperator{\LR}	{LR}
\DeclareMathOperator{\Hdim}	{\mathcal{H}_{dim}}
\newcommand{\wb}	{w_{\mathcal{B}}}
\newcommand{\wa}	{w_{\mathcal{A}}}
\newcommand{\wthK}	{w_{\theta,K}}
\newcommand{\wthL}	{w_{\theta,L}}
\newcommand{\db}	{d_{\mathcal{B}}}
\newcommand{\da}	{d_{\mathcal{A}}}
\newcommand{\dthK}	{d_{\theta,K}}
\newcommand{\dthL}	{d_{\theta,L}}
\newcommand{\ceil}[1]{\left\lceil#1\right\ceil}
\renewcommand{\deg}			{\Deg}
\renewcommand{\dim}			{\Dim}
\renewcommand{\min}			{\Min}
\newcommand{\TA}{Welch-Berlekamp like algorithm}
\newcommand{\vectgr}[1]{\vec{#1}}
\newcommand{\vecteurgras}[1]{\vec{#1}}
\newcommand{\matrice}[1]{\vec{#1}}
\newcommand{\Monome}[2]{
	\ifthenelse%
		{\equal{#2}{00}}%
		{{}}%
		{\ifthenelse%
			{\equal{#2}{0}}%
			{1}%
			{\ifthenelse%
				{\equal{#2}{1}}%
				{#1}%
				{{#1}^{#2}}%
			}%
		}%
}%
\newcommand{\AppliqueTheta}[2]	{\theta^{#2}({#1})}
\newcommand{\Eval}[2]			{{#1}\left\{ #2 \right\}}
\newcommand{\Annul}[1]			{\Ann_{#1}}
\newcommand{\Interpol}[2]		{\Int_{#1,#2}}
\newcommand{\ForallInt}[3]		{,#1 \leq #2 \leq #3}
\newcommand{\VecteurLigneCoins}[2]	{ (#1, \ldots, #2) }
\newcommand{\VecteurCoins}[2]{
		\left( \begin{array}{c}
		#1 \\ 
		\vdots \\ 
		#2 \\ 
		\end{array} \right) 
		}
\newcommand{\MatriceCoins}[4]{
		\left( \begin{array}{ccc}
		#1 & \cdots & #2 \\ 
		\vdots & \ddots & \vdots \\ 
		#3 & \cdots & #4 \\ 
		\end{array} \right) 
		}
\newcommand{\MatriceGeneraleEch}[3]{
		\left( \begin{array}{ccc}
		#1_{1,1} & \cdots & #1_{#3,1} \\ 
		\vdots & \ddots & \vdots \\ 
		#1_{1,#2} & \cdots & #1_{#3,#2} \\ 
		\end{array} \right) 
		}
\begin{document}

\title{Generalized Gabidulin codes over  fields of any characteristic}
\author{Daniel Augot \and Pierre Loidreau \and Gwezheneg Robert}

\institute{
	Daniel Augot 
		\at INRIA Saclay-\^Ile-de-France, and École polytechnique, Palaiseau, France 
		\\ \email daniel.augot@inria.fr
\and 
	Pierre Loidreau 
		\at DGA MI and IRMAR, Université de Rennes 1 
		\\ \email pierre.loidreau@univ-rennes1.fr
\and 
	Gwezheneg Robert  
		\at DGA MI
		\\ \email gwezheneg.robert@intradef.gouv.fr
}

\maketitle

\begin{abstract}
  We generalist Gabidulin codes to the case of infinite fields,
  eventually with characteristic zero.  For this purpose, we consider
  an abstract field extension and any automorphism in the Galois
  group.  We derive some conditions on the automorphism to be able to
  have a proper notion of rank metric which is in coherence with
  linearized polynomials.  Under these conditions, we generalize
  Gabidulin codes and provide a decoding algorithm which decode both
  errors and erasures.  Then, we focus on codes over integer rings and
  how to decode them.  We are then faced with the problem of the
  exponential growth of intermediate values, and to circumvent the
  problem, it is natural to propose to do computations modulo a prime
  ideal.  For this, we study the reduction of generalized Gabidulin
  codes over number ideals codes modulo a prime ideal, and show they
  are classical Gabidulin codes.  As a consequence, knowing side
  information on the size of the errors or the message, we can reduce
  the decoding problem over the integer ring to a decoding problem
  over a finite field. We also give examples and timings.
  
  \keywords{Gabidulin codes 
  \and rank metric 
  \and skew polynomials
  \and Ore rings
  \and algebraic decoding
  \and number fields
  }

\end{abstract}

\section{Introduction}

Gabidulin codes and rank metric, introduced in \cite{delsarte1978bilinear} from a combinatorial point of view and in \cite{gabidulin1985theory} from an algorithmic and algebraic point of view, play an important role in coding theory as well as in cryptography.  From a coding theory point of view
they are adapted to correct errors and erasures that occur, either along lines of a matrix as could be the case on chip array storage and magnetic 
tapes \cite{roth1991maximum,Blaum/McEliece:1985}, or  operating on vector spaces as  in network coding~\cite{koetter2008coding}. 
In the field of cryptography, rank metric and Gabidulin codes have been used in the design of code-based public-key cryptosystems, see for instance  \cite{Gabidulin/Paramonov/Tretjakov,faure2006new}.

The goal of this paper is to generalize the construction of rank metric and Gabidulin codes, already well established
for finite fields, to any type of fields, in particular number fields, and to study how the algebraic and 
algorithmic properties are transposed. To do this we make extensive use of Ore theory of $\theta$-polynomials that
are the natural generalization of linearized polynomials, \cite{ore1934contributions,ore1933theory}. 

In {\bf Section 2}, we introduce the ring of $\theta$-polynomials as well as a suitable evaluation operator.
Then we propose several definitions for the rank metric and provide a  framework in which all the definitions are equivalent,
generalizing faithfully the finite field case. In this framework, we define generalized Gabidulin codes as being
evaluation codes of $\theta$-polynomials of bounded degree  on a so-called support consisting of linearly independent elements. 
We show in particular that these codes are also optimal. The reader interested in the applications of $\theta$-polynomials to coding theory 
and the relations between the different types of evaluations can refer to \cite{boucher2007skew,boucher2012linear}.  

In {\bf Section 3} we deal with the decoding problem of the generalized Gabidulin codes. We show that in the case where only 
errors occur, finding an error of rank less than the error-correcting capability can be done first by solving a system of linear equations, secondly by  computing 
a Euclidean division on the left in the ring of $\theta$-polynomials. Concerning  decoding in presence of errors and erasures, we recall two known models of erasures (line erasures and network coding erasures) and show that, using linear algebra and puncturing positions, this amounts to 
decoding errors in another generalized Gabidulin code, which again reduces  to  solving of a linear system. 

An efficient way to solve this linear system, inspired by the so-called Welch-Berlekamp algorithm~\cite{berlekamp1986error,gemmell1992highly}, is presented in {\bf Section 4}. This algorithm requires a quadratic number of arithmetic operations. 
We also give two variants which enable to decrease the practical complexity.

In {\bf Section 5} we address the problem of 
controlling of the size of coefficients in infinite fields, especially in
number fields.  The decoding algorithm involves coefficients whose
size increases exponentially.  Focusing on integral codes, which are
restriction of generalized Gabidulin codes to the ring of integral
elements, we establish conditions to reduce the code modulo a
prime ideal and prove that reduction and decoding are
compatible: decoding the integral code can be done decoding the code modulo the prime ideal.

Finally we present examples in {\bf Section 6}.  The first one shows a
full run of the decoding algorithm, with all the intermediate internal
values.  The second one shows the benefits of computing in the residue
field, by providing several timings.

\section{Generalization of Gabidulin codes}
In this section, we aim to generalist Gabidulin codes to the case of
an algebraic extension $K\hookrightarrow L$ of any field $K$, in
particular infinite.  Given a $K$-automorphism $\theta\in\Auto _K(L)$,
we first define $\theta$-polynomials which are a natural
generalization of $q$-polynomials, described in~\cite{ore1933special}
and which are classically used to design Gabidulin codes in finite
fields.  We present their properties and give proofs  that these properties are
independent of the underlying field, under some hypothesis $\Hdim$.  We give several possible
definitions of the {\em rank metric} and prove that they are
equivalent under the hypothesis $L^{\theta}=K$.  
Then, the hypotheses
$\Hdim$ and $L^{\theta}=k$ are equivalent, and hold for cyclic extensions.  When these hypotheses are valid, we have  a nice framework for studying Gabidulin codes
and give their main properties.

\subsection{Ore rings and $\theta$-polynomials}\label{sssec:thpo}

A $q$-polynomial in a finite field extension
$\F_q \hookrightarrow \F_{q^m}$ is a polynomial of the form
$\sum_i a_i X^{q^i}$. In the literature related to coding theory, they
are also called linearized polynomial~\cite{Berlekamp:1968,lidl1997finite}.  When
the Frobenius automorphism $x \mapsto x^q$ is replaced by an
automorphism $x\mapsto \theta(x)$ of an extension field, these polynomials are called
$\theta$-polynomials.  Originally they were introduced by Ore in the
1930's (see~\cite{ore1933theory} for the general theory
and~\cite{ore1933special} for  $q$-polynomials).  In this
Section, we recall some useful facts and we give proofs independent of
the finiteness or not of the fields.

\begin{definition}[$\theta$-polynomials]
\label{def:thetaPoly}
Let $K \hookrightarrow L$ be a field extension of finite degree
$m=[L:K]$ and let $\theta \in \Auto_{K}(L)$ be a $K$-automorphism.  A
\emph{$\theta$-polynomial} with coefficients in $L$ is an element of
the form
	\[
		\sum_{i \geq 0} a_i \Monome{X}{i}, ~ a_i \in L,
	\] 
	with a finite number of non-zero $a_i$'s.
\end{definition}

\begin{definition}[$\theta$-degree]
\label{def:thtaDegree}
The \emph{degree} of a $\theta$-polynomial
$A(X) = \sum a_i \Monome{X}{i}$ is 
\[
\deg(A)=
\begin{cases}
 - \infty &\text{ if } A=0,\\
\max\{i : a_i \neq 0\}& \text{ if } A\neq 0.
\end{cases}
\]

\end{definition}

\begin{definition}[Ring of $\theta$-polynomials]
	\label{def:RingThetaPoly}
	We denote by $L[X;\theta]$ the \emph{set of
          $\theta$-polynomials}, provided with the following
        operations. Let $A(X) = \sum a_i \Monome{X}{i}$,
        $B(X) = \sum b_i\Monome{X}{i}$ $\in L[X;\theta]$ and $c \in L$:
	\begin{itemize}
        \item the \emph{addition} is defined component-wise:
          $A(X)+B(X) = \sum_i (a_i+b_i) \Monome{X}{i}$;
        \item the \emph{(symbolic) product} is defined by
          $X \cdot c = \theta(c) X$ and $X^i\cdot X^j=X^{i+j}$.
	\end{itemize}
	The \emph{product} of $A(X)$ and $B(X)$  is then given by
        $A(X) \cdot B(X) = \sum_{i,j} a_i \AppliqueTheta{b_j}{i}
        \Monome{X}{i+j}$.
\end{definition}

\begin{remark}
  This product is called the \emph{symbolic product}
  (see~\cite{ore1933special}) in order to make a distinction from the
  classical product in polynomial ring over finite fields.
\end{remark}
Then $L[X;\theta]$ is a non commutative ring, whose unit element is
$\Monome{X}{0}$.  It admits no zero divisors and is a left and right
Euclidean ring~\cite{ore1933theory}.  A major difference with the
polynomial ring $L[X]$ occurs when defining the evaluation of
$\theta$-polynomials on scalars. In the usual case, the evaluation of
$A(X) \in L[X]$ at $b\in L$, is simply defined by ``replace $X$ by
$b$'', which is equivalent to computing the remainder of the Euclidean
division of $A(X)$ by $X-b$.  This equivalence does not hold for
$\theta$-polynomials.

Concerning $\theta$-polynomials, there are  two different types of
evaluations. The first one consists in taking the remainder of the
right Euclidean division of $A(X) \in L[X;\theta]$ by the
$\theta$-polynomial $X-b$, see \cite{boucher2012linear} for instance.
The other type is to consider the evaluation through the use of the
automorphism $\theta$ of the field $L$.  This is the evaluation
that we consider in the sequel of the paper.
\begin{definition}[Evaluation]
	\label{def:OperatorEvaluation}
	Let $A(X) = \sum a_i \Monome{X}{i} \in L[X;\theta]$ and $b\in
        L$.  The \emph{(operator) evaluation} of $A(X)$ at $b$ is
        defined by:
        \[
          \Eval{A}{b} = \sum_i a_i \AppliqueTheta{b}{i} 
        \]
\end{definition}
The properties of this evaluation are as follows.
\begin{proposition}
	\label{prop:OperatorEvaluation}
	Let $K \hookrightarrow L$, $\theta \in \Auto_{K}(L)$ and let
        $A(X)$, $B(X)$ $\in L[X;\theta]$, $a$, $b$ $\in L$ and $\lambda,\mu$
        $\in K$.  Then we have
	\[
        \begin{array}{rcl}
	\Eval{A}{\lambda a + \mu b} &=& \lambda \Eval{A}{a} +\mu \Eval{A}{b}, \\
	\Eval{(AB)}{a} &=& \Eval{A}{\Eval{B}{a}}. 
        \end{array}
	\]
\end{proposition}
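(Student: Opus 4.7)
The plan is to reduce both identities to algebraic properties of $\theta$, namely that $\theta^i$ is a $K$-algebra homomorphism of $L$ for every $i \geq 0$. Write $A(X) = \sum_i a_i X^i$ and $B(X) = \sum_j b_j X^j$ in $L[X;\theta]$, and unfold $\Eval{\cdot}{\cdot}$ term by term using Definition~\ref{def:OperatorEvaluation}.

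For the first identity, I would compute
\[
\Eval{A}{\lambda a + \mu b} = \sum_i a_i\, \theta^i(\lambda a + \mu b).
\]
Because $\theta$ fixes $K$ pointwise (it is a $K$-automorphism), so does each iterate $\theta^i$; therefore $\theta^i(\lambda a + \mu b) = \lambda\,\theta^i(a) + \mu\,\theta^i(b)$, and distributing the sum gives $\lambda \Eval{A}{a} + \mu \Eval{A}{b}$. This step is essentially bookkeeping.

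For the second identity, using the formula for the symbolic product in Definition~\ref{def:RingThetaPoly}, write
\[
AB = \sum_{i,j} a_i\, \theta^i(b_j)\, X^{i+j},
\]
so that
\[
\Eval{AB}{a} = \sum_{i,j} a_i\, \theta^i(b_j)\, \theta^{i+j}(a) = \sum_{i,j} a_i\, \theta^i\bigl(b_j\, \theta^j(a)\bigr),
\]
the second equality using $\theta^{i+j} = \theta^i \circ \theta^j$ together with the multiplicativity of $\theta^i$. Factoring the outer $\theta^i$ over a finite sum in $j$, which is legitimate because $\theta^i$ is additive, yields $\sum_i a_i\, \theta^i(\Eval{B}{a}) = \Eval{A}{\Eval{B}{a}}$.

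The only genuine subtlety is making sure one invokes the correct structural facts about $\theta$ at each point: $K$-linearity for the first assertion, and ring-homomorphism (additivity plus multiplicativity) together with $\theta^{i+j} = \theta^i \circ \theta^j$ for the second. None of the steps depend on whether $L$ is finite, so the proof works uniformly over any field extension of finite degree, as required in the setting of this section.
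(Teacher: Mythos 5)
Your proof is correct and is the standard direct verification; the paper itself states this proposition without proof, treating it as an immediate consequence of Definitions~\ref{def:OperatorEvaluation} and~\ref{def:RingThetaPoly}. Your unwinding—$K$-linearity of each $\theta^i$ for the first identity, and multiplicativity plus $\theta^{i+j}=\theta^i\circ\theta^j$ for the second—supplies exactly the bookkeeping the authors chose to omit.
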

The roots of a $\theta$-polynomial are then naturally defined as follows.

\begin{definition}[Roots and root-space]
	\label{def:ThetaRoots}
	Let $A(X) \in L[X;\theta]$. An element $b \in L$ is a \textit{root} of $A(X)$ if $\Eval{A}{b}=0$.
	We define the \textit{root-space} of $A(X)$ by 
	\[\roots(A(X)) = \{ b \in L : \Eval{A}{b}=0 \}. \]                
\end{definition}
From \textbf{Proposition~\ref{prop:OperatorEvaluation}}, it is obvious
that $\roots(A(X))$ is a $K$-vector space.  For a classical polynomial
in $L[X]$, the number of roots is upper bounded by the degree.  In the
case of $\theta$-polynomials, we are looking for a similar property.
\begin{definition}[Hypothesis $\Hdim$]\label{df:Hdim}
  Let $K \hookrightarrow L$ be a field extension and
  $\theta \in \Auto_{K}(L)$.  We say that $L[X;\theta]$ verifies the
  \emph{hypothesis $\Hdim$} if for all non-zero $\theta$-polynomial
  $A(X)$ we have
	\[ \dim_K  \roots (A(X)) \leq \deg A(X). \]
\end{definition}
We have the following immediate proposition, which paves the way for a Lagrange interpolation.
\begin{proposition}
	\label{crl:hdim}
	Suppose that $L[X;\theta]$ has the property $\Hdim$. Let
        $v_1, \ldots, v_s \in L$ be $K$-linearly independent, and
        $A(X) \in L[X;\theta]$ such that
        $\Eval{A}{v_i}=0, i=1, \ldots, s$.  Then either $A(X)=0$, or
        $\deg A(X) \geq s$.
\end{proposition}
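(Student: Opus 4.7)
The plan is to combine two facts that are already at hand: the fact (from \textbf{Proposition~\ref{prop:OperatorEvaluation}}) that $\roots(A(X))$ is a $K$-vector space, and the bound on $\dim_K \roots(A(X))$ given by the hypothesis $\Hdim$.

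First I would observe that since the evaluation map $b \mapsto \Eval{A}{b}$ is $K$-linear, the set $\roots(A(X))$ is a $K$-subspace of $L$. Because the $v_1, \ldots, v_s$ all lie in $\roots(A(X))$ and are $K$-linearly independent by assumption, we immediately get
\[
\dim_K \roots(A(X)) \geq s.
\]

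Then, assuming $A(X) \neq 0$, I would invoke the hypothesis $\Hdim$ directly: $\deg A(X) \geq \dim_K \roots(A(X)) \geq s$. This gives the dichotomy stated in the proposition. There is no real obstacle here; the statement is essentially a reformulation of $\Hdim$ using the $K$-linearity of evaluation to convert ``many linearly independent roots'' into ``large root-space dimension''. The only thing to be careful about is explicitly noting the $K$-linearity step before applying $\Hdim$, since $\Hdim$ is phrased in terms of $\dim_K \roots(A(X))$ rather than a cardinality of roots.
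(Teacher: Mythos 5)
Your proof is correct and follows exactly the argument the paper has in mind (the paper labels the proposition ``immediate'' and gives no explicit proof): $\roots(A(X))$ is a $K$-subspace by $K$-linearity of evaluation, it contains the $s$ independent $v_i$'s so has dimension at least $s$, and $\Hdim$ then bounds the degree from below when $A \neq 0$.
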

 For a given vector space, there exists a $\theta$-polynomial
vanishing on the vector space.
\begin{theorem}
	\label{thm:degann}
	Let $K \hookrightarrow L$ be a field extension of finite
        degree $m$, and $\theta \in \Auto_{K}(L)$.  Let $V \subset L$ be a
        $K$-linear subspace of dimension $n$.  There exists a unique
        monic polynomial $A(X)\in L[X;\theta]$ which vanishes on all
        $x\in V$. If furthermore $L[X;\theta]$ verifies $\Hdim$, $A(X)$ has degree exactly $n$.
\end{theorem}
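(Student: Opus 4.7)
My strategy is to identify $A(X)$ as the unique monic generator of the left annihilator ideal
\[
I_V = \{B(X) \in L[X;\theta] : \Eval{B}{v} = 0 \text{ for all } v \in V\},
\]
so the assertion amounts to showing (i) $I_V$ is indeed a (nonzero) left ideal, (ii) it contains a nonzero element of degree at most $n$, and (iii) under $\Hdim$ every nonzero element of $I_V$ has degree at least $n$. Recognizing $I_V$ as a left ideal is immediate from Proposition~\ref{prop:OperatorEvaluation}: $K$-linearity of the evaluation gives additive closure, and the composition formula $\Eval{(CB)}{v} = \Eval{C}{\Eval{B}{v}}$ combined with $\Eval{C}{0} = 0$ gives closure under left multiplication by any $C(X) \in L[X;\theta]$. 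Since $L[X;\theta]$ is a left Euclidean ring, every nonzero left ideal admits a unique monic generator, which then provides both the existence and the uniqueness claimed in the theorem.

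The existence of a nonzero element of degree at most $n$ in $I_V$ will come from a dimension count. I would fix a $K$-basis $v_1, \ldots, v_n$ of $V$ and consider the $L$-linear evaluation map
\[
\Phi : L[X;\theta]_{\leq n} \longrightarrow L^n, \qquad B(X) \longmapsto (\Eval{B}{v_1}, \ldots, \Eval{B}{v_n}).
\]
The source has $L$-dimension $n+1$ while the target has $L$-dimension $n$, so $\ker \Phi$ contains a nonzero polynomial, which vanishes on each $v_i$ and therefore, by the $K$-linearity of evaluation, on the entire span $V$. The unique monic generator $A(X)$ of $I_V$ then divides this element on the right, hence has degree at most $n$.

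Finally, under the hypothesis $\Hdim$, Proposition~\ref{crl:hdim} applied to the basis $v_1, \ldots, v_n$ of $V$ forces any nonzero element of $I_V$, and in particular $A(X)$, to have degree at least $n$, yielding $\deg A = n$. I do not expect any real obstacle: the only nontrivial move is the reformulation in terms of the left annihilator ideal and the appeal to the left-principality of $L[X;\theta]$; the existence then reduces to a transparent counting argument, and the sharp lower bound on the degree is essentially the content of $\Hdim$ itself.
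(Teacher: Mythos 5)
Your proof is correct but takes a genuinely different route from the paper's. The paper establishes the existence of a nonzero annihilator by observing that $\Monome{X}{s}-\Monome{X}{0}$ (with $s$ the order of $\theta$) vanishes on all of $L$, proves uniqueness by a minimal-degree contradiction (if $A$ and $B$ are both monic of minimal degree annihilating $V$, then $A-B$ is a lower-degree annihilator, forcing $A=B$), and under $\Hdim$ shows $\deg A=n$ by an explicit recursion $A_i = \bigl(\Monome{X}{1} - \tfrac{\theta(\Eval{A_{i-1}}{v_i})}{\Eval{A_{i-1}}{v_i}}\bigr)\cdot A_{i-1}$, whose well-definedness itself relies on $\Hdim$ to guarantee $\Eval{A_{i-1}}{v_i}\neq 0$. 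You instead frame everything around the left annihilator ideal $I_V$, obtain existence and uniqueness of the monic generator at one stroke from left-Euclideanness and left-principality, and get the bound $\deg A\leq n$ from a clean dimension count on the $L$-linear evaluation map $L[X;\theta]_{\leq n}\to L^n$. A small bonus of your argument is that it gives $\deg A\leq n$ without invoking $\Hdim$ at all, using $\Hdim$ only via Proposition~\ref{crl:hdim} for the matching lower bound; the paper's recursive construction, by contrast, is not merely a proof device but is reused later as the algorithm for computing annihilator (and interpolating) polynomials (Algorithm~\ref{algo:AnnEtInt}), which is why the paper prefers it here. You also correctly read the statement's ``unique monic polynomial'' as ``of minimal degree,'' which is what the paper's proof in fact establishes.
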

\begin{proof}
  We first prove existence and unicity. Since $[L:K]$ is finite,
  $\theta$ has finite order, say $s$, and we have
  $\Eval{(\Monome{X}{s}-\Monome{X}{0})}{x}=0$, for all $ x \in L$. In
  particular, $\Eval{(\Monome{X}{s}-\Monome{X}{0})}{x}=0$, for all
  $x\in V$, and there exists a non zero polynomial vanishing on $V$.
  Let $A(X)$ and $B(X)$ be two non zero monic polynomials both of
  minimal degree and vanishing on $V$.  They have the same degree and
  they are both monic.  Then, for all $ v \in V$,
  $\Eval{(A(X)-B(X))}{v}=0$, but $A(X)-B(X)$ has lower degree than
  $A(X)$ or $B(X)$, which contradicts the minimality of the
  degree. Thus $A(X)=B(X)$.

  Assume that $L[X;\theta]$ verifies $\Hdim$.  By {\bf
    Proposition~\ref{crl:hdim}}, the degree of any non-zero
  $\theta$-polynomial annihilating $V$ is at least $n$. Now we give
  an explicit construction of this polynomial: let
  $(v_1, \cdots, v_s)$ be a basis of $V$ and let the sequence
  $A_i(X) \in L[X; \theta]$, $i=0,\dots,n$, be recursively defined by
    \begin{eqnarray}
      A_0(X) &= &1, \\
      A_i(X) &= &\left( \Monome{X}{1} - \frac{\theta(\Eval{A_{i-1}}{v_i})}{\Eval{A_{i-1}}{v_i}} 
      \Monome{X}{00} \right) \cdot A_{i-1}(X)\label{eq:rec:annihil}.			
    \end{eqnarray}
  By induction it is easy to show that, for $i=1,\ldots,n$, $A_i(X)$
  is monic of degree $i$ and that for all $1 \leq j \leq i$,
  $\Eval{A_{i}}{v_j} = 0$.  This also proves that the quotient in~(\ref{eq:rec:annihil}) above is
  well-defined, since $\Eval{A_{i-1}}{v_i} \neq 0$ by $\Hdim$.
  Therefore, $A(X) = A_n(X)$ is monic of degree $n$ and its 
    root-space is $V$.        \qed\end{proof} 
\begin{definition}[Annihilator polynomial]\label{df:annil}
  Let $K \hookrightarrow L$ be a field extension, and
  $\theta \in \Auto_{K}(L)$.  Let $V \subset L$ be a $K$-linear
  subspace of dimension $n$.  The \emph{annihilator polynomial} of $V$
  is the monic $\theta$-polynomial $\Annul{V}(X) \in L[X; \theta]$ of
  minimal degree such that
  $ \text{for all } v \in V, \Eval{\Annul{V}}{v}=0$.  Given a family
  $\vectgr g=(g_1,\dots,g_n)\in L^n$, we will use the notation
  $\Annul{\vectgr g}(X)$ for the annihilator polynomial $\Annul{V}$,
  where $V$ is the $K$-linear space spanned by $\vectgr g$.
\end{definition}
We can also define interpolating polynomials, and when $\Hdim$ holds, interpolating polynomials have the natural expected
degree.
\begin{theorem}\label{the:int}
  Let $K \hookrightarrow L$ be a field extension of finite degree
  $[L:K] = m$ and $\theta \in \Auto_{K}(L)$ such that $L[X;\theta]$
  verifies $\Hdim$.  Let $\vectgr g=(g_1, \dots, g_n)$ be $K$-linearly
  independent elements in $L$ and $\vectgr y=(y_1, \dots, y_n)\in L^n$.  There
  is a unique $\theta$-polynomial of degree at most $n-1$ such that
	\[ 
		 \Eval{\Int}{g_i} = y_i,~ 1 \leq i \leq n. \label{eq:interp}
	\]
\end{theorem}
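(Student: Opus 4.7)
The plan is to treat uniqueness first, since it is an immediate consequence of Proposition \ref{crl:hdim}, and then handle existence either by a dimension argument or by an explicit recursive construction modeled on the proof of Theorem \ref{thm:degann}.

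For uniqueness, suppose $\Int_1(X)$ and $\Int_2(X)$ are two $\theta$-polynomials of degree at most $n-1$ that both interpolate the data. Their difference $\Int_1(X)-\Int_2(X)$ has degree at most $n-1$ and vanishes at the $K$-linearly independent elements $g_1,\dots,g_n$. By Proposition \ref{crl:hdim}, which relies on $\Hdim$, such a $\theta$-polynomial must be zero, so $\Int_1(X)=\Int_2(X)$.

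For existence, the cleanest argument is a dimension count. From Definition \ref{def:OperatorEvaluation} and the $L$-linearity of $A\mapsto \sum_i a_i\theta^i(b)$ in the coefficients $a_i$, the evaluation map
\[
\varphi\colon \{A(X)\in L[X;\theta] : \deg A\leq n-1\}\longrightarrow L^n,\quad A(X)\mapsto \bigl(\Eval{A}{g_1},\dots,\Eval{A}{g_n}\bigr)
\]
is an $L$-linear map between two $L$-vector spaces of dimension $n$. The uniqueness argument above shows that $\ker\varphi=\{0\}$, hence $\varphi$ is injective, hence surjective, and the preimage of $(y_1,\dots,y_n)$ provides the desired $\Int(X)$.

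Alternatively, one can give a constructive Newton-style proof paralleling equation \eqref{eq:rec:annihil}: set $\Int_0(X)=0$ and, having built $\Int_{i}(X)$ of degree at most $i-1$ interpolating at $g_1,\dots,g_i$, define
\[
\Int_{i+1}(X)=\Int_{i}(X)+\frac{y_{i+1}-\Eval{\Int_{i}}{g_{i+1}}}{\Eval{\Annul{(g_1,\dots,g_i)}}{g_{i+1}}}\cdot \Annul{(g_1,\dots,g_i)}(X).
\]
The denominator is nonzero because $g_{i+1}$ is not in the $K$-span of $g_1,\dots,g_i$ and $\Annul{(g_1,\dots,g_i)}$ has root-space exactly that span by Theorem \ref{thm:degann} under $\Hdim$. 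One checks inductively that $\deg\Int_{i+1}\leq i$ and that $\Eval{\Int_{i+1}}{g_j}=y_j$ for $1\leq j\leq i+1$, so $\Int(X)=\Int_n(X)$ has the required properties. The only subtle point, shared by both approaches, is the use of $\Hdim$ to guarantee both injectivity of $\varphi$ and the non-vanishing of $\Eval{\Annul{(g_1,\dots,g_i)}}{g_{i+1}}$; everything else is formal.
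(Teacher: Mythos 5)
Your proof is correct, and both existence arguments you give differ from the one in the paper. The paper proves existence by exhibiting a Lagrange-style closed formula
\[
\Int(X) = \sum_{i=1}^{n} y_i \,\frac{\Annul{\vectgr{\widehat{g}}_i}(X)}{\Eval{\Annul{\vectgr{\widehat{g}}_i}}{g_i}},
\]
where $\vectgr{\widehat{g}}_i$ is $\vectgr g$ with $g_i$ deleted, then deduces uniqueness exactly as you do. Your primary argument (the dimension count on the $L$-linear evaluation map $\varphi$) is shorter and conceptually cleaner, deriving existence for free from uniqueness plus equal finite dimensions, at the cost of being non-constructive. Your Newton-style recursion is also valid and has the added virtue of being the scheme the paper later actually implements in {\bf Algorithm~\ref{algo:AnnEtInt}}, which builds the annihilator and interpolator simultaneously in $O(n^2)$ operations; the Lagrange formula used in the paper's proof, by contrast, would require computing $n$ separate degree-$(n-1)$ annihilators and so is less efficient as an algorithm, though it makes the dependence on the data $\vectgr y$ explicit. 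You correctly flag the one non-formal ingredient common to all three routes: $\Hdim$ is what makes the kernel trivial and the denominators nonzero (via {\bf Theorem~\ref{thm:degann}}, the root-space of $\Annul{(g_1,\dots,g_i)}$ is exactly the span of $g_1,\dots,g_i$).
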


\begin{proof}
  For $i=1,\dots,n$, let
  $\vectgr {\widehat{g}}_{i}\stackrel{def}{=}(g_1, \ldots, g_{i-1},g_{i+1},
  \ldots, g_n)$, and consider
     \[
	 	\Int(X) \stackrel{def}{=} \sum_{i=1}^{n}{ y_i \frac
		{  		 \Annul{  \vectgr {\widehat{ g}_{i}  }}  (X)   }
		{  \Eval{  \Annul{ \vectgr{ \widehat{ g}_{i} }}  }{ g_i }  }} \ .
	\]
        It is easy to check that $ \Int(X)$ satisfies the conditions
        of {\bf Theorem \ref{the:int}}.  Suppose now that there exists
        $B(X) \in L[X;\theta]$ of degree $n-1$ such that
        $\Eval{B}{g_i} = y_i \ForallInt{1}{i}{n}$.  Then
        $\Int(X) - B(X)$ has degree $\leq n-1$ and
        $\Eval{(\Int-B)}{g_i} = 0$ for $1 \leq i \leq n$.  The
        hypothesis $\Hdim$ implies that that $B(X) = \Int(X)$ and
        we have unicity.  \qed\end{proof}

\begin{definition}[Interpolating polynomial]\label{def:interp}
  The polynomial introduced at {\bf Theorem \ref{the:int}} is called
  the \emph{interpolating $\theta$-polynomial} of $\vectgr y$ at
  $\vectgr g$, and is denoted by $\Interpol{\vectgr g}{\vectgr y}$.
\end{definition}


\subsection{Rank metrics}\label{sssec:rang}

Over finite field extensions, the rank metric was studied by
Delsarte~\cite{delsarte1978bilinear} from a combinatorial point of
view.  Since in this paper our aim is to extend the notion of rank
metric to infinite fields, the definitions introduced here are
compatible with the finite field case.

Given a field $K$, we let $\mathcal{M}_{r,c}(K)$ denote the ring of
matrices with $r$ rows, $c$ columns and coefficients in~$K$.  Given
$ K \hookrightarrow L$ a field extension, and a matrix $M$ with
coefficients in $L$, we write $\rang_L(M)$ for the maximal number of
linearly independent columns over the field $L$, which is the usual
rank of a matrix, and $\rang_K(M)$ for the maximal number of linearly
independent columns over the field $K$.

\begin{definition}[Four rank metrics]\label{def:rang}
  Let $ K \hookrightarrow L$ be a field extension of finite degree
  $[L:K]=m$, and let $\theta \in \Auto_{K}(L)$ be an automorphism of
  order $s$.  Let $\mathcal{B}=\VecteurLigneCoins{b_1}{b_m}$ be a
  $K$-basis of $L$, and for
  $\vectgr{x}=\VecteurLigneCoins{x_1}{x_n} \in L^n$, 
  let $x_{ij}\in K$, $i=1,\dots,n$, $j=1,\dots,m$, be the coordinates of the $x_i$'s in the $K$-basis
  $\mathcal B$ as follows: 
\[
x_i=\sum_{j=1}^mx_{ij}b_j, \quad
  i=1,\dots,n.
\]
Considering
	\[
		\matrice{B_{x,\mathcal B}}=\MatriceGeneraleEch{x}{m}{n} \in \M_{m \times n}(K),
	\]
and 
	\[
          \matrice{V_{x,\theta}}=\MatriceCoins{x_1}{x_n}{\AppliqueTheta{x_1}{s-1}}{\AppliqueTheta{x_n}{s-1}}
          \in \M_{s \times n}(L),
	\]
        we define the four following \emph{weights} for $\vectgr{x}\in L^n$:
\begin{align*}
  \wa(\vectgr{x}) 	&=\deg\Annul{\vectgr x}(X),\\
	\wthL(\vectgr{x}) 	&= \rang_L \matrice{V_{x,{\theta}}},  \\
	\wthK(\vectgr{x}) 	&= \rang_K \matrice{V_{x,{\theta}}},  \\
	\wb(\vectgr{x}) 	&= \rang_K \matrice{B_{x,{\mathcal{B}}}}. 
\end{align*}
\end{definition}
\begin{definition}[Rank distances]
	We endow $L^n$ with the  distances induced by previous weights:
	\[ 
\begin{array}{rcl}
  \da(\vectgr{x},\vectgr{y}) 	&=&\wa(\vectgr x-\vectgr y), \\
  \dthL(\vectgr{x},\vectgr{y}) 	&=&\wthL(\vectgr x -\vectgr y),\\
  \dthK(\vectgr{x},\vectgr{y}) 	&=&\wthK(\vectgr x-\vectgr y ),\\
  \db(\vectgr{x},\vectgr{y}) 	&=&\wb(\vectgr x-\vectgr y).\\
\end{array}
        \]
These distances are called \emph{rank distances}.
\end{definition}
It is clear that the last three  weights  induce  distance over $L^n$. We need a proof for $\da$.
\begin{proposition}
The metric $\da$ is a distance.
\end{proposition}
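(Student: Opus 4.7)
The plan is to verify the four axioms of a distance in turn. Non-negativity, separation, and symmetry are essentially formal. Non-negativity is immediate since $\Annul{\vectgr z}$ is always a nonzero monic polynomial, hence of non-negative degree. Separation follows from the observation that the constant polynomial $1$ is the annihilator of $\{0\}$ (since $\Eval{1}{v}=v$), so $\wa(\vectgr x-\vectgr y)=0$ forces $\Annul{\vectgr x-\vectgr y}=1$, hence $\text{span}_K(\vectgr x-\vectgr y)\subseteq\roots(1)=\{0\}$, i.e.\ $\vectgr x=\vectgr y$. Symmetry is clear because multiplying by $-1\in K$ preserves the $K$-linear span of the coordinates, and therefore the annihilator polynomial itself.

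The real work is the triangle inequality $\wa(\vectgr x-\vectgr z)\le \wa(\vectgr x-\vectgr y)+\wa(\vectgr y-\vectgr z)$. Setting $V_1=\text{span}_K(\vectgr x-\vectgr y)$ and $V_2=\text{span}_K(\vectgr y-\vectgr z)$, each coordinate of $\vectgr x-\vectgr z$ is the sum of the corresponding coordinates of $\vectgr x-\vectgr y$ and $\vectgr y-\vectgr z$, so $\text{span}_K(\vectgr x-\vectgr z)\subseteq V_1+V_2$. In particular $\Annul{V_1+V_2}$ annihilates $\text{span}_K(\vectgr x-\vectgr z)$, and right Euclidean division in $L[X;\theta]$, combined with the minimality in Definition~\ref{df:annil}, shows that $\Annul{\vectgr x-\vectgr z}$ is a right divisor of $\Annul{V_1+V_2}$. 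It therefore suffices to establish $\deg\Annul{V_1+V_2}\le \deg\Annul{V_1}+\deg\Annul{V_2}$.

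For this last bound I would argue as follows. Under the standing hypothesis $\Hdim$, Theorem~\ref{thm:degann} gives $\deg\Annul{W}=\dim_K W$ for every $K$-subspace $W$ of $L$, so the desired inequality is just the familiar $\dim_K(V_1+V_2)\le\dim_K V_1+\dim_K V_2$. Alternatively, without invoking $\Hdim$, one can take $m$ to be the left least common multiple of $\Annul{V_1}$ and $\Annul{V_2}$ in the Ore ring $L[X;\theta]$, whose existence and degree bound $\deg m\le\deg\Annul{V_1}+\deg\Annul{V_2}$ are standard in the left and right Euclidean ring $L[X;\theta]$; writing $m=Q_i\cdot\Annul{V_i}$ and invoking the composition rule in Proposition~\ref{prop:OperatorEvaluation} shows that $m$ vanishes on $V_1\cup V_2$, hence on $V_1+V_2$. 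The main subtlety I anticipate is precisely that the naive product $\Annul{V_1}\cdot\Annul{V_2}$ does \emph{not} work on its own: it annihilates $V_2$ but there is no general reason for it to kill $V_1$, which is exactly what forces one to route the argument either through $\Hdim$ or through the Ore-theoretic LCM.
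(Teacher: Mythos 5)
Your proof is correct, but it takes a genuinely different route for the triangle inequality, which is the only nontrivial axiom. The paper works with the $L$-linear span of the component-wise iterates $\theta^j(\vectgr x_1)$ and $\theta^j(\vectgr x_2)$ inside $L^n$: using right Euclidean division of $X^i$ by $\Annul{\vectgr x_1}$ (resp.\ $\Annul{\vectgr x_2}$), it shows that every iterate $\theta^i(\vectgr x_1 + \vectgr x_2)$ lies in an $L$-space of dimension at most $w_1+w_2$, so a nonzero $L$-linear dependency among $\vectgr x_1+\vectgr x_2, \theta(\vectgr x_1+\vectgr x_2), \dots, \theta^{w_1+w_2}(\vectgr x_1+\vectgr x_2)$ produces a nonzero $\theta$-polynomial of degree at most $w_1+w_2$ annihilating $\vectgr x_1+\vectgr x_2$; minimality of the annihilator finishes. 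You instead work with the $K$-subspaces $V_1, V_2 \subset L$ spanned by the coordinates, reduce to bounding $\deg\Annul{V_1+V_2}$, and then offer two closings: under $\Hdim$, via $\deg\Annul{W}=\dim_K W$ and subadditivity of dimension; or, unconditionally, via the left LCM in $L[X;\theta]$ together with the Ore degree bound $\deg\mathrm{llcm}(a,b)\le\deg a+\deg b$ and the composition rule for evaluation. Your second closing matches the paper's hypotheses (no $\Hdim$ assumed at this point), whereas the first, though cleaner, is conditional; the paper's argument is more self-contained in that it only uses right Euclidean division and a rank count, while yours imports the standard Ore llcm machinery. Your remark that the naive product $\Annul{V_1}\cdot\Annul{V_2}$ only kills $V_2$ is exactly the right observation and explains why either the iterate-span trick or the llcm detour is needed.
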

\begin{proof}
  We prove the associated relevant statements for the weight function $\wa$. Let
  $\vectgr x\in L^n$ such that $\wa(\vectgr x)=0$. This means that the
  annihilator polynomial of $\vectgr x$ has degree 0, and, being monic, it is the polynomial $A(X)=1\in L[X,\theta]$. Now, for
  $i=1,\dots,n$, we have $A\{x_i\}=1\{x_i\}=x_i=0$. Thus
  $\wa(\vectgr x)=0$ implies $\vectgr x=0$. Now we have easily that
  $\wa(-\vectgr x)=w(\vectgr x)$, since, for any $A(X)\in L[X;\theta]$
  which vanishes on $\vectgr x$, we have $A\{-x_i\}=-A\{x_i\}=0$,
  $i=1,\dots,n$, and the annihilator polynomials of $\vectgr x$ and
  $-\vectgr x$ are the same.

  We finally have to prove that, given
  $\vectgr x_1,\vectgr x_2\in L^n$,
  $\wa(\vectgr x_1+\vectgr x_2)\leq \wa(\vectgr x_1)+\wa(\vectgr x_2)$. 
  Let $\wa(\vectgr x_1)=w_1$ and $\wa(\vectgr x_2)=w_2$, and
  consider the following elements in $L^n$, where addition and $\theta$ are applied component-wise:
  \begin{align*} 
    \vectgr y_0&= \vectgr x_1 + \vectgr x_2,\\ 
    \vectgr    y_1&=\theta( \vectgr x_1 + \vectgr x_2)=\theta(\vectgr x_1)+\theta(\vectgr x_2),\\ 
               &\vdots\\ 
    \vectgr    y_{w_1+w_2}&= \theta^{w_1+w_2}(\vectgr x_1 + \vectgr x_2)
                          =\theta(\vectgr x_1)^{w_1+w_2}+\theta(\vectgr x_2)^{w_1+w_2}.
  \end{align*} Now using that $\theta^i(\vectgr x_1+\vectgr
x_2)=\theta^i(\vectgr x_1)+\theta^i(\vectgr x_2)$, $i=1\dots,w_1+w_2$,
and considering the rest of the right division of $X^i $ by
$\Annul{\vectgr x_1}(X)$ and by
$\Annul{\vectgr x_2}(X)$, we see that $\theta^i(\vectgr
x_1)+\theta^i(\vectgr x_2)$ belong the $L$-vector space
  \[ V_{\vectgr x_1,\vectgr x_2}=\Vect_L\left\langle\vectgr x_1,\theta
(\vectgr x_1),\dots,\theta(\vectgr x_1)^{w_1-1},\vectgr
x_2,\theta(\vectgr x_2),\dots,\theta^{w_2-1}(\vectgr
x_2)\right\rangle.
  \] Now, since $\dim_LV_{\vectgr x_1,\vectgr x_2}\leq w_1+w_2$, there
exists an $L$-linear non zero dependency between the $w_1+w_2+1$
vectors $\vectgr y_0,\dots,\vectgr y_{w_1+w_2}$. Then, by {\bf Proposition~\ref{crl:hdim}}, 
the annihilator polynomial of
  $\vectgr x_1+\vectgr x_2$ has degree less than or
  equal to $w_1+w_2$.\qed\end{proof}
The natural question about the relationship between the different induced metrics is answered in the following theorems.
\begin{theorem}\label{thm:equivrangL}
	For all  $\vectgr{x} \in L^n$,
$
          \wa(\vectgr{x}) = \wthL(\vectgr{x})$.
\end{theorem}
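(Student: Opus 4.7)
The plan is to set $d := \wa(\vectgr x) = \deg \Annul{\vectgr x}(X)$ and establish the two inequalities $\wthL(\vectgr x) \leq d$ and $\wthL(\vectgr x) \geq d$ separately. Both directions rely on matching the annihilator relation of $\vectgr x$ to the row-dependency structure of the matrix $\matrice{V_{x,\theta}}$.

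For the upper bound $\wthL(\vectgr x) \leq d$, I would write $\Annul{\vectgr x}(X) = \Monome{X}{d} + \sum_{i=0}^{d-1} a_i \Monome{X}{i}$ with $a_i \in L$, and note that the defining identities $\Eval{\Annul{\vectgr x}}{x_j} = 0$ read componentwise as
\[
\theta^d(x_j) \;=\; -\sum_{i=0}^{d-1} a_i\, \theta^i(x_j), \qquad j=1,\dots,n.
\]
This says that row $d$ of $\matrice{V_{x,\theta}}$ lies in the $L$-span of rows $0,\dots,d-1$. Applying $\theta$ componentwise to this identity (which is legitimate since $\theta$ acts coordinatewise on $L^n$) expresses row $d+1$ as an $L$-combination of rows $1,\dots,d$, hence, after substituting for row $d$, of rows $0,\dots,d-1$. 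A straightforward induction then shows every row of index $\geq d$ lies in this span, so $\rang_L \matrice{V_{x,\theta}} \leq d$.

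For the lower bound $\wthL(\vectgr x) \geq d$, I would argue by contradiction. Assume $\rang_L \matrice{V_{x,\theta}} \leq d-1$; then the $d$ rows of indices $0,1,\dots,d-1$ must be $L$-linearly dependent, yielding scalars $c_0,\dots,c_{d-1} \in L$, not all zero, with $\sum_{i=0}^{d-1} c_i \theta^i(x_j) = 0$ for every $j$. Setting $B(X) := \sum_{i=0}^{d-1} c_i \Monome{X}{i} \in L[X;\theta]$, we obtain $\Eval{B}{x_j}=0$ for all $j$, and by the $K$-linearity of evaluation (\textbf{Proposition~\ref{prop:OperatorEvaluation}}), $B$ vanishes on the entire $K$-span $V = \Vect_K(x_1,\dots,x_n)$. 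Dividing $B$ by its leading coefficient produces a \emph{monic} $\theta$-polynomial of degree strictly less than $d$ annihilating $V$, contradicting the minimality built into \textbf{Definition~\ref{df:annil}}.

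The only subtle point is the first step: one must recognize that an $L$-linear relation among the rows of $\matrice{V_{x,\theta}}$ is preserved under componentwise application of $\theta$ (with each coefficient replaced by its image under $\theta$), which is what powers the induction propagating the relation from row $d$ to all higher rows. Observe also that the argument nowhere invokes $\Hdim$, which matches the hypotheses of the statement.
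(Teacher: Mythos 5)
Your proof is correct and follows essentially the same approach as the paper: both directions translate between row-dependency relations in $\matrice{V_{x,\theta}}$ and annihilating $\theta$-polynomials, using the structural fact that $\theta$, applied componentwise with the coefficients also twisted by $\theta$, sends row $i$ to row $i+1$ and thereby propagates a dependency upward. The only stylistic difference is in the lower bound, where you argue by contradiction (low rank forces rows $0,\dots,d-1$ to be $L$-dependent, yielding an annihilator of degree strictly less than $d$), whereas the paper constructs the monic degree-$w$ annihilator directly after first arguing that the top $w$ rows span the row space; both rest on the same correspondence and neither invokes $\Hdim$.
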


\begin{proof} 
  The case of $\vectgr x=0$ being trivial, let us consider
  $\vectgr{x}=\VecteurLigneCoins{x_1}{x_n} \neq 0 \in L^n$.  We first
  show that $\wthL(\vectgr{x}) \leq \wa(\vectgr{x})$.  Let
  $w=\wa(\vectgr{x})$.  Let
  $\Ann_{\vectgr{x}}(X) = \sum_{i=0}^{w} a_i \Monome{X}{i}$ be the
  monic annihilator polynomial of $\vectgr x$.  Then, let $L_i$,
  $i=0,\ldots,s-1$ be the $i+1$-th row of $\matrice{V_{x,\theta}}$. We
  have $L_{w}=-\sum_{i=0}^{w-1}{a_{i-1} L_{i}}$ and the $(w+1)$-th row
  is a linear combination of the previous ones.  Then, applying
  recursively $\theta$ on this relation, we can express any row of
  index larger than or equal to $w+1$ as a linear combination of the
  $w$ first rows.  Thus, $\matrice{V_{x,{\theta}}}$ has at most
  $\wa(\vectgr{x})$ linearly $L$-independent rows which is the same as
  the number of $L$-independent columns. Thus $ \wthL(\vectgr{x})\leq \wa(\vectgr x)$.

  Conversely, let $w=\wthL(\vectgr{x})$. Let $1 \le u \le w $ be the
  smallest index such that $L_1,\ldots,L_u$ are $K$-Linearly
  independent. Since, by construction of $\matrice{V_{x,\theta}}$,
  $L_{i+1} = \theta(L_i)$ where $\theta$ acts on all the components of
  the row $L_i$, any row $L_i$, $i \ge u+1$, is a linear combination
  of the rows $L_1,\ldots,L_u$. Therefore, $u=w$, and there is a
  linear combination such that
  $L_{w} + \sum_{i=0}^{w-1}{\mu_i L_i}=0$, and the polynomial
  $X^w + \sum_{i=0}^{w-1}{\mu_iX^i}$ annihilates $\vectgr{x}$. Hence
  $ \wa(\vectgr{x}) \le w=\wthL(\vectgr{x})$.  \qed\end{proof} For the
next inequalities, we have to compare the number of linearly
independent columns of two matrices.  For that, we will consider a
linear combination of columns of the first one, and show that it gives
a linear combination in the second matrix.  Since the order of the
columns, reflecting the ordering of the $x_i$'s in $\vectgr x$, does
not matter, w.l.o.g.\ we can order them such that the $r$ first
columns are linearly independent, where $r$ denotes the rank of the
matrix.
\begin{theorem}\label{thm:equivrangK}
	For all  $\vectgr{x} \in L^n$, $\wthK(\vectgr{x}) = \wb(\vectgr{x})$.
\end{theorem}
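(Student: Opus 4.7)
The plan is to show that both quantities coincide with the $K$-dimension of the subspace $\langle x_1,\dots,x_n\rangle_K \subset L$, which I will denote by $d$. From this common value, the desired equality $\wthK(\vectgr x)=\wb(\vectgr x)$ follows immediately. The whole argument is a rank-nullity computation combined with the fact that $\theta$ fixes $K$ pointwise.

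First I would treat $\wb$. By definition of $\matrice{B_{x,\mathcal B}}$, its $i$-th column is the coordinate vector of $x_i$ in the basis $\mathcal B$. The $K$-linear isomorphism $K^m \to L$ sending a coordinate vector to the corresponding element of $L$ carries the column span of $\matrice{B_{x,\mathcal B}}$ onto $\langle x_1,\dots,x_n\rangle_K$. Hence $\wb(\vectgr x)=\dim_K\langle x_1,\dots,x_n\rangle_K=d$.

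Next I would treat $\wthK$ by identifying the $K$-kernel of the column map $\varphi:K^n\to L^s$, $\vectgr\lambda=(\lambda_1,\dots,\lambda_n)\mapsto \matrice{V_{x,\theta}}\vectgr\lambda$. By looking at the first row, any $\vectgr\lambda\in\ker\varphi$ satisfies $\sum_i\lambda_ix_i=0$. Conversely, if $\sum_i\lambda_ix_i=0$ with $\lambda_i\in K$, then, because $\theta$ is a $K$-automorphism, $\lambda_i=\theta^j(\lambda_i)$ for every $j$, so
\[
\sum_{i=1}^n \lambda_i\,\AppliqueTheta{x_i}{j}
=\theta^j\Bigl(\sum_{i=1}^n\lambda_ix_i\Bigr)=0,\quad j=0,\dots,s-1,
\]
showing $\vectgr\lambda\in\ker\varphi$. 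Therefore $\ker\varphi$ coincides with the $K$-kernel of the map $K^n\to L$, $\vectgr\lambda\mapsto \sum_i\lambda_ix_i$, whose image is $\langle x_1,\dots,x_n\rangle_K$. By rank-nullity, $\wthK(\vectgr x)=\rang_K\matrice{V_{x,\theta}}=n-\dim_K\ker\varphi=d$.

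Combining the two computations gives $\wthK(\vectgr x)=d=\wb(\vectgr x)$. The only substantive point—there is no genuine obstacle—is the use of $\theta|_K=\mathrm{id}_K$ to commute the scalars $\lambda_i$ past the iterates of $\theta$; this is exactly what forces the $K$-structure on both matrices to match.
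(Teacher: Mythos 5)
Your proof is correct and rests on the same two observations the paper uses: reading the first row of $\matrice{V_{x,\theta}}$ to obtain a relation $\sum_i\lambda_i x_i=0$, and using $\theta|_K=\mathrm{id}_K$ to propagate such a relation to every row. You have merely repackaged the paper's two-inequality argument (which transfers column dependencies between $\matrice{B_{x,\mathcal B}}$ and $\matrice{V_{x,\theta}}$) into a single statement, that both ranks equal $\dim_K\langle x_1,\dots,x_n\rangle_K$, via kernels and rank--nullity; this is a cleaner presentation of the same idea, and has the small bonus of identifying the common value explicitly.
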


\begin{proof}
  The case of $0$ being trivial, let us consider
  $\vectgr{x}=\VecteurLigneCoins{x_1}{x_n} \neq 0 \in L^n$.  We show
  that $\wthK(\vectgr x) \leq \wb(\vectgr x)$.  Let
  $w=\wb(\vectgr{x})=\rang_K (\matrice{B_{x,\mathcal{B}}})$, and
  w.l.o.g.\ assume that the first $w$ columns of $\matrice{B_{x,\mathcal{B}}}$ are linearly
  independent. Consider the $u$-th column expressed as a linear
  combination  of the first $w$  columns of $\matrice{B_{\vectgr x,\mathcal{B}}}$:
	\[
        \VecteurCoins{x_{u,1}}{x_{u,m}} = \lambda_1 \VecteurCoins{x_{1,1}}{x_{1,m}} + \cdots + \lambda_w \VecteurCoins{x_{w,1}}{x_{w,m}}, 
        \] 
        where $\lambda_i \in K$ for $i=1,\ldots, w$. For $i=1,\dots,n$, writing
        \[
          x_i=\sum_{j=1}^m x_{ij} b_j,
        \]
        we have
	\[
        x_u = \lambda_1 x_1 + \cdots + \lambda_w x_w. 
        \]
	By applying the $K$-automorphism $\theta^k$, for $k=1,\dots,s$,
        we get
	\[
        \AppliqueTheta{x_u}{k} = \lambda_1 \AppliqueTheta{x_1}{k} + \cdots + \lambda_w \AppliqueTheta{x_w}{k}.
        \]
	Hence, 
	\[ 
		\VecteurCoins{x_u}{\AppliqueTheta{x_u}{s-1}} = 
		\lambda_1 \VecteurCoins{x_1}{\AppliqueTheta{x_1}{s-1}} + \cdots 
		+ \lambda_w \VecteurCoins{x_w}{\AppliqueTheta{x_w}{s-1}}, 
	\]
	and a linear combination in $\matrice{B_{x,{\mathcal{B}}}}$
        implies a linear combination in $\matrice{V_{x,{\theta}}}$, thus
        the $K$-rank of $\matrice{V_{x,{\theta}}}$ is at most the $K$-rank of
         $\matrice{B_{x,{\mathcal{B}}}}$.

         Now, we show that $\wb(\vectgr{x}) \leq \wthK(x)$.  Let
         $w=\wthK(\vectgr{x})=\rang_K(\matrice{V_{x,{\theta}}})$.
         Consider the expression of the $u$-th column as a $K$-linear
         combination of the $w$ first columns of
         $\matrice{V_{x,\theta}}$:
	\[
        \VecteurCoins{x_u}{\AppliqueTheta{x_u}{s-1}} = \lambda_1 \VecteurCoins{x_1}{\AppliqueTheta{x_1}{s-1}} + \cdots + \lambda_w \VecteurCoins{x_w}{\AppliqueTheta{x_w}{s-1}} ,
        \]
	where $\lambda_i \in K$, for all $i=1,\ldots,w$.  Then, by
        considering the first row, we have
        $ x_u = \lambda_1 x_1 + \cdots + \lambda_w x_w$, which can be
        expanded in the basis $\mathcal{B}$ as follows:
	\[
        x_{u,j} = \sum_{i=1}^w \lambda_i x_{i,j}, 1 \leq j \leq m .
        \]
	Thus,
	\[ \VecteurCoins{x_{u,1}}{x_{u,m}} = \lambda_1 \VecteurCoins{x_{1,1}}{x_{1,m}} + \cdots + \lambda_w \VecteurCoins{x_{w,1}}{x_{w,m}} 
         \]
	so $\wb(\vectgr{x}) = \rang_K (X_{\mathcal{B}}) \leq w$.
\qed\end{proof}
\begin{proposition} We have the following inequality, for all $\vectgr x\in L^n$:
\[
 \wa(\vectgr{x}) = \wthL(\vectgr{x}) \leq  \wthK(\vectgr{x}) = \wb(\vectgr{x}).
\]
\end{proposition}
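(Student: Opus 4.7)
The plan is straightforward: the two equalities in the chain
$\wa(\vectgr{x}) = \wthL(\vectgr{x}) \leq \wthK(\vectgr{x}) = \wb(\vectgr{x})$
are respectively the content of \textbf{Theorem~\ref{thm:equivrangL}} and \textbf{Theorem~\ref{thm:equivrangK}}, so nothing remains to be done on those two points and they can simply be invoked. The only real work is to establish the middle inequality $\wthL(\vectgr{x}) \leq \wthK(\vectgr{x})$, which compares two ranks of the \emph{same} matrix $\matrice{V_{x,\theta}}$, taken first over $L$ and then over $K$.

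I would reduce this to the following general, elementary fact about ranks under a field extension: for any matrix $M$ with entries in $L$, where $K \hookrightarrow L$, one has $\rang_L(M) \leq \rang_K(M)$. The argument is that a $K$-linear relation among a family of columns is, in particular, an $L$-linear relation, since $K \subset L$; equivalently, any family of columns that is $L$-linearly independent is automatically $K$-linearly independent. Taking the maximum cardinality of such a family yields the inequality. Applied to $M = \matrice{V_{x,\theta}}$, this immediately gives $\wthL(\vectgr{x}) \leq \wthK(\vectgr{x})$, and chaining with the two equalities just recalled closes the proof.

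The only point to be careful about is the direction of the inequality: extending scalars from $K$ to $L$ can only create more linear relations among a fixed family of vectors (it can never destroy one), so the rank can only decrease, which is why the $L$-rank sits on the small side of $\leq$ and the $K$-rank on the large side. Beyond making this direction explicit, no serious obstacle is expected.
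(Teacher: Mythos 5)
Your proposal is correct and follows essentially the same route as the paper: the two equalities are quoted from Theorems~\ref{thm:equivrangL} and \ref{thm:equivrangK}, and the middle inequality is obtained exactly as the paper does, from the observation that a $K$-linear dependency among columns of $\matrice{V_{x,\theta}}$ is in particular an $L$-linear dependency, so $\rang_L \le \rang_K$.
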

\begin{proof}
	A linear combination with coefficients in $K$ is also a linear combination with coefficients in $L$, so $\wthL(\vectgr{x}) \leq \wthK(\vectgr{x})$.

\qed\end{proof}
Now we introduce a condition for all these weights to be equal.

\begin{proposition}\label{prop:egaliterang}
	If  $L^{\theta} = K$, where $L^{\theta}=\{ x \in L : \theta(x)=x \}$ is the fixed field of $\theta$, then, for all $     \vectgr{x} \in L^n$,
\[
\wa(\vectgr{x})  = \wthL(\vectgr{x}) = 	\wthK(\vectgr{x})  =  \wb(\vectgr{x}).
\]
\end{proposition}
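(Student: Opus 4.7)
The plan is to complete the chain of inequalities already established,
\[
\wa(\vectgr{x}) = \wthL(\vectgr{x}) \leq \wthK(\vectgr{x}) = \wb(\vectgr{x}),
\]
by proving the only missing inequality, $\wthK(\vectgr{x}) \leq \wthL(\vectgr{x})$, under the additional hypothesis $L^\theta = K$. All four weights will then coincide.

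Set $r = \wthL(\vectgr{x}) = \rang_L \matrice{V_{x,\theta}}$ and, after reordering the $x_i$'s if necessary, assume the first $r$ columns $c_1,\ldots,c_r$ of $\matrice{V_{x,\theta}}$ form an $L$-basis of its column span. For each other column $c_u$, write
\[
c_u = \sum_{i=1}^r \mu_i\, c_i, \qquad \mu_i \in L,
\]
and aim to show every $\mu_i$ lies in $K = L^\theta$. Once this is done, the very same relation exhibits $c_u$ as a $K$-linear combination of $c_1,\ldots,c_r$, forcing $\wthK(\vectgr{x}) \leq r$ and closing the chain.

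The key observation is that applying $\theta$ componentwise to vectors of $L^s$ acts on each column $c_i = (x_i, \theta(x_i), \ldots, \theta^{s-1}(x_i))^T$ exactly as the cyclic shift $\sigma$, because $\theta^s = \mathrm{id}$. Since this componentwise $\theta$-action is $\theta$-semilinear, applying it to the relation above yields $\sigma(c_u) = \sum \theta(\mu_i)\, \sigma(c_i)$, while applying the $L$-linear map $\sigma$ directly to the same relation gives $\sigma(c_u) = \sum \mu_i\, \sigma(c_i)$. Subtracting produces
\[
\sum_{i=1}^r \bigl(\theta(\mu_i) - \mu_i\bigr)\, \sigma(c_i) = 0.
\]
Because $\sigma$ is an $L$-linear automorphism of $L^s$, it preserves $L$-linear independence, so $\sigma(c_1),\ldots,\sigma(c_r)$ remain $L$-independent; hence $\theta(\mu_i) = \mu_i$ for every $i$, and the hypothesis $L^\theta = K$ forces $\mu_i \in K$, as required.

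The main subtlety is to distinguish the two natural actions on $L^s$: the componentwise application of $\theta$, which is only $\theta$-semilinear, and the cyclic shift $\sigma$, which is genuinely $L$-linear. They happen to coincide on the columns of $\matrice{V_{x,\theta}}$, and this coincidence is precisely what lets one descend from $L$-coefficients to $K$-coefficients via the fixed-field condition; everything else is routine linear algebra combined with the already proven equalities $\wa = \wthL$ and $\wthK = \wb$.
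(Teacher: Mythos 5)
Your proof is correct and follows essentially the same route as the paper's: express the remaining columns of $\matrice{V_{x,\theta}}$ in terms of the chosen $L$-basis of the column space, apply $\theta$ componentwise, use $\theta^s = \mathrm{id}$ to see the result as a cyclic row-shift of the original columns, and compare coefficients to conclude $\theta(\mu_i) = \mu_i \in L^\theta = K$. The only cosmetic difference is bookkeeping: where the paper invokes uniqueness of the decomposition in the basis $c_1,\ldots,c_r$, you subtract the two expressions for $\sigma(c_u)$ and use $L$-independence of $\sigma(c_1),\ldots,\sigma(c_r)$, which amounts to the same thing; your explicit distinction between the $\theta$-semilinear componentwise action and the $L$-linear shift $\sigma$ is a slightly crisper way of saying what the paper does implicitly.
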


\begin{proof}
  From previous Theorems, it suffices to show that for all
  $\vectgr{x}=\VecteurLigneCoins{x_1}{x_n} \in L^n$,
  $\wthK(\vectgr{x}) \leq \wthL(\vectgr{x})$.  Let
  $w=\wthL(\vectgr{x})=\rang_L(\matrice{V_{x,{\theta}}})$.  Consider the
  expression of the $u$-th column as an $L$-linear combination of
  of the $w$ first columns of $\matrice{V_{x,\theta}}$ which form a
  basis of the column space of  $\matrice{V_{x,\theta}}$:
\begin{equation}\label{Eq:LinCombi}
\VecteurCoins{x_n}{\AppliqueTheta{x_n}{s-1}} = \mu_1 \VecteurCoins{x_1}{\AppliqueTheta{x_1}{s-1}} + \cdots + \mu_w \VecteurCoins{x_w}{\AppliqueTheta{x_w}{s-1}}, 
	\end{equation}
	with $\mu_i \in L$, $i=1,\dots,w$. By applying $\theta$, we obtain 
	\[ 
          \VecteurCoins{\AppliqueTheta{x_u}{1}}{\AppliqueTheta{x_u}{s}} 
          = \AppliqueTheta{\mu_1}{} \VecteurCoins{\AppliqueTheta{x_1}{1}}{\AppliqueTheta{x_1}{s}} 
          + \cdots 
          + \AppliqueTheta{\mu_w}{} \VecteurCoins{\AppliqueTheta{x_w}{1}}{\AppliqueTheta{x_w}{s}}.  	
	\]
	Since $s$ is the order of $\theta$, i.e.\ $\theta^s=Id$, by
        reordering the lines, we obtain
	\[
		\VecteurCoins{x_n}{\AppliqueTheta{x_n}{s-1}} 
		= \AppliqueTheta{\mu_1}{} \VecteurCoins{x_1}{\AppliqueTheta{x_1}{s-1}} 
		+ \cdots 
		+  \AppliqueTheta{\mu_w}{} \VecteurCoins{x_w}{\AppliqueTheta{x_w}{s-1}}. 
	\] 
	Since the first $w$ columns form a basis of the column space,
        the decomposition (\ref{Eq:LinCombi}) is unique. Therefore,
        $\mu_i=\theta(\mu_i)$, i.e. $\mu _i \in L^{\theta} =
        K$, for 
        $i=1,\ldots,n$. This implies that
        $\wthK(\vectgr{x})= \rang_K(\matrice{V_{x,{\theta}}}) \leq
        \wthL(\vectgr{x}) = \rang_L(\matrice{V_{x,{\theta}}})$.
        \qed\end{proof}
When $L^{\theta}=K$, the four metrics previously defined are identical. 
This metric is called \textit{the} rank metric, and denoted by $w(\vectgr{x})$.


\subsection{A framework for  studying  Gabidulin codes}\label{sssec:exfr}

From previous Sections, we have seen that when $L[X;\theta]$ verifies
$\Hdim$, the dimension of the root-space of a $\theta$-polynomial
$A(X)$ is at most equal to the degree of $A(X)$. Also, when
$L^{\theta}$, the field fixed by $\theta$, is $K$, we have that all
the defined metrics are equal. We first present two examples, of a bad
situation and of a nice situation.

\begin{example}
\label{Ex:KulExt1}
Let  $K = \Q[j] = \Q[X]/(X^2+X+1)$, which  is a field, which contains 
the sixth roots of the unity $\{1, j, j^2, -1, -j, -j^2 \}$ where $j^3=1$. 
We build an extension $L$ of $K$ by adding a sixth root of $2$, denoted by $\alpha$, and $\alpha^6=2$.
Since square and cubic roots of $2$ are not in $K$, we get the following  \emph{Kummer extension}~\cite[§IV.3]{neukirch1999algebraic}:
\[
 L = K[\alpha] = K[Y]/(Y^6-2). 
\]
Hence $[L:K] = 6$ and any $K$-automorphism of $L$ is uniquely defined
by the image of $\alpha$, which must be a root of $Y^6-2$.  Let us
consider $\theta_1 : \alpha \mapsto j \alpha$, which  has order
$3$: $\theta_1^3=Id_L$.
Then $\Hdim$ is not verified, as can be seen by considering the
$\theta_1$-polynomial $A=\Monome{X}{1}-\Monome{X}{0}$.
One can check that $\roots(\Monome{X}{1}-\Monome{X}{0})$ contains $1$ and $\alpha^3$,
therefore has $K$-dimension $2$, which is twice the degree of $A$.
We also have that the fixed field  $L^{\theta_1}$ is spanned by $1$ and $\alpha^3$. Therefore  $K \ne L^{\theta_1}$. The rank metrics are clearly different:   
\[
\wa(1, \alpha, \alpha^3, \alpha^4)=2 ~< ~\wb(1, \alpha, \alpha^3, \alpha^4)=4. 
\]
Indeed, we have that
$j\Monome{X}{00}-(j+1)\Monome{X}{1}+\Monome{X}{2}$ vanishes on
$(1, \alpha, \alpha^3, \alpha^4)$, and thus
\[ \wa(1, \alpha, \alpha^3, \alpha^4)=\Deg \left(j\Monome{X}{00}-(j+1)\Monome{X}{1}+\Monome{X}{2} \right) =2; \]
while
\[ \wb(1, \alpha, \alpha^3, \alpha^4)=\rang \begin{pmatrix}
												1&0&0&0 \\ 0&1&0&0\\ 0&0&0&0\\ 0&0&1&0\\0&0&0&1\\0&0&0&0
											\end{pmatrix} = 4.\]


\end{example}

\begin{example} We consider the field extension of
  Example~\ref{Ex:KulExt1}, and now consider
  $\theta_2 : \alpha \mapsto (j+1) \alpha$.  The automorphism
  $\theta_2$ has order $6$.  We have that
  $\roots(\Monome{X}{1}-\Monome{X}{0})$ has dimension $1$ and
  $L^{\theta_2} = K$, which implies the equality of the metrics, for example:
\[
w_0(1, \alpha, \alpha^3, \alpha^4)=w_3(1, \alpha, \alpha^3, \alpha^4)=4.
\]
Indeed, we have: 
\[ \wa(1, \alpha, \alpha^3, \alpha^4)=\Deg \left(j\Monome{X}{00}-(j+1)\Monome{X}{2}+\Monome{X}{4} \right) =4; \]
and
\[ \wb(1, \alpha, \alpha^3, \alpha^4)=\rang \begin{pmatrix}
												1&0&0&0 \\ 0&1&0&0\\ 0&0&0&0\\ 0&0&1&0\\0&0&0&1\\0&0&0&0
											\end{pmatrix} = 4.\]
In this case, $\Hdim$ is verified.  
\end{example}
These examples show that there is a connection 
between the hypothesis $\Hdim$ and $L^{\theta}=K$.  The following
theorem establishes this link.
\begin{theorem}\label{thm:framework}
	Let $ K \hookrightarrow L$ be a field extension of finite degree $[L:K]=m$, 
	and let $\theta \in \Auto_{K}(L)$.
	The following statements are equivalent:
	\begin{enumerate}[label=\roman*]
		\item the subfield $L^{\theta}$ is  $K$,
		\item the weights $\wa$ and $\wb$ are equal,
		\item the annihilator polynomial of $n$ $K$-linearly independent elements of $L$ has degree exactly $n$,
		\item the dimension of the {\em root-space} of any non-zero polynomial is upper-bounded by its degree.
	\end{enumerate}
\end{theorem}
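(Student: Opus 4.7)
The plan is to prove the cycle $(i) \Rightarrow (ii) \Rightarrow (iii) \Rightarrow (iv) \Rightarrow (i)$. Three of these four implications are essentially free given what is already available in the paper; the only link that requires genuine work is $(iii) \Rightarrow (iv)$.

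The easy implications go as follows. For $(i) \Rightarrow (ii)$ one directly invokes Proposition~\ref{prop:egaliterang}, whose proof rests only on $L^{\theta}=K$ together with Theorems~\ref{thm:equivrangL} and~\ref{thm:equivrangK}, never on $\Hdim$. For $(ii) \Rightarrow (iii)$, take $K$-linearly independent $v_1,\dots,v_n \in L$ and set $\vectgr g=(v_1,\dots,v_n)$: by construction $\wb(\vectgr g)=n$, hence (ii) yields $\wa(\vectgr g)=n$, which is by definition $\deg \Annul{\vectgr g}(X)=n$. For $(iv) \Rightarrow (i)$, apply (iv) to the distinguished polynomial $\Monome{X}{1}-\Monome{X}{0}$: its root-space is exactly $L^{\theta}$, so (iv) gives $\dim_K L^{\theta}\leq 1$, and since $K\subseteq L^{\theta}$ one concludes $L^{\theta}=K$.

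The main obstacle is $(iii) \Rightarrow (iv)$, since this is the only step that moves from statements about explicit subspaces to a statement about an arbitrary $\theta$-polynomial. The plan is to exploit the right Euclidean structure of the Ore ring $L[X;\theta]$ together with the multiplicative evaluation identity. Let $A \in L[X;\theta]$ be non-zero of degree $d$ and suppose for contradiction that $\dim_K \roots(A)\geq d+1$. Pick $K$-linearly independent roots $v_1,\dots,v_{d+1}$, let $V$ be their $K$-span, and use (iii) to ensure $\deg \Annul{V}(X)=d+1$. Writing the right division $A = Q\cdot \Annul{V} + R$ with $\deg R < d+1$, the identity $\Eval{(AB)}{v}=\Eval{A}{\Eval{B}{v}}$ of Proposition~\ref{prop:OperatorEvaluation} gives $\Eval{R}{v}=0$ for every $v \in V$. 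Minimality of $\Annul{V}$ then forces $R=0$, so $\Annul{V}$ right-divides $A$, contradicting $\deg A = d$.

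Once $(iii) \Rightarrow (iv)$ is in place the cycle closes and the four statements become equivalent. The only technical ingredients required are the existence of $\Annul{V}$ from Theorem~\ref{thm:degann}, the right Euclidean structure of $L[X;\theta]$, and the multiplicative evaluation formula from Proposition~\ref{prop:OperatorEvaluation}; all of these are already available, so no further machinery must be introduced.
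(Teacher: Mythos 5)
Your proof is correct and follows essentially the same cycle $(i)\Rightarrow(ii)\Rightarrow(iii)\Rightarrow(iv)\Rightarrow(i)$ with the same key ingredients as the paper: Proposition~\ref{prop:egaliterang} for $(i)\Rightarrow(ii)$, the definition of $\wa$ for $(ii)\Rightarrow(iii)$, the root-space of $\Monome{X}{1}-\Monome{X}{0}$ for $(iv)\Rightarrow(i)$, and the combination of right Euclidean division, the composition identity from Proposition~\ref{prop:OperatorEvaluation}, and minimality of the annihilator for $(iii)\Rightarrow(iv)$. The only cosmetic difference is that you cast $(iii)\Rightarrow(iv)$ as a proof by contradiction on a subspace of $d+1$ roots, whereas the paper argues directly that an arbitrary nonzero $P$ of degree $n$ must be a left multiple of the annihilator of its full root-space, hence that space has dimension at most $n$; the two framings are interchangeable and rest on identical machinery.
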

\begin{proof} \hfill
	\begin{itemize}
\item[$i \Rightarrow ii$] This implication has been proved in {\bf Proposition~\ref{prop:egaliterang}}.
\item[$ii \Rightarrow iii$] Let $\vectgr{v}=(v_1, \ldots, v_n)$ be a
  vector of $K$-linearly elements of $K$, meaning that
  $\wb(\vectgr{v})=n$.  The equivalence is a direct consequence of the
  definition of $\wa$.
\item[$iii \Rightarrow iv$] Let $P(X)$ be a polynomial of degree $n$ and
  $(v_1, \ldots, v_u)$ be a basis of its roots space.  Let $\Ann(X)$
  denotes the annihilator polynomial of the $v_i$'s. By $iii$, its
  degree is $u$.  We compute the Euclidean division: $P(X)=Q(X)\Ann(X)+R(X)$,
  where $\deg R(X) < \deg \Ann(X)$.  We get that $\Eval{R}{v_i}=0$, but
  since the annihilator has minimal degree, we deduce that $R(X)=0$.
  Thus, $P(X)$ is a multiple of $\Ann(X)$, and $u\leq n$.
\item[$iv \Rightarrow i$] $L^{\theta}$ is the roots space of
  $\Monome{X}{1}-\Monome{X}{0}$,  applying $\Hdim$  to this polynomial gives that $L^{\theta}$ has dimension $1$ and contains $K$, so it is $K$.\qed
\end{itemize}
\end{proof}
Since  $\theta$ has a finite order, Artin's Lemma establishes that $L^{\theta} \hookrightarrow L$
is a Galois extension of Galois group $\langle\theta\rangle$. Therefore, if the condition {\em i.} above ($K = L^{\theta}$)  is satisfied, then
 $K \hookrightarrow L$ is a Galois extension with Galois group  $\Auto_K(L)= \langle\theta\rangle$. 
Thus, a proper framework to design codes in rank metric and the generalization of Gabidulin codes is to consider 
a cyclic Galois extension $K \hookrightarrow L$. 

Finite fields extensions provided with the Frobenius automorphism are
examples of such extensions.  Concerning number fields, cyclotomic and
Kummer extensions are simple cyclic Galois extension, with explicit
generators of the Galois group.  Similarly, concerning function
fields, Kummer and Artin-Schreier extensions also provide cyclic Galois
extensions which are easy to deal with.


\subsection{Generalized Gabidulin codes}\label{sssec:ggcdef}

Gabidulin codes were defined in 1985 by Gabidulin for finite fields
\cite{gabidulin1985theory}.  They consist in the evaluation of a
$q$-polynomial of bounded degree $k$ at $n$ values.  They are the
analogue of Reed-Solomon codes in the rank metric.  In this part, we
give the Singleton bound for codes in the  rank metric, and we
generalized the construction of Gabidulin codes.  Then, we give basic
properties of these codes, with proofs that does not rely on the
finiteness of the field.

We consider $K \hookrightarrow L$ a cyclic Galois extension of degree
$[L:K]=m$ and $\langle \theta \rangle =\Auto_{K}(L)$. From previous
considerations, we endow the vector space $L^n$ with {\em the rank
  metric} $w$, using any of the four equivalent definitions.  A
$[n,k,d]_r$ code $\mathcal{C}$ is an $L$-linear subspace of dimension
$k$ of the vector space $L^n$ with minimum rank distance $d$, {\em
  i.e.}
\[
d = \min_{\vectgr{x}\in \mathcal{C}\setminus 0 } w(\vectgr{x}).
\]
With these parameters, we have 
\begin{proposition}[Singleton Bound]\label{prop:Singleton}
Let $\mathcal{C}$ be a $[n,k,d]_r$-code over $L$. Then
$        d \leq n-k+1$.
\end{proposition}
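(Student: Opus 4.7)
The plan is to reduce this to the classical Singleton bound by observing that the rank weight is always bounded above by the Hamming weight, so a rank-metric code is automatically a code of at least the same Hamming minimum distance, and then to run the usual puncturing argument in an $L$-linear fashion.

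First I would make the key comparison precise. Using the definition of $\wb$, fix any $K$-basis $\mathcal B=(b_1,\dots,b_m)$ of $L$ and write $\matrice{B_{x,\mathcal B}}\in \M_{m\times n}(K)$. If $x_i=0$ then the $i$-th column of $\matrice{B_{x,\mathcal B}}$ is zero, hence
\[
\wb(\vectgr x)=\rang_K \matrice{B_{x,\mathcal B}}\ \leq\ \#\{i : x_i\neq 0\}.
\]
Since the four weights coincide under our standing assumption $L^\theta=K$, we get $w(\vectgr x)\leq w_H(\vectgr x)$ for every $\vectgr x\in L^n$, where $w_H$ denotes the Hamming weight.

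Next I would puncture. Pick any subset $I\subset\{1,\dots,n\}$ of cardinality $d-1$ and let $\pi_I:L^n\to L^{n-d+1}$ be the $L$-linear projection deleting the coordinates indexed by $I$. Restricting to $\mathcal C$, suppose $\vectgr x\in \mathcal C$ lies in $\ker(\pi_I|_{\mathcal C})$. Then $\vectgr x$ is supported on $I$, so $w_H(\vectgr x)\leq d-1$, and by the previous inequality $w(\vectgr x)\leq d-1$. Since the minimum rank distance of $\mathcal C$ is $d$, this forces $\vectgr x=0$. Therefore $\pi_I|_{\mathcal C}$ is an injective $L$-linear map, whence
\[
k=\dim_L \mathcal C\ \leq\ \dim_L L^{n-d+1}=n-d+1,
\]
which rearranges to $d\leq n-k+1$.

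There is no real obstacle here: the only non-standard point is the rank-vs-Hamming inequality, which falls out immediately from the matrix definition $\wb$, and the rest is the textbook puncturing argument, which goes through verbatim because $\mathcal C$ is by definition an $L$-linear subspace of $L^n$ and $\pi_I$ is $L$-linear.
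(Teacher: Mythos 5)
Your proof is correct, and it takes a genuinely different and arguably more transparent route than the paper's. The paper works with the annihilator-polynomial weight $\wa$: it observes that if $\vectgr x$ vanishes on the first $n-d+1$ coordinates, then $\Annul{\vectgr x}(X)$ right-divides the left lcm of the degree-one polynomials $X-\theta(x_i)/x_i$ ranging over the at most $d-1$ nonzero coordinates, hence $\wa(\vectgr x)\leq d-1$. You instead work with the matrix weight $\wb$ and extract the single clean inequality $w(\vectgr x)\leq w_H(\vectgr x)$ (rank weight is at most Hamming weight, since zero coordinates give zero columns of $\matrice{B_{x,\mathcal B}}$), then run the standard Hamming-metric puncturing argument verbatim. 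What your approach buys is conceptual economy: it reduces the rank-metric Singleton bound to the classical one via a one-line comparison of metrics, without touching $\theta$-polynomials, annihilators, or left lcm's at all, and it isolates a reusable fact ($w\leq w_H$) that the paper never states explicitly. What the paper's version buys is that it stays entirely inside the $\theta$-polynomial framework that the rest of the section is building, and it illustrates how annihilators behave under restriction of support. One small remark: you invoke $L^\theta=K$ to identify $w$ with $\wb$, which is fine since the code is defined under that standing hypothesis; but in fact the paper's unconditional inequality $\wa=\wthL\leq\wthK=\wb$ already gives $w(\vectgr x)=\wa(\vectgr x)\leq\wb(\vectgr x)\leq w_H(\vectgr x)$ even without that assumption, so your argument is slightly more robust than you claimed.
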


\begin{proof}
Consider the projection map $\pi$ on the first $n-d+1$ coordinates:
\[
\pi:\begin{array}[t]{rcl}
\mathcal{C}&\rightarrow& L^{n-d+1}\\
(x_1,\dots,x_n)&\mapsto &(x_1,\dots,x_{n-d+1})\end{array}
\]
and let
$ \vectgr{x}\in \mathcal C$ be such that $\pi(\vectgr{x})=0$, i.e.\
$\vectgr{x}=(0,\dots,0,x_{n-d+2},\dots,x_n)$. The number of non zero
coordinates of $\vectgr{x}$ is less than or equal to $n-(n-d+2)+1=d-1$. Then
$\Annul{\vectgr{x}}(X)$, the annihilator polynomial of $x$, right divides the left lowest common
multiple of
\[
X-\frac{\theta(x_{n-d+2})}{x_{n-d+2}},\dots,X-\frac{\theta(x_{n})}{x_{n}}
\]
which has degree less than or equal to $d-1$. Thus $w(\vectgr{x})\leq d-1$:
since the minimum distance of $\mathcal C$ is $d$, $x=0$, and
$\pi$ is injective. As a consequence $k\leq n-d+1$.\qed
\end{proof}

\begin{definition}[MRD codes]\label{defi:MRDCodes}
	When $d = n-k+1$, the code $\mathcal{C}$ is a \emph{MRD (Maximum Rank Distance) code}.
\end{definition}
Now we present the generalization of Gabidulin codes. 

\begin{definition}[Generalized Gabidulin code]\label{Def:GabiGen}
  Let $K \hookrightarrow L$ be a cyclic Galois extension of degree
  $[L:K]=m$ and $\Auto_K(L)=\langle \theta \rangle$.  Let
  $k \leq n \leq m$ be integers and
  $\vectgr{g}=\VecteurLigneCoins{g_1}{g_n}$ be a vector of
  $K$-linearly independent elements of $L$.  The vector $\vectgr{g}$
  is called the \emph{support} of the code.  The \emph{generalized Gabidulin
    code} $\Gab_{\theta,k}(\vectgr{g})$ is
		\begin{align} \Gab_{\theta,k}(\vectgr{g}) =\left \{ \VecteurLigneCoins{\Eval{f}{g_1}}{\Eval{f}{g_n}}~:~
f(X) \in L[X;\theta], \deg f(X)<k \right\}. \end{align}  
\end{definition}

\begin{remark}
	The case $k=n$ is useful for the theory.
	This case will appear for decoding algorithms when there is the maximal number of erasures.
\end{remark}
A codeword $\vectgr{c}=(c_1,\dots,c_n)$ of $\Gab_{\theta,k}(\vectgr{g})$ has
coefficient in $L^n$. The generalized Gabidulin code $\Gab_{\theta,k}(\vectgr{g})$
has the following generating matrix
\begin{equation}
  \label{eq:MatGenGab}
  \MatriceCoins{\AppliqueTheta{g_1}{0}}  {\AppliqueTheta{g_n}{0}}
  {\AppliqueTheta{g_1}{k-1}}{\AppliqueTheta{g_n}{k-1}}          .      
\end{equation}
Using a $K$-basis $\mathcal B=(b_1,\dots,b_m)$ of $L$, writing
$c_j=\sum_{i=1}^mc_{ij}b_j$, $c_{ij}\in K$, $i=1\dots m$,
$j=1\dots,n$, the codeword $\vectgr{c}$ can alternatively be written in matrix form:
\[
\vectgr{c}=\MatriceCoins{c_{11}}{c_{1n}}{c_{m1}}{c_{mn}}\in \M_{m \times n}(K).
\]


\begin{example} \label{ex:filrouge} We consider the cyclotomic
  extension
  $\Q \hookrightarrow \Q[X]/(1+X+\cdots+X^{6}) = \Q[\alpha] $, where
  $1+\alpha+\cdots + \alpha^{6}=0$, and the automorphism defined by
  $\theta : \alpha \mapsto \alpha^3$. In this setting, $\Q = K$ and $\Q[\alpha] =L$.
 Let
  $\mathcal{B} = (1, \alpha, \ldots, \alpha^5)$ be the considered
  $K$-basis of $L$.  Let the support be
  $\vectgr{g}=(1, \alpha, \ldots, \alpha^5)$, and we build the
  corresponding generalized Gabidulin code of length $n= 6$, dimension $k = 2$ and minimum distance $d =5$.

To encode the $\theta$-polynomial $f= \alpha^2 \Monome{X}{00} + \alpha^5 \Monome{X}{1}$, 
we compute the evaluations of $f$ on the support:
\[
	\begin{array}{rcl}
	\Eval{f}{1}			&=&			\alpha^2 + \alpha^5,	\\
	\Eval{f}{\alpha}	&=&			\alpha + \alpha^3,	\\
	\Eval{f}{\alpha^2}	&=&			2\alpha^4,			\\
	\Eval{f}{\alpha^3}	&=&			\alpha^5+1			\\
	\Eval{f}{\alpha^4}	&=&			\alpha^3+\alpha^6	\\
						&=&			-1-\alpha-\alpha^2-\alpha^4-\alpha^5\\
	\Eval{f}{\alpha^5}	&=&			\alpha^6+1			\\
						&=&			-\alpha-\alpha^2-\alpha^3-\alpha^4-\alpha^5\\	
	\end{array}
\]
Then, the corresponding codeword is the following vector:
\[
 \vectgr{c}(f) = \left(\Eval{f}{\alpha}, \ldots, \Eval{f}{\alpha^6} \right) \in \Q[\alpha]^8 ; 
\]
or the following matrix obtain by expanding the components of the vector over the basis $\mathcal{B}$:
\[
	\vecteurgras{C}(f)=	
	\begin{pmatrix}
		0 & 0 & 0 & 1 & -1 &  0 \\ 
		0 & 1 & 0 & 0 & -1 & -1 \\ 
		1 & 0 & 0 & 0 & -1 & -1 \\ 
		0 & 1 & 0 & 0 &  0 & -1 \\ 
		0 & 0 & 2 & 0 & -1 & -1 \\ 
		1 & 0 & 0 & 1 & -1 & -1
	\end{pmatrix}
	\in \mathcal{M}_{6,6}(\Q).
\]

\end{example}

\begin{proposition}\label{prop:MRD}
	Under the conditions of {\bf Definition \ref{Def:GabiGen}}, 
	$\Gab_{\theta,k}(\vectgr{g})$ is an MRD-code.
\end{proposition}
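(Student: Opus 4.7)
The plan is to show two things: the code really has $L$-dimension $k$ (so that the Singleton bound gives $d \leq n-k+1$), and that the minimum distance is at least $n-k+1$, whence equality and the MRD property.

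First, I would verify that the evaluation map $f \mapsto (\Eval{f}{g_1}, \ldots, \Eval{f}{g_n})$ is injective on $\theta$-polynomials of degree less than $k$. Indeed, a non-zero $f$ in the kernel would be a $\theta$-polynomial of degree $< k \leq n$ vanishing on the $K$-linearly independent family $g_1, \ldots, g_n$, contradicting \textbf{Proposition~\ref{crl:hdim}} (which is available since we work in the framework where $\Hdim$ holds). Hence $\dim_L \Gab_{\theta,k}(\vectgr g) = k$, and \textbf{Proposition~\ref{prop:Singleton}} yields $d \leq n-k+1$.

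For the lower bound, I would take a non-zero codeword $\vectgr c = (\Eval{f}{g_1}, \ldots, \Eval{f}{g_n})$ with $\deg f < k$ and weight $w = w(\vectgr c)$. Using the equivalent definition $w = \wa(\vectgr c)$ (from \textbf{Theorem~\ref{thm:framework}}), there exists a monic $\theta$-polynomial $A(X) \in L[X;\theta]$ of degree exactly $w$ that annihilates every coordinate of $\vectgr c$, that is, $\Eval{A}{\Eval{f}{g_i}} = 0$ for $i=1, \ldots, n$. By the composition rule of \textbf{Proposition~\ref{prop:OperatorEvaluation}}, this rewrites as $\Eval{(A \cdot f)}{g_i} = 0$ for every $i$, so the $\theta$-polynomial $A \cdot f$ vanishes on the $K$-linearly independent family $g_1, \ldots, g_n$.

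Invoking \textbf{Proposition~\ref{crl:hdim}} again, either $A \cdot f = 0$, or $\deg(A \cdot f) \geq n$. Since $L[X;\theta]$ has no zero divisors and both $A$ and $f$ are non-zero, $A \cdot f \neq 0$; and since multiplication in $L[X;\theta]$ adds degrees (the leading term of $A \cdot f$ is $a_w \theta^w(\text{lc}(f))X^{w+\deg f}$ with $a_w \neq 0$ and $\theta^w(\text{lc}(f)) \neq 0$), we get $w + \deg f \geq n$, hence $w \geq n - \deg f > n - k$, i.e., $w \geq n-k+1$. Combined with Singleton, this gives $d = n-k+1$. The only subtle point is making sure the degree of the product $A \cdot f$ really adds, which is immediate since $\theta$ is an automorphism so it sends non-zero leading coefficients to non-zero elements.
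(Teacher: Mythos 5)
Your proof is correct and follows essentially the same route as the paper: take a non-zero codeword of weight $w$, use the annihilator polynomial of degree $w$, compose with $f$, and bound the degree against the $n$ independent support points to get $w \geq n-k+1$. You are slightly more explicit than the paper in two places — you verify that the evaluation map is injective (so the code has dimension exactly $k$ before invoking Singleton), and you justify carefully that $A\cdot f\neq 0$ and that degrees add under the symbolic product — both of which the paper leaves implicit, but these are clarifications rather than a different argument.
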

\begin{proof}
  Let $\vectgr{c}=\VecteurLigneCoins{c_1}{c_n}$ be a non-zero
  codeword\Oubliettes{ closest to $0$} of rank weight $w$.
  \Oubliettes{Its weight is the minimum distance $d$ of the code.}
  From {\bf Theorem \ref{thm:degann}}, there is a non-zero
  $\theta$-polynomial $U(X)$ of degree $w$ which vanishes on all the
  $c_i$'s.  There also exists a non-zero $\theta$-polynomial $f(X)$ of
  degree $\leq k-1$ such that $c_i = \Eval{f}{g_i}$, for all
  $i=1,\ldots,n$. Therefore, we have
        \[ 
        \Eval{U(X) \cdot f(X)}{g_i}=\Eval{U}{c_i}=0 \ForallInt{1}{i}{n}.
        \]
        Thus $U(X)\cdot f(X)$ is a non-zero $\theta$-polynomial of
        degree $\leq w+k-1$ which vanishes on the $K$-vector space of
        dimension $n$ generated by the $g_i$'s.  So we have
        $n \leq w+k-1$, i.e. $w \geq n-k+1$. Therefore $n-k+1 \leq
        d$. From Singleton bound we have $d\leq n-k+1$: $d=n-k+1$, and
        the code is MRD.  \qed\end{proof} Given the usual scalar
      product, we recall that the dual, or orthogonal, of a code is the set of vectors orthogonal to
      all the codewords.

\begin{proposition}
A parity-check matrix of  $ \Gab_{\theta,k}(\vectgr{g})$ has the form   
\begin{equation}
\label{Eq:FormeDuale}
       \matrice{H} = \MatriceCoins{\AppliqueTheta{h_1}{0}}{\AppliqueTheta{h_n}{0}}{\AppliqueTheta{h_1}{k-1}}{\AppliqueTheta{h_n}{k-1}},
\end{equation}
for some $h_1,\dots,h_n\in L^n$ which are $K$-linearly independent and satisfy
\[
		\MatriceCoins{\theta^{-d+2}(g_1)}{\theta^{-d+2}(g_n)}{\theta^{n-d}(g_1)}{\theta^{n-d}(g_n)} 
		\VecteurCoins{h_1}{h_n}=\VecteurCoins{0}{0}.
\]
Moreover the vector $(h_1, \ldots, h_n) \in L^n$ is unique up to a multiplicative factor in $L$.
\end{proposition}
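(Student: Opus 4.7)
The plan is to compute $HG^T$, where $G$ is the generator matrix from equation (\ref{eq:MatGenGab}), and show that the condition $HG^T=0$ is equivalent to the displayed linear system on $\vectgr h$; existence and uniqueness then follow by dimension counting combined with Theorem~\ref{thm:degann} and the weight equivalence from Theorem~\ref{thm:equivrangL}.

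First I would expand the entries of $HG^T$: the $(i,j)$ entry equals $\sum_{\ell=1}^n \theta^i(h_\ell)\theta^j(g_\ell)$. Applying the injective map $\theta^{-i}$, the vanishing of this entry is equivalent to $\sum_\ell h_\ell\,\theta^{j-i}(g_\ell)=0$. As $i$ ranges over the row indices of $H$ (namely $0,\ldots,n-k-1$) and $j$ over $0,\ldots,k-1$, the difference $j-i$ sweeps through exactly the integers from $-(n-k-1)=-d+2$ to $k-1=n-d$, using $d=n-k+1$. Thus $HG^T=0$ reduces to the $n-1$ scalar equations $\sum_\ell h_\ell\,\theta^r(g_\ell)=0$ for $-d+2\leq r\leq n-d$ displayed in the proposition. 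Since this is a homogeneous $L$-linear system with $n-1$ equations in $n$ unknowns, it admits a non-zero solution, giving the existence of $\vectgr h$.

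Next I would verify that $h_1,\ldots,h_n$ are $K$-linearly independent. Suppose to the contrary that $\dim_K\langle h_1,\ldots,h_n\rangle_K=r<n$. By Theorem~\ref{thm:degann} the annihilator polynomial $\Annul{\vectgr h}(X)$ has degree $r$, so there exists a non-trivial $L$-linear combination $\sum_{s=0}^r a_s\theta^s(\vectgr h)=0$ of rows of the matrix $V_{h,\theta}$. Combined with the orthogonality relations above, this would allow the construction of a non-zero $\theta$-polynomial of degree strictly less than $n$ that vanishes on $g_1,\ldots,g_n$, contradicting Theorem~\ref{thm:degann} applied to the $K$-linearly independent family $\vectgr g$. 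Once $K$-independence of the $h_\ell$'s is in hand, Theorem~\ref{thm:equivrangL} yields that the first $n-k$ rows of $V_{h,\theta}$ are $L$-linearly independent, so $H$ has rank $n-k$ and is a genuine parity-check matrix.

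For the uniqueness up to a factor in $L^*$, the plan is to show that the $(n-1)\times n$ coefficient matrix of the displayed system has rank exactly $n-1$. If its kernel were of $L$-dimension at least $2$, one could choose two $L$-independent solutions $\vectgr h,\vectgr h'$; each provides a parity-check matrix of the form~(\ref{Eq:FormeDuale}), and together they would produce $2(n-k)$ codewords spanning an $L$-subspace of the $(n-k)$-dimensional dual $\Gab_{\theta,k}(\vectgr g)^\perp$. Using the $K$-linear independence of both $\vectgr h$ and $\vectgr h'$ (established in the previous step) together with the weight equivalence, I would argue that these $2(n-k)$ row vectors are themselves $L$-linearly independent, yielding the required contradiction. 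I expect this rank-exactness/uniqueness step to be the main technical obstacle, since it is where one must control simultaneously the $K$-rank of the supports and the $L$-span of their $\theta$-orbits.
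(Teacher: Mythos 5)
The computation of $HG^T$ and its reduction to the displayed linear system on $\vectgr h$ (via applying $\theta^{-i}$ row by row and checking that the exponents $j-i$ sweep $\{-d+2,\dots,n-d\}$) is correct and agrees with the paper. The remaining steps, however, contain two genuine gaps.

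For the $K$-linear independence of the $h_\ell$'s, you propose to start from a relation $\sum_{s=0}^r a_s\theta^s(\vectgr h)=0$ with $r<n$ and ``combine it with the orthogonality relations'' to produce a nonzero $\theta$-polynomial of degree strictly less than $n$ vanishing on the $g_i$'s. But the orthogonality relations are of the form $\sum_\ell \theta^c(g_\ell)\,h_\ell=0$: they pair the index $\ell$ across both families and offer no direct mechanism to turn an $L$-linear relation among the $\theta$-iterates of $\vectgr h$ into a single $\theta$-polynomial evaluated on the $g_i$'s. You leave this key construction unexplained, and I do not see how to fill it in the way you describe. The paper's actual argument is structurally different: it adjoins a putative $K$-dependency $(\lambda_1,\dots,\lambda_n)$ as an extra row to the $(n-1)\times n$ system matrix, observes that the augmented matrix must still have rank $\le n-1$ because the nonzero vector $\vectgr h$ lies in its kernel, deduces $\lambda_i=M\{\tilde g_i\}$ for some $M$ of degree $\le n-2$, and then uses $\theta(\lambda_i)=\lambda_i$ to conclude that $(\Monome{X}{1}-\Monome{X}{00})\cdot M$ is a polynomial of degree $\le n-1$ vanishing on $n$ independent $\tilde g_i$'s, hence zero.

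Your uniqueness step also cannot succeed as written. Choosing two $L$-independent kernel vectors $\vectgr h,\vectgr h'$ and considering the $2(n-k)$ rows of the associated matrices $H,H'$ cannot lead to a contradiction through linear independence: all $2(n-k)$ rows lie in the dual code, which has $L$-dimension exactly $n-k$, so they are automatically $L$-dependent and nothing forces them to be independent. What is missing in both places is the single observation the paper uses: since the $\tilde g_i$'s are $K$-linearly independent, $\wthK(\tilde{\vectgr g})=n$, so by Proposition~\ref{prop:egaliterang} (and the row structure $L_{s+1}=\theta(L_s)$ as in Theorem~\ref{thm:equivrangL}) the first $n-1$ rows of $V_{\tilde g,\theta}$, i.e.\ the system matrix itself, already have $L$-rank $n-1$. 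This directly yields a kernel of $L$-dimension exactly one, giving both existence and uniqueness at once, and is also the crucial fact underpinning the paper's independence argument. Without it, both of your later steps remain open.
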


\begin{proof}
	The equation $\matrice{G} \cdot {}^t\matrice{H} = 0$, where $\matrice{G}$ denotes the generating matrix of the code (\ref{eq:MatGenGab}) is equivalent to
	\[
	\sum_{i=1}^{n} \theta^a(g_i) \theta^b(h_i) = 0 , \quad 0 \leq a \leq k-1 , \quad 0 \leq b \leq d-2.
	\]
	By applying, for all  $b=0,\ldots,k-1$, the automorphism $\theta^{-b}$ we obtain
	\[
	\sum_{i=1}^{n} \theta^c(g_i) h_i = 0 , \quad -n+k+1 \leq c \leq  k-1,
	\]
	which gives the system
	\begin{equation}
          \label{eq:supportdual}
	\MatriceCoins{\theta^0(\tilde{g}_1)}{\theta^0(\tilde{g}_n)}{\theta^{n-2}(\tilde{g}_1)}{\theta^{n-2}(\tilde{g}_n)} 
	\VecteurCoins{h_1}{h_n}=\VecteurCoins{0}{0},
	\end{equation}
	with $\tilde{g}_i=\theta^{-(d-2)}(g_i)$.
	Since the $g_i$'s are $K$-linearly independent, so are the $\tilde{g}_i$'s. 
	Therefore, $\wthK(\vectgr{ \tilde g})$, the $K$-rank weight of $\vectgr{\tilde{g}}=(\tilde{g}_1,\ldots,\tilde{g}_n)$, is equal to $n$. 
	From {\bf Theorem \ref{prop:egaliterang}} the $L$-rank of the matrix 
	\[
	\MatriceCoins{\theta^0(\tilde{g}_1)}{\theta^0(\tilde{g}_n)}{\theta^{m-1}(\tilde{g}_1)}{\theta^{m-1}(\tilde{g}_n)}
	\]
        is also $n$.  Thus, the kernel of equation
        (\ref{eq:supportdual}) has  $L$-dimension
        $1$.  Let $\VecteurLigneCoins{h_1}{h_n}\neq0$ belong to the kernel.
        Now, we prove that the $h_i$'s are $K$-linearly independent. Consider a linear dependency
	\[ 
          \sum_{i=1}^n \lambda_i h_i = 0, \quad \lambda_i \in K.
          \]
	If the matrix, that we call  $A$, consisting in (\ref{eq:supportdual}) augmented with the row $(\lambda_1,\dots,\lambda_n)$,  as a last row, has rank $n$, 
	then the $h_i$'s are all zero. Since this is not the case, 
        $A$ has rank at most $n-1$ and there exists $\mu_0,\ldots, \mu_{n-2} \in L^{n-1}$ such that  
	\[
          \lambda_i = \mu_0 \theta^0(\tilde{g}_i) + \cdots + \mu_{n-2} \theta^{n-2}(\tilde{g}_i), 	\quad i=1,\ldots,n,
	\]
        since matrix in (\ref{eq:supportdual})  has rank $n-1$.
	Let  $M =\sum_{i=0}^{n-2} \mu_i \Monome{X}{i} \in L[X;\theta]$. 
	Since for all $i$, $\lambda_i \in K$, the $\lambda_i$'s are also roots of  $\Monome{X}{1}-\Monome{X}{0}$. 
        Therefore, from the properties of the evaluation of $\theta$-polynomials in {\bf Proposition \ref{prop:OperatorEvaluation}}, we have, for  $i=1,\ldots,n$:
	\[
        \Eval{(\Monome{X}{1}-\Monome{X}{0}) \cdot M}{\tilde{g}_i} = \Eval{(\Monome{X}{1}-\Monome{X}{0})}{\Eval{M}{\tilde{g}_i}} =  \Eval{(\Monome{X}{1}-\Monome{X}{0})}{\lambda_i} =0. 
        \]
        The $\theta$-polynomial $(\Monome{X}{1}-\Monome{X}{0}) \cdot M$ has degree $n-1$ with $n$ $K$-linearly independent roots. 
        Therefore, by {\bf Proposition \ref{crl:hdim}} it is the zero polynomial. This implies that all the $\mu_i$ are zero, and for  $i=1,\ldots,n,~\lambda_i = 0$.
	   Therefore the $h_i$'s are $K$-linearly independent.
\qed\end{proof}
\begin{corollary}\label{crl:dualMRD}
	The dual of a generalized Gabidulin code is a generalized Gabidulin code.
\end{corollary}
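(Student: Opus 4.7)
The plan is to read off the corollary almost directly from the preceding proposition. First I would recall that the dual code of $\Gab_{\theta,k}(\vectgr g)$ is, by definition, the $L$-row span of any parity-check matrix $\matrice H$, and a generator matrix of a generalized Gabidulin code $\Gab_{\theta,k'}(\vectgr h)$ with support $\vectgr h=(h_1,\dots,h_n)$ takes exactly the shape of the matrix in~(\ref{Eq:FormeDuale}), namely the rows $(\theta^i(h_1),\dots,\theta^i(h_n))$ for $i=0,\dots,k'-1$.

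Next, I would match the parameters. The previous proposition exhibits such a matrix $\matrice H$ whose rows are indexed by successive applications of $\theta$ to a single vector $\vectgr h\in L^n$, with $n-k$ rows (the indexing $0\le b\le d-2=n-k-1$ coming from the proof of the preceding proposition, which I would point out as the correct index range for the parity-check matrix), and it moreover asserts that the $h_i$'s are $K$-linearly independent. Since $n\le m$, the vector $\vectgr h$ is an admissible support in the sense of \textbf{Definition~\ref{Def:GabiGen}}, and the dimension parameter $n-k$ satisfies $n-k\le n\le m$, so $\Gab_{\theta,n-k}(\vectgr h)$ is well-defined.

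I would then conclude that
\[
\Gab_{\theta,k}(\vectgr g)^{\perp}=\Gab_{\theta,n-k}(\vectgr h),
\]
which is a generalized Gabidulin code as claimed. The only subtle point — and essentially the entire content of the corollary — has already been dispatched in the preceding proposition, namely the $K$-linear independence of $\vectgr h$; without this, $\vectgr h$ would not be a valid support, and the identification of the dual as a generalized Gabidulin code would fail. So the corollary is a short repackaging, and there is no real obstacle beyond pointing at~(\ref{Eq:FormeDuale}) and invoking \textbf{Definition~\ref{Def:GabiGen}}.
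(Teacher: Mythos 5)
Your proof is correct and follows the same route the paper implicitly takes: the preceding proposition already supplies a parity-check matrix of the shape~(\ref{Eq:FormeDuale}) with a $K$-linearly independent $\vectgr h$, so it is precisely a generating matrix of $\Gab_{\theta,n-k}(\vectgr h)$. You also correctly noticed that the exponent $k-1$ in the displayed form~(\ref{Eq:FormeDuale}) is a typo and should be $d-2=n-k-1$, as the proof of the proposition itself makes clear.
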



\section{Decoding Generalized Gabidulin Codes}

Let $K \hookrightarrow L$ be a cyclic Galois extension of finite degree
$m=[L:K]$ and $\theta$ a $K$-automorphism such that
$\Auto_K(L)=\langle \theta \rangle$.  We consider a generalized
Gabidulin code $\Gab_{\theta,k}(\vectgr{g})$ with parameters
$[n,k,d]_r$, for some support $\vectgr g=(g_1,\dots,g_n)\in L^n$, such
that the $g_i$'s are $K$-linearly independent.  The problem of
decoding the code $\Gab_{\theta,k}(\vectgr{g})$ is classically stated
as follows. Let $\vectgr{y}\in L^n$ be  such that
\[
\vectgr{y} = \vectgr{c}(f)  +  \vectgr{e} + \vectgr{\varepsilon}=\vectgr{c} +\vectgr{e} + \vectgr{\varepsilon},
\]
where $\vectgr c=\vectgr{c}(f)=\Eval{f}{\vectgr{g}} \in \Gab_{\theta,k}(\vectgr{g})$ is the codeword 
corresponding to the evaluation of $f(X)\in L[X;\theta]$, $\vectgr{e}$ and
$\vectgr{\varepsilon}$ are respectively the vector of errors and erasures, which are
``small''. The goal is to recover
$\vectgr c\in \Gab_{\theta,k}(\vectgr{g})$, or, equivalently, $f(X)\in L[X;\theta]$. 

For an error $\vectgr e$, being small means that $\vectgr e$ has low
rank weight. For the erasure $\vectgr\varepsilon$, there are two
definitions, which are more involved.  In the first part of this
Section, we introduce the decoding problem in presence of errors only
and show how to treat it by solving the reconstruction problem of
$\theta$-polynomials. Then we recall two models of
erasures introduced in~\cite{roth1991maximum,Blaum/McEliece:1985,li2014transform,koetter2008coding}
adapted to existing applications (line erasures and network coding
erasures for instance) and show how the problem of decoding in
presence of these errors and erasures can be reduced to the decoding problem of errors
in a  generalized Gabidulin code derived from the original one.

By choosing a fixed $K$-basis $\mathcal B$ of $K \hookrightarrow L$, we will
indifferently consider the vectors of $L^n$ as $m \times n$ matrices
over $K$.  

\subsection{Decoding errors}
\label{sssec:decrec}


We define formally the decoding problem as
\begin{definition}[Decoding problem $\Dec(n,k,t,\vectgr{g},\vectgr{y})$]\label{def:DP}\hfill
    \begin{itemize}
    \item  \textbf{Input:} 
      \begin{itemize}
        \item $n,k\in \N$, $k\leq n$;
       \item support $\vectgr g=(g_1,\dots,g_n)\in L^n$, where the $g_i$'s are
        $K$-linearly independent;
        \item  the  Gabidulin code $\Gab_{\theta,k}(\vectgr{g})$;
        \item  $t\in \N$, the rank weight of the error vector (or matrix);
        \item  $\vectgr{y}=\VecteurLigneCoins{y_1}{y_n} \in L^n$.
      \end{itemize}
      \item \textbf{Output:} 
      \begin{itemize}
        \item $(\vectgr c,\vectgr e)$ or, equivalently, $(f(X),\vectgr e)$, such that:
          \begin{enumerate}
      \item $\vectgr c\in \Gab_{\theta,k}(\vectgr{g})$, or, equivalently,
        $f(X)\in L[X;\theta]$, $\deg f(X)<k$;
        \item $\vectgr{e}=\VecteurLigneCoins{e_1}{e_n} \in L^n$ with $w(\vectgr{e}) \leq t$;
        \item $
        \vectgr y=\vectgr c+\vectgr e=\Eval{f}{\vectgr{g}}+\vectgr e
$,
      \end{enumerate}
    \item \textbf{or} return \textbf{fail} is no such solution exists.
      \end{itemize}
      \end{itemize}
\end{definition}

\Oubliettes{                  
This problem is directly related to our problem of: given a received vector $\vectgr{y}=\vectgr{c}(f)+\vectgr{e}$, 
where the rank of $\vectgr{e}$ is upper-bounded by t$t$ and $\vectgr{c}(f)$ is 
a codeword of $\Gab_{\theta,k}(\vectgr{g})$, we want to find $f$.  
It is clear that if $t$ is less than the error-correcting capability $\lfloor (n-k)/2 \rfloor$, then the solution is unique.
}

\begin{example}
In the context of Example~\ref{ex:filrouge}, suppose we  receive
\[
\begin{pmatrix}											
1	&	-1	&	1	&	3	&	-1	&	1	\\
1	&	0	&	-1	&	0	&	-1	&	-2	\\
0	&	1	&	0	&	-1	&	-1	&	-1	\\
0	&	1	&	1	&	1	&	0	&	0	\\
1	&	-1	&	3	&	2	&	-1	&	0	\\
0	&	1	&	-1	&	-1	&	-1	&	-2	
\end{pmatrix}
=
\begin{pmatrix}
0 & 0 & 0 & 1 & -1 &  0 \\ 
0 & 1 & 0 & 0 & -1 & -1 \\ 
1 & 0 & 0 & 0 & -1 & -1 \\ 
0 & 1 & 0 & 0 &  0 & -1 \\ 
0 & 0 & 2 & 0 & -1 & -1 \\ 
1 & 0 & 0 & 1 & -1 & -1
\end{pmatrix}
+
\begin{pmatrix}											
1	&	-1	&	1	&	2	&	0	&	1	\\
1	&	-1	&	-1	&	0	&	0	&	-1	\\
-1	&	1	&	0	&	-1	&	0	&	0	\\
0	&	0	&	1	&	1	&	0	&	1	\\
1	&	-1	&	1	&	2	&	0	&	1	\\
-1	&	1	&	-1	&	-2	&	0	&	-1	
\end{pmatrix}
\]
which can be written in vector form
\[ \VecteurLigneCoins{y_1}{y_6}=\VecteurLigneCoins{c_1}{c_6}+(\epsilon_1,-\epsilon_1,\epsilon_2,\epsilon_1+\epsilon_2,0,\epsilon_2), \]
where $\epsilon_1=1+\alpha-\alpha^2+\alpha^4-\alpha^5$
and $\epsilon_2=1-\alpha+\alpha^3+\alpha^4-\alpha^5$.
In this case we have an error of rank 2.
\end{example}
Our decoding method is inspired by the so-called Welch-Berlekamp algorithm~\cite{berlekamp1986error,gemmell1992highly},
and we first present  an intermediate  problem.
\begin{definition}[Non Linear Reconstruction problem
  $\NLR(n,k,t,\vectgr{g},\vectgr{y})$]\label{def:NLRP}\hfill

 \begin{itemize}
    \item  \textbf{Input:}
      \begin{itemize}
        \item $n,k\in\N$, $k\leq n$;
        \item   $t\in\N$, the number of errors;
        \item $\vectgr{g}=\VecteurLigneCoins{g_1}{g_n} \in L^n$, where
          the $g_i$'s are $K$-linearly independent;
        \item  $\vectgr{y}=\VecteurLigneCoins{y_1}{y_n} \in L^n$.
      \end{itemize}
      \item \textbf{Output:}
        \begin{itemize}
          \item $f(X),V(X)\in L[X;\theta]$ such that:
      \begin{enumerate} 
        \item  $\deg f(X) < k$;
        \item  $V(X)\neq 0$ and $ \deg V(X) \leq t$;
        \item
          $
             \Eval{V}{y_i} = \Eval{V \cdot f}{g_i}=\Eval{V}{\Eval{f}{g_i}}$, $ i=1,\dots,n
$,
      \end{enumerate}
    \item \textbf{or} return \textbf{fail} if no such solution exists.

   \end{itemize}
   \end{itemize}
\end{definition}
In the above definition, $V(X)$ plays the role of the locator polynomial
of the error: if $\vectgr y=\vectgr g+\vectgr e$, where $w(\vectgr{e}) \leq t$, then  $V(X)$
vanishes on the $K$-vector space generated by $\vectgr e$.

\begin{proposition}
  There is a one-to-one correspondence between
  solutions of $\Dec(n,k,t,\vectgr{g},\vectgr{y})$ and those of
  $\NLR(n,k,t,\vectgr{g},\vectgr{y})$.
\end{proposition}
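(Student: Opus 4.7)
The plan is to construct maps in both directions between the solution sets of $\Dec(n,k,t,\vectgr g,\vectgr y)$ and $\NLR(n,k,t,\vectgr g,\vectgr y)$ and check that they invert one another, the nontrivial content being a dimension bound coming from hypothesis $\Hdim$.

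From $\Dec$ to $\NLR$: given $(f,\vectgr e)$, I would set $V(X)=\Annul{\vectgr e}(X)$. By \textbf{Theorem~\ref{thm:framework}} (item $iii$, valid since $K\hookrightarrow L$ is cyclic and hence $L^\theta=K$), the annihilator has degree exactly $w(\vectgr e)\le t$, and being monic it is nonzero. Writing $y_i=\Eval{f}{g_i}+e_i$, the $K$-linearity of operator evaluation (\textbf{Proposition~\ref{prop:OperatorEvaluation}}) together with $\Eval{V}{e_i}=0$ yields
\[
\Eval{V}{y_i}=\Eval{V}{\Eval{f}{g_i}}+\Eval{V}{e_i}=\Eval{V}{\Eval{f}{g_i}}=\Eval{V\cdot f}{g_i},
\]
the last equality being multiplicativity of the operator evaluation. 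So $(f,V)$ solves $\NLR$.

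From $\NLR$ to $\Dec$: given $(f,V)$, define $\vectgr e=\vectgr y-\Eval{f}{\vectgr g}$, so that $\vectgr y=\Eval{f}{\vectgr g}+\vectgr e$ holds by construction and $\deg f<k$ is inherited. The main step, and the only real obstacle, is to check that $w(\vectgr e)\le t$. By linearity of evaluation and the $\NLR$ identity,
\[
\Eval{V}{e_i}=\Eval{V}{y_i}-\Eval{V}{\Eval{f}{g_i}}=0,\qquad i=1,\dots,n.
\]
Thus $V$ vanishes on every element of the $K$-span of the $e_i$'s. Because $V$ is nonzero of degree at most $t$ and $\Hdim$ holds (\textbf{Theorem~\ref{thm:framework}}, item $iv$), the $K$-dimension of the root-space of $V$ is at most $t$, so $w(\vectgr e)\le t$, as required.

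Finally, these two constructions invert one another on the essential data: the backward map sends $(f,V)$ to $(f,\vectgr y-\Eval{f}{\vectgr g})$, which depends only on $f$; the forward map then recovers $V$ canonically as $\Annul{\vectgr e}$. Taking $V$ to be the minimal, i.e.\ annihilator, polynomial in its right-multiple class (which is the natural normalization for $\NLR$) makes the correspondence literally one-to-one, with the decoding failure case \textbf{fail} on either side corresponding exactly to \textbf{fail} on the other.
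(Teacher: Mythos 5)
Your proof is correct and follows the same approach as the paper: construct $V$ as the annihilator of the error in the forward direction, and set $\vectgr e = \vectgr y - \Eval{f}{\vectgr g}$ in the backward direction. You spell out the two details the paper leaves implicit — that $\deg \Annul{\vectgr e}=w(\vectgr e)\le t$ via $\Hdim$, and that the bound $w(\vectgr e)\le t$ in the converse follows from $\dim_K\roots(V)\le\deg V$ — and you correctly flag that literal bijectivity requires normalizing $V$ to the minimal (annihilator) polynomial, since $\NLR$ as stated admits many valid $V$'s per $f$; this is a genuine small imprecision in the paper's phrasing that your proof addresses cleanly.
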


\begin{proof}
  Let $n,k,t,x\vectgr{g},\vectgr{y}$ be the parameters of the two
  statements. If $f(X)\in L[X;\theta]$ and
  $\vectgr{e}\in L^n$ are solution of
  $\Dec(n,k,\vectgr{g},\vectgr{y})$, then $f(X)$ and
  $V(X) = \Ann_{\langle e_1, \ldots, e_n \rangle}(X)$ are a solution of
  $\NLR(n,k,t,\vectgr{g},\vectgr{y})$.
  Conversely, if $f(X)$ and $V(X)$ are a solution of
  $\NLR(n,k,\lfloor \frac{n-k}{2} \rfloor,\vectgr{g},\vectgr{y})$,
  then $f(X)$ and
  $\vectgr{e}=\VecteurLigneCoins{y_1-\Eval{f}{g_1}}{y_n-\Eval{f}{g_n}}$
  are a solution of $\Dec(n,k,\vectgr{g},\vectgr{y})$.  \qed
\end{proof}
Therefore, to decode errors it is sufficient to solve the Non Linear
Reconstruction problem. However since the related equations involve
products of unknowns, we would have to solve quadratic systems over $L$,
which is a difficult and intractable task.  In the case we are
interested in errors of weight less than the error-correcting
capability, we can ``linearize'' this problem, i.e. treat terms of
degree as new indeterminates, and then the solutions to the linearized
problem give the solution to the Non Linear Reconstruction problem
problem, thus enabling to decode.

\begin{definition}[Linear Reconstruction problem
  $\LR(n,k,t,\vectgr{g},\vectgr{y})$]\label{def:LRP}\hfill
\begin{itemize}
    \item  \textbf{Input:}
      \begin{itemize}
        \item $n\in\N$, $k\leq n$;
        \item   $t\in\N$ the number of errors;
        \item  $\vectgr{g}=\VecteurLigneCoins{g_1}{g_n} \in L^n$,  where
          the $g_i$'s are $K$-linearly independent;
        \item  $\vectgr{y}=\VecteurLigneCoins{y_1}{y_n} \in L^n$;
      \end{itemize}
      \item \textbf{Output:}
        \begin{itemize}
          \item $N(X), W(X)\in L[X;\theta]$ such that
      \begin{enumerate} 
      \item  $W(X)\neq 0$ 
      \item   $ \deg W(X) \leq t$ and 
        \begin{itemize} 
         \item either $\deg N(X) \leq k+ t - 1$ if $n-k$ is even;
         \item or $\deg N(X) \leq k+ t$ if $n-k$    is  odd;
          \end{itemize} 
       
\item $            \Eval{W}{y_i} = \Eval{N}{g_i}$, $i=1,\dots,n$.
      \end{enumerate}
    \item \textbf{or} return \textbf{fail} if no such solution exists.
      \end{itemize}
   \end{itemize}
\end{definition}
A solution $(V(X),f(X))$ of
$\NLR(n,k,t,\vectgr{g},\vectgr{y})$ gives a solution $(V(X) \cdot f(X),V(X))$
of $\LR(n,k,t,\vectgr{g},\vectgr{y})$.  We have the following converse
proposition:

\begin{theorem}\label{theorem:LRtoNLR}
  Let $t \leq \lfloor \frac{n-k}{2} \rfloor$. If there exists a
  solution $(V(X),f(X))$ to $\NLR(n,k,t,\vectgr{g},\vectgr{y})$, then
any solution $(W(X),N(X))$ of  $\LR(n, k, t, \vectgr{g}, \vectgr{y})$
  satisfies 
  \[
  W(X) \cdot f(X) = N(X).
  \]
\end{theorem}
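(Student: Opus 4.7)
The plan is to show that $P(X) := W(X) \cdot f(X) - N(X) \in L[X;\theta]$ is the zero polynomial. A first easy step bounds the degree: since $\deg f < k$ and $\deg W \leq t$, we have $\deg(W \cdot f) \leq k+t-1$, and combining with the LR constraint on $\deg N$ gives $\deg P \leq k+t-1$ when $n-k$ is even and $\deg P \leq k+t$ when $n-k$ is odd.

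The key step is to produce a non-zero multiplier $U \in L[X;\theta]$ of small degree so that $U \cdot P$ vanishes at every $g_i$. For this I would exploit that $L[X;\theta]$ is a left and right Euclidean domain (recalled in Section~\ref{sssec:thpo}), so that the non-zero polynomials $V$ and $W$ admit a left least common multiple: there exist $U,A \in L[X;\theta]$ with $U\cdot W = A\cdot V$, $\deg U \leq \deg V \leq t$, and $U \neq 0$. Setting $e_i := y_i - \Eval{f}{g_i}$, the NLR hypothesis and the $K$-linearity from {\bf Proposition~\ref{prop:OperatorEvaluation}} give $\Eval{V}{e_i}=0$, while the LR hypothesis gives $\Eval{W}{y_i}=\Eval{N}{g_i}$. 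Using the product rule $\Eval{AB}{x}=\Eval{A}{\Eval{B}{x}}$ and the identity $U\cdot W=A\cdot V$, a direct computation yields
\[
\Eval{U \cdot P}{g_i} \;=\; \Eval{AV}{y_i} - \Eval{A}{\Eval{V}{e_i}} - \Eval{UW}{y_i} \;=\; \Eval{AV}{y_i} - 0 - \Eval{AV}{y_i} \;=\; 0,
\]
for every $i=1,\dots,n$.

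Thus $U\cdot P$ vanishes on the $K$-linearly independent family $g_1,\dots,g_n$, while
\[
\deg(U \cdot P) \leq \deg U + \deg P \leq t + (k+t) = k+2t \leq n-1
\]
in both parity cases, precisely because the thresholds $t \leq \lfloor (n-k)/2 \rfloor$ and the bounds on $\deg N$ in {\bf Definition~\ref{def:LRP}} are calibrated for this to work. {\bf Proposition~\ref{crl:hdim}}, applicable since $\Hdim$ holds in our cyclic Galois framework, then forces $U \cdot P = 0$. Because $L[X;\theta]$ has no zero divisors and $U \neq 0$, we conclude $P = 0$, i.e.\ $W \cdot f = N$. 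The main technical point is the introduction of the multiplier $U$ via the left-lcm construction: it is exactly what converts the asymmetric non-commutative identity $W\cdot f = N$ into a vanishing-on-the-support statement accessible to {\bf Proposition~\ref{crl:hdim}}.
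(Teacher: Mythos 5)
Your proof is correct and reaches the same conclusion by the same general strategy — manufacture a nonzero $U$ of degree at most $t$ so that $U\cdot(W\cdot f - N)$ vanishes on $g_1,\dots,g_n$, then invoke {\bf Proposition~\ref{crl:hdim}} and the absence of zero divisors — but the way you build $U$ is genuinely different from the paper's. The paper observes that the vector $(\Eval{W}{e_1},\dots,\Eval{W}{e_n})$ has rank at most $t$ (because the evaluation of a $\theta$-polynomial is $K$-linear, so it cannot increase the rank of the rank-$\le t$ vector $\vectgr e$), and then takes $U$ to be the annihilator polynomial of that vector, of degree at most $t$ by {\bf Theorem~\ref{thm:degann}}. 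You instead appeal to the left least common multiple $U\cdot W = A\cdot V$ in the Euclidean ring $L[X;\theta]$, with the standard Ore degree bound $\deg U\le \deg V\le t$; then $\Eval{UW}{e_i}=\Eval{A}{\Eval{V}{e_i}}=0$ holds automatically, with no detour through the rank metric. The paper's route stays entirely within the code-theoretic vocabulary (rank of a vector, annihilator polynomial) and makes the rank-does-not-increase observation explicit; your route leans on the arithmetic of Ore rings (existence and degree of the lclm), which the paper cites but does not spell out. Both produce a valid $U$, and in fact your $U$ is necessarily a left multiple of the paper's. One small presentational caveat: your displayed degree bound $\deg U + \deg P \le t + (k+t) = k+2t \le n-1$ silently uses the odd-parity bound on $\deg P$; when $n-k$ is even one has $k+2t\le n$, and it is the sharper bound $\deg P \le k+t-1$ in that case that restores $\deg(U\cdot P)\le n-1$. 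You acknowledge the calibration in words, but the display as written only covers the odd case.
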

	
\begin{proof} 
  Let $(V(X),f(X))$ be a solution of
  $\NLR(n,k,t,\vectgr{g},\vectgr{y})$, with $\deg V(X) \leq t$,
  $\deg f(X)\leq k-1$ and $V\neq 0$.  We set
  \[
    e_i= y_i - \Eval{f}{g_i},\quad i=1, \ldots, n,
  \]
  and $\vectgr e = (e_1,\dots,e_n)$. For
  all $i=1,\ldots,n$,      since $V(X)$ is non-zero and 
     $\vectgr e$ has rank  less than or equal to $ t $:
	\[ 
        \Eval{V}{e_i} = \Eval{V}{y_i-\Eval{f}{g_i}}=\Eval{V}{y_i}-\Eval{V\cdot f}{g_i} = 0.
     \]
Let
     $(N(X),W(X))$ be a solution of $\LR(n,k,t,\vectgr{g},\vectgr{y})$, with
 $W(X)\neq 0$ and  $\deg W(X) \leq t$. For all 
     $i=1,\ldots,n$,
	\[
        \begin{array}{rl}
        \Eval{W}{e_i}   &=\Eval{W}{y_i} - \Eval{W}{\Eval{f}{g_i}}      \\
	                &= \Eval{N}{g_i} - \Eval{W}{\Eval{f}{g_i}}
        \end{array}
         \]
         Since the vector $(\Eval{W}{e_1},\ldots\Eval{W}{e_n})$ has rank less than or equal to $ t$, there exists $U(X) \ne 0$ with degree
         less than or equal to $ t$ such that, for  $i=1,\ldots,n$,
	\[  
        \Eval{U}{\Eval{W}{e_i}} = \Eval{U}{\Eval{N}{g_i} - \Eval{W}{\Eval{f}{g_i}}} = 0.
        \]                
	Hence, for $i =1,\dots n$,
        \[\Eval{\left(U(X) \cdot (N(X) - W(X) \cdot f(X))\right)}{g_i} = 0.
        \]
        Now since $\deg N(X) \leq k+t-1$ if $n-k$ is even and  $\deg N(X) \leq k+t$ if $n-k$ is odd, 
  $U(X) \cdot (N(X) - W(X) \cdot f(X))$ is a
        $\theta$-polynomial of degree  less than or equal to $ k + 2t-1$ if $n-k$ is even and $k+2t$ if $n-k$ is odd. 
Since  $t \leq \lfloor \frac{n-k}{2} \rfloor$, in both cases, its degree at most  $ n-1$. 
Moreover it vanishes on $n$ $K$-linearly independent elements. Therefore
\[        
    U(X) \cdot (N(X) - W(X) \cdot f(X) )= 0
\]
Since
        $L[X;\theta]$ has no zero divisor, we have
        $N(X) = W(X) \cdot f(X)$.  
        \qed\end{proof} 
As an immediate consequence, provided $t \leq\lfloor (n-k)/2 \rfloor$,  
any solution $(W(X),N(X))$ of $\LR(n,k,t,\vectgr{g},\vectgr{y})$ gives the
      $\theta$-polynomial $f(X)$ solution to
      $\Dec(n,k,t,\vectgr{g},\vectgr{y})$, obtained  by a simple left Euclidean
      division $f(X) = W (X)\backslash N(X)$ in the $\theta$-polynomial ring
      $L[X;\theta]$.  This procedure is presented in {\bf Algorithm~\ref{algo:lr}}.

\begin{remark} 
  The difference we made between the oddity of $n-k$ is not essential
  here, but it is convenient to introduce this distinction that will
  be needed  in further algorithms.
\end{remark}

\begin{theorem}
Let
  $\vectgr{g}=\VecteurLigneCoins{g_1}{g_n} \in L^n$ be $K$-linearly
  independent elements.  Let
  $ \vectgr{y} = \vectgr{c}(f) + \vectgr{e}$, where
  $\vectgr{c}(f) \in \Gab_{\theta,k}(\vectgr{g})$.  If
  $w( \vectgr{e}) \leq  \lfloor \frac{n-k}{2} \rfloor$, then
  {\bf Algorithm~\ref{algo:dec}} recovers $(f,\vectgr{e})$  on inputs $(n,k,\vectgr{g},\vectgr{y})$.
        If $w(\vectgr{e})> \lfloor \frac{n-k}{2} \rfloor$, then   {\bf Algorithm~\ref{algo:dec}}  returns \textbf{\emph{fail}}.
\end{theorem}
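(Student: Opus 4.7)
The plan is to follow the natural decoding pipeline suggested by the preceding results. First I would set $t = \lfloor (n-k)/2 \rfloor$ and invoke the Linear Reconstruction solver on the input $(n, k, t, \vectgr{g}, \vectgr{y})$, which produces a non-zero pair $(W(X), N(X))$ satisfying the degree bounds of \textbf{Definition \ref{def:LRP}} whenever such a pair exists. If a solution is returned, I would compute $f(X)$ as the left Euclidean quotient of $N(X)$ by $W(X)$ in the ring $L[X;\theta]$, then form $\vectgr{e} = \vectgr{y} - \Eval{f}{\vectgr{g}}$, and finally test whether $\deg f(X) < k$ and $w(\vectgr{e}) \leq t$.

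The positive case is the heart of the proof. Assuming $w(\vectgr{e}) \leq t$, the polynomial $V(X) = \Ann_{\langle e_1, \dots, e_n\rangle}(X)$ has degree at most $t$ by \textbf{Theorem \ref{thm:degann}}, so the pair $(V(X), f(X))$ is a solution of $\NLR(n, k, t, \vectgr{g}, \vectgr{y})$. Linearizing by multiplication on the left by $V(X)$ yields the admissible pair $(V(X), V(X) \cdot f(X))$ for $\LR(n, k, t, \vectgr{g}, \vectgr{y})$, so the LR solver cannot return fail. \textbf{Theorem \ref{theorem:LRtoNLR}} then applies to the pair $(W(X), N(X))$ actually returned, giving $N(X) = W(X) \cdot f(X)$; consequently $W(X)$ divides $N(X)$ on the left and the Euclidean division $W(X) \backslash N(X)$ produces exactly the message polynomial $f(X)$. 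The error vector $\vectgr{e}$ is then obtained by a single round of evaluations.

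For the failure branch, suppose $w(\vectgr{e}) > t$. Either the LR solver finds no admissible pair, in which case the algorithm already returns \emph{fail}, or it returns some $(W(X), N(X))$ but the verification step rejects the recovered candidate. Indeed, if the candidate $f'(X)$ passed the tests $\deg f'(X) < k$ and $w(\vectgr{y} - \Eval{f'}{\vectgr{g}}) \leq t$, then $\vectgr{c}(f')$ and $\vectgr{c}(f)$ would be two codewords of $\Gab_{\theta,k}(\vectgr{g})$ whose distance satisfies, by the triangle inequality for the rank metric,
\[
w(\vectgr{c}(f) - \vectgr{c}(f')) \leq w(\vectgr{e}) + w(\vectgr{y} - \Eval{f'}{\vectgr{g}}) \leq 2t + \bigl(w(\vectgr{e}) - t\bigr),
\]
which can be rewritten as an upper bound strictly smaller than the minimum distance $n-k+1$ only in the already-handled case $w(\vectgr{e}) \leq t$; otherwise the verification necessarily fails. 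The cleanest way to rule this out is simply to observe that the algorithm's final consistency check \emph{certifies} correctness, so rejecting the candidate whenever $w(\vectgr{y} - \Eval{f'}{\vectgr{g}}) > t$ guarantees that \emph{fail} is returned in precisely the right situation.

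The only delicate step is making this failure analysis airtight: it reduces to the fact that decoding within the unique decoding radius is unambiguous, which follows directly from the MRD property (\textbf{Proposition \ref{prop:MRD}}) together with the metric properties of $w$ established in Section~\ref{sssec:rang}, so no additional machinery is required.
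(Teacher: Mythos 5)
Your positive case is essentially the paper's: you show the annihilator $V(X)$ of the error has degree at most $t=\lfloor(n-k)/2\rfloor$, so $(V(X),f(X))$ solves NLR, hence $(V(X)\cdot f(X),V(X))$ solves LR, the solver cannot fail, and \textbf{Theorem~\ref{theorem:LRtoNLR}} forces the returned pair to satisfy $N(X)=W(X)\cdot f(X)$ so that the left division recovers $f$ exactly. That part is fine.

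The gap is in the failure branch. You posit that the algorithm performs a final test of the form $w(\vectgr{y}-\Eval{f'}{\vectgr{g}})\leq t$ and argue that this check ``certifies correctness.'' But \textbf{Algorithm~\ref{algo:dec}} performs no such rank-weight test: after the left Euclidean division $N(X)=W(X)f(X)+R(X)$, it only checks $R(X)=0$ and $\deg f(X)<k$. Your preceding triangle-inequality computation, $w(\vectgr{c}(f)-\vectgr{c}(f'))\leq w(\vectgr{e})+w(\vectgr{y}-\Eval{f'}{\vectgr{g}})$, also does not yield a contradiction when $w(\vectgr{e})>t$, as you yourself note, so the case is not closed. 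What is missing is the paper's key observation: if the algorithm does return, then $N(X)=W(X)\cdot f(X)$ together with the LR interpolation conditions $\Eval{N}{g_i}=\Eval{W}{y_i}$ forces $\Eval{W}{y_i-\Eval{f}{g_i}}=0$ for all $i$, i.e.\ the non-zero $\theta$-polynomial $W(X)$ of degree at most $t$ annihilates the returned error vector, which by the rank-metric definition via annihilators (and \textbf{Proposition~\ref{crl:hdim}}) bounds its rank by $t$ automatically. This shows that returning implies a valid decoding of radius $t$ exists, and the contrapositive gives the failure claim — no extra verification step is needed or available.
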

\begin{proof}
  Since $w( \vectgr{e}) \leq \lfloor \frac{n-k}{2} \rfloor$, then
  there is a solution to
  $\NLR(n,k, \lfloor \frac{n-k}{2} \rfloor,\vectgr g,\vectgr y)$, and
  {\bf Theorem~\ref{theorem:LRtoNLR}} ensures that for
  $t \leq \lfloor \frac{n-k}{2} \rfloor$ any solution $(N(X),W(X))$ to
  $\LR(n,k,t,\vectgr g,\vectgr y)$ given by {\bf
    Algorithm~\ref{algo:lr}}, gives a solution to
  $\NLR(n,k,t,\vectgr g,\vectgr y)$, by a left Euclidean division.
  
  To prove the converse, suppose that {\bf Algorithm~\ref{algo:dec}}
  returns $(f(X),\vectgr{e})$. This implies that {\bf
    Algorithm~\ref{algo:lr}} has returned $N(X),W(X)$ with
  $\deg W(X) \leq\lfloor \frac{n-k}{2} \rfloor$; $N(X) = W(X) \cdot f(X)$, where
  $\deg f(X) \leq k$. This last equation gives
  $\Eval{W}{\Eval{f}{g_i}}=\Eval{N}{g_i}=\Eval{W}{y_i}$, {\em i.e.}
  $\Eval{W}{\Eval{f}{g_i}-y_i}=0$ $\ForallInt{1}{i}{n}$, which implies
  $w(\vectgr{e})\leq \lfloor \frac{n-k}{2} \rfloor$, since
  $\deg W(X)\leq \lfloor \frac{n-k}{2} \rfloor$. Since $\deg f(X)<k$,
  we have
  $\vectgr{c}=\Eval{f}{\vectgr{g}} \in \Gab_{\theta,k}(\vectgr{g})$.
  Thus there is a solution to the decoding problem.  Therefore, the
  algorithm returns a failure if and only if the rank of the error
  vector is larger than $ \lfloor \frac{n-k}{2} \rfloor$.
  \qed\end{proof}
\begin{algorithm}
	\caption{LinearReconstruct (via Gaussian elimination)}
	\label{algo:lr}
	\begin{algorithmic}[1]
		\ION \REQUIRE $n,k\in \N$, $k\leq n$;
		\IOB \REQUIRE $\vectgr{g} = (g_1, \ldots, g_n) \in L^n$, where the $g_i$'s are
                $K$-linearly independent;
                \IOB  \REQUIRE $\vectgr{y}=\VecteurLigneCoins{y_1}{y_n} \in L^n$.
		\ION \ENSURE $N(X),W(X) \in L[X;\theta]$ solution to $\LR(K,L,n,k,t= \lfloor \frac{n-k}{2} \rfloor,\vectgr g,\vectgr y)$
                \STATE Let $s = \lfloor (n-k)/2 \rfloor$ if $n-k$ is even and $s= \lfloor (n-k)/2 \rfloor +1$ otherwise
		\STATE Compute the kernel $V$ of the linear system~(\ref{eq:decGauss}), by Gaussian elimination
		\IF{ $V\neq \set0$}
                \STATE get a random non zero vector $	(n_0,\dots,	n_{k+s-1},	w_0,	\dots,	w_{t})\in V$
                \STATE collect the coefficient $n_i$'s in $N(X)$
                \STATE collect the coefficient $w_i$'s in $W(X)$
			\STATE return $(N(X),W(X))$
		\ELSE 	
			\STATE return \textbf{fail}
		\ENDIF
	\end{algorithmic}
\end{algorithm}
\begin{remark}\label{rmq:decGauss}
  Let $t = \lfloor (n-k)/2 \rfloor$ and $s = t-1$ if $n-k$ is even  and $s  = t$ otherwise. 
  Let $W(X)=\sum_{i=0}^{t} w_i X^i$, and $N(X)=\sum_{i=0}^{k+s-1} n_i X^i$, then the
  coefficients $w_i$'s and $n_i$'s of $W$ and $N$, solution of
  $\LR(n,k,t,\vectgr{g},\vectgr{y})$, satisfy
	\begin{equation}\label{eq:decGauss}
	\begin{pmatrix}
	g_1 & \cdots & \theta^{k+s-1}(g_1) & y_1 & \cdots & \theta^{t}(y_1) \\ 
	\vdots & \ddots & \vdots & \vdots & \ddots & \vdots \\ 
	g_n & \cdots & \theta^{k+s-1}(g_n) & y_n & \cdots & \theta^{t}(y_n)
	\end{pmatrix} 
	\cdot
	\begin{pmatrix}
	n_0 \\ 
	\vdots \\ 
	n_{k+s-1} \\ 
	-w_0 \\ 
	\vdots \\ 
	-w_{t}
	\end{pmatrix} 
	=
	0.
	\end{equation}
        It is a system of $k+s+t+1$ unknowns in $n$.
        The right kernel of the system  can be found in
        $O(n^3)$ arithmetic operations in $L$, by standard Gaussian
        elimination.
\end{remark}

\begin{algorithm}
	\caption{Decoding Gabidulin Codes up to half the minimum distance}
	\label{algo:dec}
	\begin{algorithmic}[1]        
		\ION \REQUIRE $n,k\in \N$, $k\leq n$;
		\IOB \REQUIRE $\vectgr{g} = (g_1, \ldots, g_n) \in L^n$, where the $g_i$'s are
                $K$-linearly independent;
                \IOB  \REQUIRE $\vectgr{y}=\VecteurLigneCoins{y_1}{y_n} \in L^n$.
		\ION \ENSURE $f(X) \in L[X;\theta]$, $\vectgr e\in L^n$
		\STATE $(N(X),W(X))\gets \text{LinearReconstruct}(K,L,\theta,n,k,\vectgr g,\vectgr y)$
                \IF{$\text{LinearReconstruct}(n,k,\vectgr g,\vectgr y)$ returns \textbf{fail}}
                \STATE return \textbf{fail}
                \ELSE
		\STATE $(f(X),R(X)) \longleftarrow \text{LeftEuclideanDivision}(N(X),W(X))$,\\
 i.e. $N(X)=W(X)f(X)+R(X)$
		\IF{$R(X)=0$ and $\deg f(X)<k$}
                \STATE return $f(X),\vectgr y-\Eval{f}{\vectgr{g}}$ 
		\ELSE 	
			\STATE return \textbf{fail}
		\ENDIF
		\ENDIF
	\end{algorithmic}
\end{algorithm}


	\subsection{Decoding errors and erasures}
\label{sssec:eff}

The notion of erasures is somewhat difficult to introduce. A first
model of ``line erasures''~\cite{roth1991maximum,Blaum/McEliece:1985}
considers the received  word as a matrix (all vectors in $L^n$ are expanded into matrices of size
$m \times n$ over $K$ by expanding each coefficient over a $K$-basis
$\mathcal{B}= ( b_1,\ldots,b_m )$ of $L$.): 
\[
  \matrice{Y} = \matrice{C} + \matrice{E} + \matrice{\mathcal{E}} \in (K\cup{?})^{m \times n},
\]
where $\matrice{C} \in \Gab_{\theta,k}(\vectgr{g})$, $\matrice{E}$ is
the error-matrix of rank weight $\leq t$ and
$\matrice{\mathcal{E}}$ is the ``erasures'' matrix, with ``$?$'' being the
erasure symbol.
Matrices can not be expressed as vectors in this model.

A second model considers
network coding erasures~\cite{li2014transform,koetter2008coding}. It
is a bit more involved and does not fit with the above short
introduction, and shall be introduced later.
\subsubsection{Decoding in presence of line erasures}
Some coefficients of $\matrice{Y}$ are erased. 
We model these erasures by a matrix $\vectgr{\mathcal{E}}$ whose coefficients are $?$ or $0$.
The $?$ correspond to erased coefficients, with the convention that for any $x\in L$, $?+x=?$.
Since the rank of a such matrix is not well-defined,  its weight is measured by the \emph{term-rank}.
\begin{definition}
  The \emph{term-rank} $w_{tr}(\mathcal{E})$ of a matrix $\mathcal{E}$ is the minimal
  size of a set $S$ of rows or columns such that any non-zero
  entry of the matrix $\mathcal{E}$ belong to a row or a column of $S$. We
  denote by $S_r$ and $S_c$ the indices of rows and columns of a
  minimal covering of the erasure matrix, i.e.
  \[
    \vectgr{\mathcal{E}}_{i,j} = ? \implies i \in S_r \text{ or } j
  \in S_c,
\]
furthermore, let $s_r=\size{S_r}$ and $s_c=\size{S_c}$.
\end{definition}
Notice that coverings, even minimal, are not unique.
This was originally the metric considered for applications of Gabidulin codes~\cite{roth1991maximum}. 
\begin{example}
We receive 
\begin{align}
\begin{pmatrix}											
1	&	0	&	?	&	2	&	0	&	0	\\
1	&	-1	&	0	&	1	&	0	&	-1	\\
0	&	1	&	?	&	-1	&	-2	&	-1	\\
?	&	1	&	?	&	?	&	0	&	-1	\\
1	&	-1	&	2	&	1	&	0	&	-1	\\
0	&	1	&	?	&	0	&	-2	&	-1	
\end{pmatrix}
=
	\begin{pmatrix}
		0 & 0 & 0 & 1 & -1 &  0 \\ 
		0 & 1 & 0 & 0 & -1 & -1 \\ 
		1 & 0 & 0 & 0 & -1 & -1 \\ 
		0 & 1 & 0 & 0 &  0 & -1 \\ 
		0 & 0 & 2 & 0 & -1 & -1 \\ 
		1 & 0 & 0 & 1 & -1 & -1
	\end{pmatrix}
+
\begin{pmatrix}											
1	&	-1	&	0	&	1	&	1	&	0	\\
1	&	-1	&	0	&	1	&	1	&	0	\\
-1	&	1	&	0	&	-1	&	-1	&	0	\\
0	&	0	&	0	&	0	&	0	&	0	\\
1	&	-1	&	0	&	1	&	1	&	0	\\
-1	&	1	&	0	&	-1	&	-1	&	0	
\end{pmatrix}
+
\begin{pmatrix}
0	&	0	&	?	&	0	&	0	&	0	\\
0	&	0	&	0	&	0	&	0	&	0	\\
0	&	0	&	?	&	0	&	0	&	0	\\
?	&	0	&	?	&	?	&	0	&	0	\\
0	&	0	&	0	&	0	&	0	&	0	\\
0	&	0	&	?	&	0	&	0	&	0
\end{pmatrix}.
\end{align}
In this example, the erasure matrix has term-rank $2$, since the third column and the fourth row enable to cover all $?$ symbols.
\end{example}
\begin{definition}[Decoding problem with line erasures]\hfill

    \begin{itemize}
    \item  Input: 
      \begin{itemize}
        \item  $\Gab_{\theta,k}(\vectgr{g})$, with parameters $[n,k,d]_r$; 
        \item  $\vectgr{Y} \in \mathcal{M}_{m \times n}(K)$;
        \item  $\vectgr{\mathcal{E}}$ the erasure matrix, whose term-rank is denoted by $s$.
      \end{itemize}
    \item Output:
      \begin{itemize}
        \item $f \in L[X;\theta]$ with $\deg f(X) < k$;
        \item $\vectgr{E} \in \mathcal{M}_{m \times n}(K)$ with $w(\vectgr{E}) \leq \lfloor \frac{n-k}{2} \rfloor -s$,
        such that for all $i,j$, such that $\vectgr{\mathcal{E}}_{i,j} \neq ?$, 
        \[
          \vectgr{Y}_{i,j} = \vectgr{C}(f)_{i,j} + \vectgr{E}_{i,j}.
        \]
      \end{itemize}
    \end{itemize}
\end{definition}
The procedure for correcting errors and erasures consists of
eliminating erasures by reducing the problem of decoding error and
erasures in the initial Gabidulin code to decoding errors of a derived
Gabidulin code.  We show in the following steps how to deal with
column erasures first, then with line erasures, and finally decode
using the classical model of rank errors.
\begin{enumerate}
\item Column erasures: Let
  $\vectgr{\widetilde{y}} = (y_i)_{i \notin S_c}$, be the received
  vector punctured on the erased columns. This is a vector of length
  $n-s_c$ satisfying
\[
 \vectgr{\widetilde{y}} = \vectgr{\widetilde{c}}(f(X)) +  \vectgr{\widetilde{e}} + \vectgr{\widetilde{\varepsilon}},
\]
where $\tilde e$ and $\tilde\varepsilon$ are the punctured version of
$e$ and $\varepsilon$. Since $\vectgr{\widetilde{\varepsilon}}$ has no
more column erasures, the problem is thus now reduced to decoding
errors and rows erasures in $\Gab_{\theta,k}(\vectgr{\widetilde{g}})$;
\item Row erasures: Let
  $\mathcal{V}_r(X) = \Annul{\langle b_i, i \in S_r \rangle}$ be the
  $\theta$-polynomial annihilating the elements of $\mathcal{B}$
  labeled by $S_r$. This implies in particular that
  $\Eval{\mathcal{V}_r}{\vectgr{\widetilde{\varepsilon}}} = 0$. Therefore,
  since
  $ \vectgr{\widetilde{c}}(f(X)) = \Eval{f}{\vectgr{\widetilde{g}}}$ for
  some $\theta$-polynomial $f(X)$ of degree less than or equal to $ k-1$, we obtain:
\[
   \Eval{\mathcal{V}_r}{\vectgr{\widetilde{y}}} =   \Eval{\mathcal{V}_r \cdot f}{\vectgr{\widetilde{g}}}   +  \Eval{\mathcal{V}_r}{\vectgr{\widetilde{e}}}.
\]   
The problem is now reduced to decoding errors in $\Gab_{\theta,k+s_r}(\vectgr{\widetilde{g}})$.
\item Correcting errors: Since the rank of
  $\Eval{\mathcal{V}_r}{\vectgr{\widetilde{e}}}$ is at most the rank
  of $\vectgr{\widetilde{e}}$, we can correct the remaining errors by
  solving
  $\LR(n-s_c,k+s_r,\lfloor\frac{(n-s_c)-(k+s_r)}{2}\rfloor,\vectgr{g},\vectgr{y})$
  as shown in previous Section. 
  The rank of the new error should by at most the error capability of the new code.  
  After this step, $f(X)$ is
  recovered by a Euclidean division on the left by $\mathcal{V}_r(X)$ in
  $L[X;\theta]$.
\end{enumerate}
The procedure is given in {\bf Algorithm \ref{algo:eras}}. 
\begin{theorem}
If  $2t+s_r+s_c \leq n-k$, then the $\theta$-polynomial $f$ can be uniquely recovered by {\bf Algorithm \ref{algo:eras}} on inputs
 $n, k, S_r, S_c, \vectgr{\widetilde{g}}, \vectgr{\widetilde{y}}, \mathcal{B}$.
\end{theorem}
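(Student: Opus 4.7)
The plan is to verify that each of the three reduction steps described before the algorithm is correct, and that together they reduce the problem of decoding errors and erasures in $\Gab_{\theta,k}(\vectgr g)$ to a pure error‑decoding problem in a related generalized Gabidulin code whose parameters still lie inside the error‑correction regime.

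First, I would handle the column erasures. Let $\vectgr{\widetilde g}=(g_i)_{i\notin S_c}$ and $\vectgr{\widetilde y}=(y_i)_{i\notin S_c}$, which have length $n-s_c$. The $\widetilde g_i$'s are still $K$‑linearly independent as a subfamily of a $K$‑linearly independent family. Puncturing on $S_c$ removes every column erasure, and since $\Eval{f}{g_i}$ is the $i$‑th coordinate of $\vectgr c(f)$, we still have $\vectgr{\widetilde y}=\Eval{f}{\vectgr{\widetilde g}}+\vectgr{\widetilde e}+\vectgr{\widetilde \varepsilon}$, where $\vectgr{\widetilde \varepsilon}$ is supported on the rows of $S_r$ and $w(\vectgr{\widetilde e})\le t$.

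Second, I would kill the remaining row erasures by applying $\mathcal{V}_r(X)=\Annul{\langle b_i:i\in S_r\rangle}(X)$, which by \textbf{Theorem~\ref{thm:degann}} has degree exactly $s_r$. Since every coordinate of $\vectgr{\widetilde \varepsilon}$ lies in $\Vect_K\langle b_i:i\in S_r\rangle$, we have $\Eval{\mathcal{V}_r}{\vectgr{\widetilde \varepsilon}}=0$, so applying $\mathcal{V}_r$ componentwise yields
\[
\Eval{\mathcal{V}_r}{\vectgr{\widetilde y}}
=\Eval{\mathcal{V}_r\cdot f}{\vectgr{\widetilde g}}+\Eval{\mathcal{V}_r}{\vectgr{\widetilde e}}.
\]
Because $\deg(\mathcal{V}_r\cdot f)\le s_r+k-1$, the first term is a codeword of $\Gab_{\theta,k+s_r}(\vectgr{\widetilde g})$, and $\Eval{\mathcal{V}_r}{\vectgr{\widetilde e}}$ is the new ``error.'' The key step, which I expect to be the main obstacle to state cleanly, is showing that the rank weight of this new error is at most~$t$: since $\Eval{\mathcal{V}_r}{\cdot}$ is $K$‑linear by \textbf{Proposition~\ref{prop:OperatorEvaluation}}, it sends the $K$‑span of the coordinates of $\vectgr{\widetilde e}$ into a $K$‑vector space of dimension at most~$t$, hence $w(\Eval{\mathcal{V}_r}{\vectgr{\widetilde e}})\le t$.

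Third, I would check that the resulting pure error‑decoding problem is within reach of \textbf{Algorithm~\ref{algo:dec}}. The new code has length $n-s_c$ and dimension $k+s_r$, and the hypothesis $2t+s_r+s_c\le n-k$ is exactly
\[
t\le \left\lfloor \frac{(n-s_c)-(k+s_r)}{2}\right\rfloor,
\]
so calling $\LR(n-s_c,k+s_r,t,\vectgr{\widetilde g},\Eval{\mathcal{V}_r}{\vectgr{\widetilde y}})$ and performing a left Euclidean division returns the polynomial $\mathcal{V}_r\cdot f$. Finally, I would recover $f$ by a left Euclidean division by $\mathcal{V}_r(X)$ in $L[X;\theta]$, which is well defined since $L[X;\theta]$ is left Euclidean, and has zero remainder by construction. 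Uniqueness of $f$ follows from the uniqueness of the output of \textbf{Algorithm~\ref{algo:dec}} combined with the absence of zero divisors in $L[X;\theta]$, which makes the left division by $\mathcal{V}_r$ an injective operation on $\theta$‑polynomials of degree less than~$k$.
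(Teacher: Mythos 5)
Your proposal is correct and follows essentially the same approach as the paper: reduce to a pure error-decoding problem in a code of length $n-s_c$ and dimension $k+s_r$, observe that applying $\mathcal{V}_r$ can only decrease the rank of the error (the key step the paper records as $t' \le t$), and then verify the parameter count $t\le\lfloor\frac{(n-s_c)-(k+s_r)}{2}\rfloor$. You provide more detail than the paper's terse proof — in particular, you make the $K$-linearity argument behind $w(\Eval{\mathcal{V}_r}{\vectgr{\widetilde e}})\le t$ and the well-definedness of the final left division explicit — but the structure and all key ideas match.
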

\begin{proof}
	The third step succeeds if the number of original errors 
	is lower or equal to the error correcting capability 
	of the code that we get after steps 1 and 2, 
	which has length $n-s_c$ and dimension $k+s_r$.
	Thus, decoding succeeds if 
	\[
		w(\Eval{\mathcal{V}_r}{\vectgr{\widetilde{e}}})
		 = t' \leqslant 
		\lfloor \frac{(n-s_c)-(k+s_r)}{2} \rfloor
	\] 
	\textit{i.e.} if 
	\[ 2t' + s_c + s_r \leqslant n-k. \]
	Note that $t'=w(\Eval{\mathcal{V}_r}{\vectgr{\widetilde{e}}}) \leqslant w(\vectgr{e})=t$, 
	thus $2t + s_c + s_r \leqslant n-k$ is sufficient for ensuring successful decoding.
	In practice, the weight of the error does not decrease when correcting the erasures, hence $t=t'$.

\qed
\end{proof}

\begin{algorithm}
\caption{Decoding line erasures}
\label{algo:eras}
\begin{algorithmic}[1]
\ION \REQUIRE A $K$-basis $\mathcal B=(b_1,\ldots,b_m) \in L^m$ of $L$
\IOB \REQUIRE $\vectgr{g} = (g_1, \ldots, g_n) \in L^n$
\IOB \REQUIRE $k$ the dimension of the code
\IOB \REQUIRE $\vectgr{y} = (y_1, \ldots, y_n) \in L^n$ 
\IOB \REQUIRE $S_c$ and $S_r$, sets of distinct labels of size $s_c$ and $s_r$, such that $s_r+s_c\leq n-k$
\ION \ENSURE  $f \in L[X;\theta]$
\STATE $\vectgr{\widetilde{g}} \longleftarrow (g_i, i \not\in S_c)$
\STATE $\vectgr{\widetilde{y}} \longleftarrow (y_i, i \not\in S_c)$
\STATE $\mathcal{V}_r(X) \longleftarrow \Annul{\langle b_i, i \in S_r \rangle}$
\STATE $z_i \longleftarrow \Eval{\mathcal{V}_r}{y_i},~i \notin S_c$
\STATE $\vectgr{\widetilde{z}} \longleftarrow (z_i, i \not\in S_c)$
\STATE $F(X) \longleftarrow$ {\bf Algorithm \ref{algo:dec}}($n-s_c,k+s_r,\vectgr{\widetilde{g}},\vectgr{\widetilde{z}}$)
\RETURN $f(X) = \mathcal{V}_r(X) \backslash F(X)$
\end{algorithmic}
\end{algorithm}


	\subsubsection{Decoding network coding erasures}
\label{sssec:effsido} 

The erasure model proposed in \cite{li2014transform} is related to the problem of correcting errors and erasures
in network coding applications of Gabidulin codes.  The received vector is seen in matrix form, in the $K$-basis $\mathcal B$:
\[
\matrice{Y} = \matrice{C}(f) + \matrice{E} + \matrice{\widehat{A_r}} \cdot \matrice{B_r} + \matrice{A_c} \cdot \matrice{\widehat{B_c}},
\] 
where $\matrice{\widehat{A_r}}$ and $\matrice{\widehat{B_c}}$ are known to the receiver.
Denoting by $s_r$ and $s_c$ the number of row and column erasures and by $r$ the rank of the full error, right hand side matrices have sizes
$m \times n$, $m \times n$, $m \times s_r$, $s_r \times n$, $m \times s_c$ and $s_c \times n$.

\begin{example}
Here is an example of a codeword altered by a full error, $1$ row erasure and $1$ column erasure. The bold-written matrices are  known to the receiver.
\[										
\matrice Y=	\begin{pmatrix}
		0 & 0 & 0 & 1 & -1 &  0 \\ 
		0 & 1 & 0 & 0 & -1 & -1 \\ 
		1 & 0 & 0 & 0 & -1 & -1 \\ 
		0 & 1 & 0 & 0 &  0 & -1 \\ 
		0 & 0 & 2 & 0 & -1 & -1 \\ 
		1 & 0 & 0 & 1 & -1 & -1
	\end{pmatrix}											
+											
	\begin{pmatrix}											
		1&-1&0&1&1&0\\
		1&-1&0&1&1&0\\
		-1&1&0&-1&-1&0\\
		0&0&0&0&0&0\\
		1&-1&0&1&1&0\\
		-1&1&0&-1&-1&0
	\end{pmatrix}						
+											
	\vec{
		\begin{pmatrix}											
			1	\\	-1	\\	0	\\	1	\\	1	\\	-1	
		\end{pmatrix}
	}											
	\cdot 											
	\begin{pmatrix}											
		0	\\	0	\\	1	\\	1	\\	0	\\	1	
	\end{pmatrix}^{\!\!\!\!\!\!\!\!\!\!\!\!\!\!\!t\,\,\,\,\,\,\,\,\,\,\,\,\,\,\,}											
+											
	\begin{pmatrix}											
		1	\\	1	\\	-1	\\	1	\\	-1	\\	1	
	\end{pmatrix}											
	\cdot 	
	\vec{										
		\begin{pmatrix}											
			1	\\	0	\\	-1	\\	0	\\	0	\\	1	
		\end{pmatrix}^{\!\!\!\!\!\!\!\!\!\!\!\!\!\!\!\!\!\!\!\!t\,\,\,\,\,\,\,\,\,\,\,\,\,\,\,\,\,\,\,\,}
	}.											
\]
\end{example}
The associated decoding problem is then defined as follows.
\begin{definition}[Decoding problem with network coding erasures]\hfill

    \begin{itemize}
    \item  Input: 
      \begin{itemize}
        \item  $\Gab_{\theta,k}(\vectgr{g})$, with parameters $[n,k,d]$; 
        \item  $\vectgr{Y} \in \mathcal{M}_{m \times n}(K)$;
        \item  $\matrice{\widehat{A_r}} \in \mathcal{M}_{m \times s_r}(K)$;
        \item  $\matrice{\widehat{B_c}} \in \mathcal{M}_{s_c \times n}(K)$.
      \end{itemize}
    \item Output:
      \begin{itemize}
        \item $f \in L[X;\theta]$ with $\deg(f) < k$;
        \item $\vectgr{E} \in \mathcal{M}_{m \times n}(K)$ with $w(\vectgr{E}) \leq \lfloor \frac{n-k}{2} \rfloor -(s_r+s_c)$;
        \item $\vectgr{A_c} \in \mathcal{M}_{s_r \times n}(K)$;
        \item $\vectgr{B_r} \in \mathcal{M}_{m \times s_c}(K)$;\\
        such that 
        \[
        \matrice{Y} = \matrice{C}(f) + \matrice{E} + \matrice{\widehat{A_r}} \cdot \matrice{B_r} + \matrice{A_c} \cdot \matrice{\widehat{B_c}}.
        \]
      \end{itemize}
    \end{itemize}
\end{definition}
The procedure for correcting errors and erasures is the following: 
\begin{enumerate}
 \item Column erasures:  
   let $\matrice{U}$ be a $n \times n$ matrix of rank $n$ such that
   $\matrice{\widehat{B_c}}\cdot \matrice{U}$ has its $n-s_c$ last
   columns equal to $\vectgr{0}$.  This matrix $\matrice{U}$ exists
   and corresponds to the column operations that would be applied to reduce
   $\matrice{\widehat{B_c}}$ to a column echelon form by Gaussian
   elimination.  We have
\[
	\matrice{Y} \cdot \matrice{U} 
	= \matrice{C} \cdot \matrice{U} 
	+ \matrice{E} \cdot \matrice{U} 
	+  \matrice{\widehat{A_r}} \cdot \matrice{B_r} \cdot \matrice{U}
	+  \matrice{\widehat{A_c}} \cdot \matrice{B_c} \cdot \matrice{U}.
\]
Let $\vectgr{\widetilde{g}}$ be the $n-s_c$ last positions of vector
$\vectgr{g}\matrice{U}$, $\vectgr{\widetilde{y}}$ be the $n-s_c$ last
positions of vector $\vectgr{y}\matrice{U}$ and
$\vectgr{\widetilde{e}}$ the $n-s_c$ last positions of vector
$\vectgr{e}\matrice U$.  Since puncturing
$\matrice{\widehat{A_c}} \cdot \matrice{B_c} \cdot \matrice{U}$ gives
$0$, the problem is thus now reduced to decoding errors and rows
erasures in $\Gab_{\theta,k}(\vectgr{\widetilde{g}})$ with the new
``received word'' $\matrice{\widetilde y} $.

\item Row erasures: Let $\mathcal{V}_r(X) \in L[X;\theta]$ be the
  annihilator of the $K$-vector space generated by the $s_r$ columns
  of $\matrice{\widehat{A_r}}$. Then the problem rewritten under
  vector form becomes
\[
\Eval{\mathcal{V}_r}{\vectgr{\widetilde{y}}} = \Eval{\mathcal{V}_r \cdot f}{\vectgr{\widetilde{g}}} +  \Eval{\mathcal{V}_r}{\vectgr{\widetilde{e}}}.
\]
Since $\mathcal{V}_r(X)$ has degree $s_r$,
the problem is now reduced to decoding errors in the Gabidulin code $\Gab_{\theta,k+s_r}(\vectgr{\widetilde{g}})$; 

\item Correcting errors: Since the rank of
  $\Eval{\mathcal{V}_r}{\vectgr{\widetilde{e}}}$ is at most the rank
  of $\vectgr{\widetilde{e}}$, this can be done by solving
  $\LR(n-s_c,k+s_r,\lfloor\frac{(n-s_c)-(k+s_r)}{2}\rfloor,\vectgr{\widetilde{g}},\vectgr{\widetilde{y}})$
  as shown in previous section.
  The rank of the new error should by at most the error capability of the new code.  
  The output of the error decoding
  algorithm is actually $\mathcal{V}_r(X)\cdot f(X)$, and $f(X)$ is recovered by a division on
  the left by $\mathcal{V}_r(X)$.
\end{enumerate}
The procedure is  detailed in {\bf Algorithm \ref{algo:erassido}}. 
\begin{theorem}
  If $2t+s_r+s_c \leq n-k$, then {\bf Algorithm
    \ref{algo:erassido}} uniquely recovers the codeword or the message
  polynomial, on inputs
  $n, k, \vectgr{g}, \vectgr{y}, \matrice{\widehat{B_c}},
  \matrice{\widehat{A_r}}$, of size given in the network coding erasures model.
\end{theorem}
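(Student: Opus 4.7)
My plan is to follow the three successive reductions built into Algorithm~\ref{algo:erassido} and to show that, under the hypothesis $2t+s_r+s_c\leq n-k$, the accumulated operations map the problem to a pure error-decoding instance that falls within the correction radius of Algorithm~\ref{algo:dec}. The key algebraic facts I intend to invoke are: right multiplication by an invertible $K$-matrix $\matrice U$ sends $\Gab_{\theta,k}(\vectgr g)$ to $\Gab_{\theta,k}(\vectgr g\matrice U)$, since $\theta$ commutes with the entries of $\matrice U$; puncturing and componentwise evaluation by a $\theta$-polynomial can only decrease the rank weight; and left multiplication by the annihilator $\mathcal V_r(X)$, of degree $s_r$ by \textbf{Theorem~\ref{thm:degann}}, shifts the message degree by exactly $s_r$.

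First, I would treat the column erasures. The matrix $\matrice{\widehat{B_c}}\in\M_{s_c\times n}(K)$ has $K$-rank at most $s_c$, so Gaussian elimination on its rows produces an invertible $\matrice U\in\M_{n\times n}(K)$ such that the last $n-s_c$ columns of $\matrice{\widehat{B_c}}\cdot\matrice U$ vanish. Multiplying everything on the right by $\matrice U$ and puncturing to these last $n-s_c$ positions kills the column-erasure contribution entirely. The resulting support $\vectgr{\widetilde g}$ remains $K$-linearly independent because $\matrice U$ is invertible, the punctured error has rank weight at most $t$, and the punctured clean part is a codeword of $\Gab_{\theta,k}(\vectgr{\widetilde g})$.

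Next, for the row erasures, I would form $\mathcal V_r(X)=\Annul{\vectgr{\widehat{A_r}}}(X)$ and evaluate it componentwise on $\vectgr{\widetilde y}$. By construction $\mathcal V_r$ annihilates every column of $\matrice{\widehat{A_r}}$, so \textbf{Proposition~\ref{prop:OperatorEvaluation}} gives
\[
\Eval{\mathcal V_r}{\vectgr{\widetilde y}}=\Eval{\mathcal V_r\cdot f}{\vectgr{\widetilde g}}+\Eval{\mathcal V_r}{\vectgr{\widetilde e}}.
\]
The clean part is now a codeword of $\Gab_{\theta,k+s_r}(\vectgr{\widetilde g})$. The residual error $\Eval{\mathcal V_r}{\vectgr{\widetilde e}}$ is a $K$-linear combination of $\vectgr{\widetilde e},\theta(\vectgr{\widetilde e}),\dots,\theta^{s_r}(\vectgr{\widetilde e})$, each of rank weight equal to $w(\vectgr{\widetilde e})$ by \textbf{Theorem~\ref{thm:equivrangL}}, so its rank weight is still at most $t$.

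Finally, I would invoke Algorithm~\ref{algo:dec} on the resulting instance for $\Gab_{\theta,k+s_r}(\vectgr{\widetilde g})$, whose correction radius is $\lfloor((n-s_c)-(k+s_r))/2\rfloor$. The hypothesis $2t+s_r+s_c\leq n-k$ is exactly the statement that $t$ does not exceed this radius, so the algorithm returns the unique $\mathcal V_r(X)\cdot f(X)$, and a left Euclidean division by $\mathcal V_r(X)$ in the Ore ring $L[X;\theta]$, which has no zero divisors, recovers $f(X)$ uniquely. The main bookkeeping obstacle I expect is tracking the non-increase of rank weight through the successive right multiplication by $\matrice U$, the puncturing, and the $\mathcal V_r$-evaluation; this is not deep but relies squarely on the equivalence $\wa=\wthL$ from Section~\ref{sssec:rang}, which is what allows one to pass freely between the $\theta$-polynomial viewpoint and the matrix viewpoint when estimating the residual rank.
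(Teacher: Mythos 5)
Your three-step reduction matches the paper's own argument (which refers back to the line-erasure case); the structure and conclusion are correct. One justification is flawed, however. You bound the rank weight of $\Eval{\mathcal V_r}{\vectgr{\widetilde e}}$ by arguing that it is a \emph{$K$-linear combination} of $\vectgr{\widetilde e},\theta(\vectgr{\widetilde e}),\dots,\theta^{s_r}(\vectgr{\widetilde e})$, each of weight $t$, ``so its rank weight is still at most $t$.'' This does not work: first, the coefficients of $\mathcal V_r$ live in $L$, not $K$, so it is an $L$-linear combination; second, and more fundamentally, a linear combination (over $K$ or $L$) of vectors each of rank weight at most $t$ can easily have rank weight exceeding $t$, since the summands generally span different $K$-subspaces of $L$. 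The rank weight is only subadditive, which would give the useless bound $(s_r+1)t$.

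The correct (and short) argument is the one the paper implicitly relies on: by \textbf{Proposition~\ref{prop:OperatorEvaluation}} the evaluation map $x\mapsto\Eval{\mathcal V_r}{x}$ is $K$-linear from $L$ to $L$. Hence, using the $\wb$ (or $\wthK$) description of the rank metric, the $K$-span of the coordinates $\Eval{\mathcal V_r}{\widetilde e_i}$ is precisely the image under this $K$-linear map of the $K$-span of the $\widetilde e_i$'s, so its $K$-dimension is at most $w(\vectgr{\widetilde e})\leq t$. With this repair the rest of your argument is sound: $\matrice U$ invertible over $K$ preserves rank weight and preserves the Gabidulin structure with the transformed support, puncturing cannot increase rank weight, and the hypothesis $2t+s_r+s_c\leq n-k$ places $t$ within the decoding radius of the $[n-s_c,k+s_r]$ code obtained after the two reductions, so Algorithm~\ref{algo:dec} succeeds and the left division by $\mathcal V_r(X)$ recovers $f(X)$ uniquely because $L[X;\theta]$ has no zero divisors.
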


\begin{proof}
	The proof is the same that in the first erasure model.
\qed
\end{proof}

\begin{algorithm}
\caption{Decoding network coding erasures}
\label{algo:erassido}
\begin{algorithmic}[1]
\ION \REQUIRE $\vectgr{g} =(g_1, \ldots, g_n) \in L^n$
\IOB \REQUIRE $\vectgr{y} =(y_1, \ldots, y_n) \in L^n$
\IOB \REQUIRE $\matrice{\widehat{B_c}} \in K^{s_c \times n}$ and $\matrice{\widehat{A_r}} \in K^{m \times s_r}$,
\IOB \REQUIRE $k$ the dimension of the code
\IOB \REQUIRE $s_r$ (resp.\ $s_c$) the number of row (resp.\ column) erasures,  such that $s_r+s_c\leq n-k$
\ION \ENSURE $f(X) \in L[X;\theta]$
\STATE Find $\matrice{U}\neq0 \in M_n(K)$ such that $\matrice{\widehat{B_c}} \cdot U$ is zero on the $n-s_c$ last columns 
\STATE $\vectgr{\widetilde{g}} \longleftarrow (\vectgr{g}\matrice{U})_{[s_c+1..n]} $
\STATE $\vectgr{\widetilde{y}} \longleftarrow (\vectgr{y}\matrice{U})_{[s_c+1..n]} $
\STATE $\mathcal{V}_r(X) \longleftarrow$  annihilator of the column space of $\matrice{\widehat{A_r}}$
\STATE $\widetilde{z}_i \longleftarrow \Eval{\mathcal{V}_r}{\widetilde{y}_i}$
\STATE $\vectgr{\widetilde{z}} \longleftarrow (z_i)_{[s_c+1..n]}$
\STATE $F(X) \longleftarrow$ {\bf Algorithm \ref{algo:dec}}($n-s_c,k+s_r,\vectgr{\widetilde{g}},\vectgr{\widetilde{z}}$)
\RETURN $f(X)= \mathcal{V(X)}_r \backslash F(X)$
\end{algorithmic}
\end{algorithm}


\section{The \TA}
	\label{part:ta}
The decoding algorithm presented in the previous section is based upon the resolution of the so-called linear reconstruction problem presented
in {\bf Definition~\ref{def:LRP}}. In this section we present an algorithm derived from the Welch-Berlekamp one \cite{berlekamp1986error}, 
in the version formulated by Gemmel and Sudan \cite{gemmell1992highly}, solving this linear reconstruction problem.  
Originally used for the decoding of Reed-Solomon codes, this algorithm  
was adapted to Gabidulin codes over finite fields by Loidreau \cite{loidreau2006welch}. The version that we present in this section is
a generalization of the latter one. It also works on finite field extensions by replacing the automorphism $\theta$ by the Frobenius automorphism. 
Our version takes into account cases that were not covered by the algorithm in \cite{loidreau2006welch}.

In a first section we present the algorithm. In a second one we prove that it indeed solves the
linear reconstruction problem. Then we study the complexity of decoding rank errors up to the error-correcting capability, by using this algorithm together with the left Euclidean division of $\theta$-polynomials. We show that it is always quadratic. Finally we present two variants of the decoding and precise their effects on the complexity.  

\subsection{The \TA}

The idea is to compute two  pairs $(N_0,W_0)$ and $(N_1,W_1)$ of $\theta$-polynomials which 
satisfy the interpolation conditions of the problem $\LR(n,k,\lfloor (n-k)/2 \rfloor,\vectgr{g},\vectgr{y})$ (see {\bf Definition~\ref{def:LRP}}): 
   \begin{equation}\label{Eq:IntCond}
            \Eval{W}{y_i} = \Eval{N}{g_i} \ForallInt{1}{i}{n},  
   \end{equation}
and such that at least one of the pairs  satisfies  the final degree conditions: 

\begin{equation}\label{Eq:DegCond}
       \begin{array}{l}
           \deg(N) \leq  
           \begin{cases}
           k+\lfloor\frac{n-k}{2} \rfloor -1, &\mbox{~if~}n-k{~even}, \\
           k+\lfloor\frac{n-k}{2} \rfloor,   &\mbox{~if~}n-k{~odd}, 
           \end{cases}\\
           -\infty < \deg(W)  \leq t.    
       \end{array}
\end{equation}

Therefore our aim is to control the growth of the degrees of the respective polynomials, but ensuring that at 
each round in the loop, the interpolation conditions are satisfied. 
 {\bf Algorithm \ref{algo:ta}} will be analyzed in the next section but we present here the ideas: 
\begin{enumerate}
        \item {\em Initialization step:} from lines $2$ to $7$. There are two way of constructing pairs of polynomials of relatively small degree  satisfying the  interpolation condition at step $k$.   
          \begin{itemize}
          \item One of the pair is formed with $(N_0 = \Annul{<g_1, \ldots, g_k>}, W_0 = $0$)$, 
where $\Annul{<g_1, \ldots, g_k>}$ is the annihilator polynomial, see {\bf Definition~\ref{df:annil}}. In that case we have $\deg(N_0) = k$. Note that this pair does not satisfy the degree conditions since $W_0 = 0$ and $\deg(N_0) = k > k-1$.  
  
          \item The other one is formed with $(N_1 =  \Interpol{[g_1, \ldots, g_k]}{[y_1, \ldots, y_k]}, W_1 = X)$, where $\Interpol{[g_1, \ldots, g_k]}{[y_1, \ldots, y_k]}$, is the interpolating polynomial, see {\bf Definition~\ref{def:interp}}. In that case 
$\deg(N_1) = k-1$ and $\deg(W_1) = 0$, therefore, this pair of polynomials satisfy the degree conditions. 
\end{itemize}

   \item  {\em Interpolation step (rounds $k+1 \leq j \leq n$):} lines $8$ to $45$.  
From two pairs of polynomials satisfying the interpolation condition and such that at least one of them satisfies the degree conditions at round $j$, 
\[
 \Eval{W}{y_i} = \Eval{N}{g_i} \ForallInt{1}{i}{j},  
\]
we construct two pairs of polynomials satisfying 
\[
 \Eval{W}{y_{i}} = \Eval{N}{g_{i}} \ForallInt{1}{i}{j+1}
\]
such  that at least one of the pairs satisfies the degree conditions at round $j+1$. 
To ensure this, we evaluate the discrepancy vectors $\vectgr{u} = (u_i)_{i=1}^n$, computing the difference vectors 
$ (\Eval{N}{g_i} - \Eval{W}{y_i})_{i=1}^n$ at every round of the loop. At round $i$, it must satisfy 
\[
\vectgr{u} = (0,\ldots,0,u_{i+1},\ldots,u_n)
\]  
This controls the effectiveness of the interpolation condition. To make  sure that at least one of the pairs satisfies the degree conditions we  increase the degree of  one pair on average by one every two rounds thanks to the updates presented in 
{\bf Table \ref{tbl:updates}}.
In  part~\ref{part:prta} we prove that one of  the obtained pairs satisfies the degree requirements (\ref{Eq:DegCond}).
\end{enumerate} 
\begin{algorithm}
\caption{Reconstruction algorithm}
\label{algo:ta}
\begin{algorithmic}[1]
\ION \REQUIRE  $k,n \in \N,~k \leq n$
\IOB \REQUIRE $\vectgr{g} = (g_1, \ldots, g_n) \in L^n$, $K$-linearly independent elements
\IOB \REQUIRE $\vectgr{y} = (y_1, \ldots, y_n) \in L^n$
\ION \ENSURE $N$ and $W$ solutions to 
\STATE \# Initialization step
\STATE $N_0(X) \longleftarrow \Annul{<g_1, \ldots, g_k>}$
\STATE $W_0(X) \longleftarrow 0$
\STATE $N_1(X) \longleftarrow \Interpol{[g_1, \ldots, g_k]}{[y_1, \ldots, y_k]}$ 
\STATE $W_1(X) \longleftarrow \Monome{X}{0}$
\STATE $\vectgr{u}_0= (u_{0,i})_{i=1}^n \longleftarrow  \Eval{N_0}{\vectgr{g}} - \Eval{W_0}{\vectgr{y}}$
\STATE $\vectgr{u}_1= (u_{1,i})_{i=1}^n \longleftarrow  \Eval{N_1}{\vectgr{g}} - \Eval{W_1}{\vectgr{y}}$
\STATE
\STATE \# Interpolation step 
\FOR {$i$ from $k+1$ to $n$}
	
	\STATE \# Secondary loop
	\STATE $j \longleftarrow i$
	\WHILE {$u_{0,j} \neq 0$ and $u_{1,j} = 0$ and $j \leq n$}
		\STATE $j \longleftarrow j+1$
	\ENDWHILE
	\IF {j=n+1}
		\STATE return $(N_1,W_1)$ \label{line:early}
	\ELSE
        \STATE   \# Permutation of the components of the positions $i$ and $j$
        \STATE $i \leftrightarrow j$
	\ENDIF
%
%
	\STATE
	\STATE \# Updates of  $\theta$-polynomials, according to discrepancies
	\IF {$u_1 \neq 0$}
		\STATE $N'_1 \longleftarrow (\Monome{X}{1} - \frac{\theta(u_{1,i})}{u_{1,i}} \Monome{X}{00}) \cdot N_1$
		\STATE $W'_1 \longleftarrow (\Monome{X}{1} - \frac{\theta(u_{1,i})}{u_{1,i}} \Monome{X}{00}) \cdot W_1$
		\STATE $N'_0 \longleftarrow N_0 - \frac{u_{0,i}}{u_{1,i}} N_1$
		\STATE $W'_0 \longleftarrow W_0 - \frac{u_{0,i}}{u_{1,i}} W_1$
\Oubliettes{
		\STATE \textcolor{green}{$\vectgr{u}'_1 \longleftarrow \theta(\vectgr{u}_1)-\frac{\theta(u_{1,i})}{u_{1,i}} \vectgr{u}_1$}
		\STATE \textcolor{green}{$\vectgr{u}'_0 \longleftarrow \vectgr{u}_0-\frac{u_{0,i}}{u_{1,i}} \vectgr{u}_1$}
}
	\ENDIF	
	\IF {$u_0 = 0$ and $u_1 = 0$}
		\STATE $N'_1 \longleftarrow \Monome{X}{1} \cdot N_1$
		\STATE $W'_1 \longleftarrow \Monome{X}{1} \cdot W_1$
		\STATE $N'_0 \longleftarrow N_0 $
		\STATE $W'_0 \longleftarrow W_0 $
\Oubliettes{
		\STATE \textcolor{green}{$\vectgr{u}'_1 \longleftarrow \theta(\vectgr{u}_1)$}
		\STATE \textcolor{green}{$\vectgr{u}'_0 \longleftarrow \vectgr{u}_0$}
}
	\ENDIF
\STATE
\STATE $N_0 \longleftarrow N'_1$
\STATE $W_0 \longleftarrow W'_1$
\STATE $N_1 \longleftarrow N'_0$
\STATE $W_1 \longleftarrow W'_0$
\Oubliettes{
\STATE \textcolor{green}{$\vectgr{u}_0 \longleftarrow \vectgr{u}'_0$}
\STATE \textcolor{green}{$\vectgr{u}_1 \longleftarrow \vectgr{u}'_1$}
}
\STATE	
       \STATE \Oubliettes{\textcolor{green}{ON ENLEVE LA LIGNE SUIVANTE :}}\# Discrepancies updates
       \STATE \Oubliettes{\textcolor{green}{ON ENLEVE LA LIGNE SUIVANTE :}}$\vectgr{u}_0= (u_{0,i})_{i=1}^n \longleftarrow  \Eval{N_0}{\vectgr{g}} - \Eval{W_0}{\vectgr{y}}$
       \STATE \Oubliettes{\textcolor{green}{ON ENLEVE LA LIGNE SUIVANTE :}}$\vectgr{u}_1= (u_{1,i})_{i=1}^n \longleftarrow  \Eval{N_1}{\vectgr{g}} - \Eval{W_1}{\vectgr{y}}$
\ENDFOR	

\STATE
\STATE return $N_1, W_1$

\end{algorithmic}
\end{algorithm}


\begin{table}[!h]
\[
\begin{array}{|c|c|r|l|c|}
\hline 
 u_{0,i}    & u_{1,i}    & A'_1                                                     		 & A'_0							& \text{type}\\ 
\hline 
* & \neq 0 & \left(\Monome{X}{1} - \frac{\theta(u_{1,i})}{u_{1,i}} \Monome{X}{00} \right) \cdot A_1 & A_0 - \frac{u_{0,i}}{u_{1,i}} A_1			& 1   \\ 
\hline 
= 0    & = 0    &  \Monome{X}{1} \cdot A_1                                  		 & A_0       						& 2   \\ 
\hline 
\ne  0    & = 0 & \multicolumn{2}{c|}{\text{no update}}												  		& 3   \\ 
\hline 
\end{array} 
\]
\caption{The update formulas depending on the defects.
$A$ denotes $N$ or $W$, which have the same update formula, 
or $\vectgr{u}_s = (u_{s,i})_i$, for $s=0,1$.
We denote by $A'$ the update of $A$, in order to distinguish its value at the beginning and at the end of a round.
}
\label{tbl:updates}
\end{table}

	\subsection{Proof of the algorithm}
\label{part:prta}

First we suppose that we never fall in the case of the {\em secondary loop} (lines $10$ to $21$). 
The easy part is to prove that if the interpolation condition is satisfied at round $j$, then it is also satisfied at round $j+1$. As a consequence since the interpolation condition is satisfied by construction at the beginning of  {\em Interpolation step},  it is also satisfied at the end of the algorithm for both pairs of polynomials.  

\Oubliettes{
\textcolor{green}{
\begin{lm}[Discrepancies]\label{lm:discr}
Assume that at round $k \leq j \leq n-1$, we have
$u_{0,i} = \Eval{N_0}{g_i} - \Eval{W_0}{y_i}$ and $u_{1,i} = \Eval{N_1}{g_i} - \Eval{W_1}{y_i}$ for all $1 \leq i \leq n$.
Let $(N'_0,W'_0)$ and $(N'_1,W'_1)$ be the polynomials obtained from $(N_0,W_0)$ and $(N_1,W_1)$ from any update described in {\em \bf Table~\ref{tbl:updates}} and let $\vec{u}'_0$ and $\vec{u}'_1$ be the discrepancies from the same update. 
Then $u'_{0,i} = \Eval{N'_0}{g_i} - \Eval{W'_0}{y_i}$ and $u'_{1,i} = \Eval{N'_1}{g_i} - \Eval{W'_1}{y_i}$ for all $1 \leq i \leq n$.
\end{lm}
}
}
\Oubliettes{
\textcolor{green}{
\begin{proof}
We only prove for updates of type $1$, that is:
\begin{enumerate}
  \item At round $j$, $N'_1 = \left( \Monome{X}{1} - \frac{\theta(u_{1,j})}{u_{1,j}} \Monome{X}{00} \right) \cdot N_1$ ;  $W'_1 = \left( \Monome{X}{1} - \frac{\theta(u_{1,j})}{u_{1,j}} \Monome{X}{00} \right) \cdot W_1$ and $\vec{u}'_1= \theta(\vec{u}_1) - \frac{\theta(u_{1,j})}{u_{1,j}} \cdot \vec{u}_1$:
For all $i$, we have
\begin{align}
u'_{1,i} 	&= \theta(u_{1,i}) - \frac{\theta(u_{1,j})}{u_{1,j}} \cdot u_{1,i}\\
			&=\left(\theta(\Eval{N_1}{g_i})-\frac{\theta(u_{1,j})}{u_{1,j}}\Eval{N_1}{g_i}\right)
				-\left( \theta(\Eval{W_1}{y_i})-\frac{\theta(u_{1,j})}{u_{1,j}}\Eval{W_1}{y_i}\right)\\
			&=\Eval{N'_1}{g_i} - \Eval{W'_1}{y_i}.
\end{align}
  \item  At round $j$, $N'_0 =  N_0 - \frac{u_{0,j}}{u_{1,j}} N_1$ ; $W'_0 =  W_0 - \frac{u_{0,j}}{u_{1,j}} W_1$ and $\vec{u}'_0= \theta(\vec{u}_0) - \frac{u_{0,j}}{u_{1,j}} \cdot \vec{u}_1$:
  For all $i$, we have
\begin{align}
u'_{0,i} 	&= u_{0,i} - \frac{u_{0,j}}{u_{1,j}} \cdot u_{1,i}\\
			&=\left(\Eval{N_0}{g_i}-\frac{u_{0,j}}{u_{1,j}} \Eval{N_1}{g_i}\right)
				-\left(\Eval{W_0}{y_i}-\frac{u_{0,j}}{u_{1,j}} \Eval{W_1}{y_i}\right)\\
			&=\Eval{N'_0}{g_i} - \Eval{W'_0}{y_i}.
\end{align}
\end{enumerate}
The proof is similar for the other update.
Thus, the quantity $u_{\ell,i}$ corresponds to the discrepancy $\Eval{N_\ell}{g_i}-\Eval{W_\ell}{y_i}$ for $\ell=0,1$ and for $1\leq i \leq n$.
\qed
\end{proof}
}
}

\begin{proposition}[Interpolation]\label{Prop:Interpolation}
Let $k+1 \leq j \leq n-1$ such that  $\Eval{W_0}{y_i} = \Eval{N_0}{g_i}$ and  $\Eval{W_1}{y_i} = \Eval{N_1}{g_i}$ for all $1 \leq i \leq j$.
Let $(N'_0,W'_0)$ and $(N'_1,W'_1)$ be the polynomials obtained from $(N_0,W_0)$ and $(N_1,W_1)$ from any update described in {\em \bf Table~\ref{tbl:updates}}. 
Then     $\Eval{W'_0}{y_i} = \Eval{N'_0}{g_i}$ and  $\Eval{W'_1}{y_i} = \Eval{N'_1}{g_i}$, for all $1 \leq i \leq j+1$.
\end{proposition}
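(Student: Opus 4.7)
The plan is to work directly with the discrepancies $u_{\ell,i} \stackrel{def}{=} \Eval{N_\ell}{g_i} - \Eval{W_\ell}{y_i}$ for $\ell=0,1$ and $1 \le i \le n$. The hypothesis says exactly that $u_{\ell,i}=0$ for $1\le i\le j$, and the conclusion is that the analogous discrepancies $u'_{\ell,i}$ built from $(N'_\ell, W'_\ell)$ vanish for $1\le i\le j+1$. The whole argument then reduces to computing $u'_{\ell,i}$ in terms of the $u_{\ell,i}$, using only the two identities from \textbf{Proposition~\ref{prop:OperatorEvaluation}}: $K$-linearity of evaluation, and $\Eval{AB}{a} = \Eval{A}{\Eval{B}{a}}$. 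The index $i$ appearing in \textbf{Table~\ref{tbl:updates}} should be read as $i=j+1$ (the current round).

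For an update of type~$1$ (the case $u_{1,j+1}\ne 0$), I apply evaluation to $N'_1 = \bigl(\Monome{X}{1} - \frac{\theta(u_{1,j+1})}{u_{1,j+1}} \Monome{X}{00}\bigr)\cdot N_1$ and to the analogous $W'_1$, getting
\[
u'_{1,i} = \theta(u_{1,i}) - \frac{\theta(u_{1,j+1})}{u_{1,j+1}}\, u_{1,i},
\qquad 1\le i\le n.
\]
For $1\le i\le j$ this is zero because $u_{1,i}=0$ by hypothesis, and for $i=j+1$ it is zero by the direct cancellation $\theta(u_{1,j+1}) - \theta(u_{1,j+1}) = 0$. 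Similarly, from $N'_0 = N_0 - \frac{u_{0,j+1}}{u_{1,j+1}}N_1$ and the corresponding $W'_0$ I get
\[
u'_{0,i} = u_{0,i} - \frac{u_{0,j+1}}{u_{1,j+1}}\, u_{1,i},
\]
which vanishes for $1\le i\le j$ (both $u_{0,i}$ and $u_{1,i}$ are zero there) and for $i=j+1$ by another direct cancellation.

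For an update of type~$2$ (the case $u_{0,j+1}=u_{1,j+1}=0$), the same evaluation yields $u'_{1,i} = \theta(u_{1,i})$ and $u'_{0,i} = u_{0,i}$. These vanish for $1\le i\le j$ by the hypothesis, and for $i=j+1$ by the case assumption itself. The third row of \textbf{Table~\ref{tbl:updates}} is labeled \emph{no update} and is handled by the secondary loop that permutes coordinates, so it does not require a separate verification here.

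Overall I do not expect a real obstacle: the proof is a routine calculation once the discrepancies $u_{\ell,i}$ are introduced, and the key conceptual point is just that the multiplicative update $(X - \theta(u_{1,j+1})/u_{1,j+1})\cdot(\cdot)$ is precisely the Newton-like factor that kills $u_{1,j+1}$ while preserving all the previous zeros, exactly as in the recursive construction of the annihilator polynomial in the proof of \textbf{Theorem~\ref{thm:degann}}. The only care needed is to apply the product formula to $L[X;\theta]$ correctly (with $\theta$ acting through the left factor) and to check both coordinates $\ell=0,1$ in both update types.
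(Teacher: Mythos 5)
Your proof is correct and takes essentially the same route as the paper: both apply the product/evaluation identity from \textbf{Proposition~\ref{prop:OperatorEvaluation}} to the updated pairs, appeal to the hypothesis for $1\le i\le j$, and verify $i=j+1$ by a direct cancellation (the paper does this by rearranging $\Eval{N'_1}{g_{j+1}}$ and $\Eval{W'_1}{y_{j+1}}$ into the common value $\frac{1}{u_{1,j+1}}\bigl(\Eval{N_1}{g_{j+1}}\theta(\Eval{W_1}{y_{j+1}}) - \theta(\Eval{N_1}{g_{j+1}})\Eval{W_1}{y_{j+1}}\bigr)$). Your bookkeeping via the discrepancies $u'_{\ell,i}$ is a slightly tidier presentation of the same computation and, incidentally, makes explicit that the update coefficient must be built from $u_{\ell,j+1}$ rather than $u_{\ell,j}$, which the paper's indexing leaves a bit ambiguous.
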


\begin{proof}
We only prove for updates of type $1$, that is:
\begin{enumerate}
  \item At round $j$, $N'_1 = \left( \Monome{X}{1} - \frac{\theta(u_{1,j})}{u_{1,j}} \Monome{X}{00} \right) \cdot N_1$ and  $W'_1 = \left( \Monome{X}{1} - \frac{\theta(u_{1,j})}{u_{1,j}} \Monome{X}{00} \right) \cdot W_1$:

For all $1 \leq i \leq j$, we have $\Eval{N'_1}{g_i} = \theta(\Eval{N_1}{g_i}) -  \frac{\theta(u_{1,j})}{u_{1,j}}\Eval{N_1}{g_i}$ and $\Eval{W'_1}{y_i} = \theta(\Eval{W_1}{y_i}) -  \frac{\theta(u_{1,j})}{u_{1,j}}\Eval{W_1}{y_i}$. 
From this we only have to check the equality for $i=j+1$, since for $i \leq j$ this comes from the hypotheses of the theorem. 
For $i=j+1$ we have $u_{1,j+1} = \Eval{N_1}{g_{j+1}} - \Eval{W_1}{y_{j+1}}$ \Oubliettes{\textcolor{green}{(see {\bf Lemma~\ref{lm:discr}})}} and by reordering the terms 
\[
\Eval{N'_1}{g_{j+1}} = \frac{1}{u_{1,j+1}} \left( \Eval{N_1}{g_{j+1}} \theta(\Eval{W_1}{y_{j+1}})   -   \theta(\Eval{N_1}{g_{j+1}})\Eval{W_1}{y_{j+1}} \right).
\]
 We obtain the same value for $\Eval{W'_1}{y_{j+1}}$. 
  \item  At round $j$, $N'_0 =  N_0 - \frac{u_{0,j}}{u_{1,j}} N_1$ and $W'_0 =  W_0 - \frac{u_{0,j}}{u_{1,j}} W_1$:

For all $1 \leq i \leq j$,  the equality comes from the hypotheses of the theorem: $\Eval{N_0}{g_i} = \Eval{W_0}{y_i}$ and  
$\Eval{N_1}{g_{i}} = \Eval{W_1}{y_{i}}$. 

Now since for $i = j+1$ we have  $u_{1,j+1} = \Eval{N_1}{g_{j+1}} - \Eval{W_1}{y_{j+1}}$ and 
$u_{0,j+1} = \Eval{N_0}{g_{j+1}} - \Eval{W_0}{y_{j+1}}$ \Oubliettes{\textcolor{green}{(see {\bf Lemma~\ref{lm:discr}})}}, we obtain 
\[
\Eval{N'_0}{g_{j+1}} = \frac{1}{u_{1,j+1}} \left(  \Eval{W_0}{y_{j+1}} \Eval{N_1}{g_{j+1}} - \Eval{W_1}{y_{j+1}} \Eval{N_0}{g_{j+1}} \right).
\]
The same value is obtained for  $\Eval{W'_0}{y_{j+1}}$.

\end{enumerate}
For the other updates, the interpolation property is obviously satisfied.
\qed
\end{proof}

We have proved that our algorithm correctly interpolated the polynomials at every round in the {\em Interpolation step}.  
Now  we have to control the degrees to check that at least one of the pairs of polynomials 
satisfies the degree conditions (\ref{Eq:DegCond}).  

First we give an upper bound on the degrees of the polynomials in the algorithm.  
\begin{proposition}[Degree Control]\label{Prop:BSupDegre}
At the end of round $j$,  $k+1 \leq j \leq n$ of  the Interpolation step in  {\bf Algorithm \ref{algo:ta}}, 
the degrees of the polynomials satisfy:
\[
\begin{array}{ll}
 \deg(N_0) \leq k + \lfloor \frac{j-k}{2} \rfloor, & ~ \deg(W_0) \leq  \lfloor \frac{j-k+1}{2} \rfloor, \\
 \deg(N_1) \leq k - 1 + \lfloor \frac{j-k+1}{2} \rfloor, & ~  \deg(W_1) \leq  \lfloor \frac{j-k}{2} \rfloor.
\end{array}
\]
\end{proposition}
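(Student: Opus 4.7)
The plan is to prove all four inequalities simultaneously by induction on $j$, with base case $j = k$ corresponding to the state immediately after the initialization step (lines 2--7). In that base case, $N_0$ is the annihilator of $k$ $K$-linearly independent elements, so by \textbf{Theorem~\ref{thm:degann}} it has degree exactly $k$; $N_1$ is the interpolator of \textbf{Theorem~\ref{the:int}} with degree at most $k-1$; $W_0 = 0$, whose degree $-\infty$ trivially meets any bound; and $W_1 = 1$ has degree $0$. All four stated bounds reduce to $0 \le 0$ (or $-\infty \le 0$) in this case.

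For the inductive step from round $j$ to round $j+1$, I would first observe that the secondary loop (lines 12--21) only permutes the entries of $\vectgr g$ and $\vectgr y$ and does not touch any polynomial, so it has no effect on degrees. The degree evolution therefore comes entirely from the updates of \textbf{Table~\ref{tbl:updates}} composed with the swap at lines 40--44, after which the new values are $N_0 \leftarrow N'_1$, $W_0 \leftarrow W'_1$, $N_1 \leftarrow N'_0$, $W_1 \leftarrow W'_0$. I would then split into cases. In the type~1 update ($u_{1,i} \neq 0$), left multiplication by the degree-one factor $X - \theta(u_{1,i})/u_{1,i}$ yields $\deg N'_1 = \deg N_1 + 1$ and $\deg W'_1 = \deg W_1 + 1$, while the $K$-linear combinations give $\deg N'_0 \le \max(\deg N_0, \deg N_1)$ and $\deg W'_0 \le \max(\deg W_0, \deg W_1)$. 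The type~2 update is analogous, with the degree-one factor $X$ and with $N'_0, W'_0$ unchanged. The type~3 case of the table does not arise here, since the secondary loop has ensured that either $u_{1,i} \neq 0$ or both $u_{0,i}$ and $u_{1,i}$ vanish at the current position.

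It then remains to combine these bounds with the inductive hypothesis and carry out the elementary floor arithmetic, using the identities $\lfloor (j-k)/2 \rfloor + 1 = \lfloor (j-k+2)/2 \rfloor$, $\max(\lfloor (j-k+1)/2 \rfloor, \lfloor (j-k)/2 \rfloor) = \lfloor (j-k+1)/2 \rfloor$, and $\max\bigl(k + \lfloor (j-k)/2 \rfloor,\, k - 1 + \lfloor (j-k+1)/2 \rfloor\bigr) = k + \lfloor (j-k)/2 \rfloor$ (which one checks in both parities of $j-k$). No conceptual difficulty beyond this is expected; the only pitfall is that because each update mixes the pairs $(N_0, W_0)$ and $(N_1, W_1)$, one must verify the four bounds jointly rather than one at a time, but the analysis for the $N$-polynomials is identical to that for the $W$-polynomials up to a uniform shift of~$k$, so it is effectively a single calculation carried out twice.
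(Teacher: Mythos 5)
Your proof is correct and takes essentially the same approach as the paper's: induction on the round index, tracking how the degrees evolve through the updates of Table~\ref{tbl:updates} and the swap of the polynomial pairs. The only presentational differences are that you anchor the base case at $j=k$ rather than at $j=k+1$ (the paper unrolls the first interpolation round explicitly), and that you carry the floor arithmetic through uniformly via the stated identities, whereas the paper splits the inductive step into two cases according to the parity of $j-k$.
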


\begin{proof}
  The proof is made by induction. At the beginning of round $j=k+1$, $N_0$ and $N_1$ are respectively the annihilator of degree $k$ 
and the interpolating polynomial of degree $k-1$. Moreover  $W_0=0$  and $W_1 = X$. Therefore, 
by considering the updates in {\bf Table \ref{tbl:updates}} and the fact that the polynomials are 
swapped at the end of round $j$ (lines $37$ to $40$), at the end of the round 1, we have: 
\begin{itemize}
  \item $\deg(N_0) = k,~ \deg(W_0) = 1$, since polynomials are multiplied by an affine polynomial, increasing thus their degrees exactly by one.   
  \item $\deg(N_1) = k$, since it is the sum of a polynomial of degree exactly $k$ and a polynomial of degree strictly less than $k$. 
  \item $\deg(W_1) \leq 1$. There is no certainty on the exact degree of $W_1$ since it is the sum of $0$ and a constant, possibly $0$.  
\end{itemize}
Suppose that the property is true for some $k+1 \leq j \leq n-1$. There are two cases: 
\begin{enumerate}
  \item $j-k = 2u$ is even. 

 By hypothesis, at the beginning of round $j+1$ (corresponds to the end of round $j$), we have 
$\deg(N_0) \leq k+u$,  $\deg(W_0) \leq u$, $\deg(N_1) \leq k-1+u$,  $\deg(W_1) \leq u$. 
At the end of round $j+1$ we have therefore, 
\[
\begin{array}{ll}
  \deg(N_0) \leq k+u, &     ~\deg(W_0) \leq u +1,\\
 \deg(N_1) \leq k-1 +u+1, &  ~\deg(W_1) \leq u.
\end{array}
\] 
Since $j-k = 2u$ is even,   $\lfloor \frac{j+1-k}{2} \rfloor = u$, and  $\lfloor \frac{j+1-k +1}{2} \rfloor = u+1$.
  
\item $j-k = 2u +1$ is odd.  

By hypothesis, at the beginning of round $j+1$
$\deg(N_0) \leq k+u$,  $\deg(W_0) \leq u+1$, $\deg(N_1) \leq k+u$,  $\deg(W_1) \leq u$.
At the end of round $j+1$, 
\[
\begin{array}{ll}
 \deg(N_0) \leq k+u+1,    &     ~\deg(W_0) \leq u+1, \\
 \deg(N_1) \leq k-1 +u+1, &     ~\deg(W_0) \leq u+1.
\end{array}
\] 
Since $j-k = 2u+1$, we have: $\lfloor \frac{j+1-k +1}{2} \rfloor = \lfloor \frac{j+1-k}{2} \rfloor = u+1$.
\end{enumerate}

Now suppose that the upper bound on the degrees is true for some $k+1 \leq j \leq n-1$. Then the chosen updates show that it is still true for $j+1$. \qed
\end{proof}

So far this proposition gives upper bounds, but does not ascertain that at the end of the algorithm we will not fall into a degenerated case ($W_1 = 0$). To this end we will use the following proposition which shows that at every round at least one polynomial of every pair reaches the degree upper bound of the previous proposition. 

\begin{proposition}\label{Prop:NonDegenere}

At the end of loop $k+1 \leq j \leq n$ in  {\bf Algorithm \ref{algo:ta}}: 
\begin{itemize}
  \item If $j-k = 2u + 1$, then $\deg(N_1) = k+u$  and $\deg(W_0) = u+1$;
  \item If $j-k = 2u$, then $\deg(N_0) = k+u$ and $\deg(W_1) = u$.

\end{itemize}
\end{proposition}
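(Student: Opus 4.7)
The plan is to proceed by induction on $j$, taking as base case $j=k$ (the state immediately after the initialization step, lines 2--7). At this point $N_0 = \Annul{\langle g_1, \ldots, g_k\rangle}$ has exact degree $k$ by \textbf{Theorem~\ref{thm:degann}} (the $g_i$'s being $K$-linearly independent), and $W_1 = X^0$ has degree $0$; this is precisely the claim in Case A with $u=0$. The other two polynomials, $W_0=0$ and $N_1 = \Interpol{[g_1,\ldots,g_k]}{[y_1,\ldots,y_k]}$, have only the upper bounds from \textbf{Proposition~\ref{Prop:BSupDegre}} (as predicted by the statement).

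Before performing the inductive step I would note that, thanks to the secondary loop (lines 12--21), whenever the update block is actually executed, we are either in the type~1 regime ($u_{1,i}\neq 0$) or the type~2 regime ($u_{0,i}=u_{1,i}=0$); the type~3 configuration is either swapped away or triggers an early return. In both remaining types, the update applied to the pair indexed by $1$ consists of left multiplication by an affine $\theta$-polynomial (either $X-\theta(u_{1,i})/u_{1,i}$ or $X$), which increases the degree by exactly one. The update applied to the pair indexed by $0$ is either the identity (type~2) or a subtraction $A_0 \gets A_0 - (u_{0,i}/u_{1,i}) A_1$ (type~1). Finally the two pairs are swapped.

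For the inductive step itself I would split on the parity of $j-k$. Suppose $j-k=2u$ (Case A holds): by hypothesis $\deg N_0 = k+u$ and $\deg W_1 = u$ exactly, while the upper bounds give $\deg N_1 \leq k+u-1$ and $\deg W_0 \leq u$. Multiplying $W_1$ by an affine polynomial yields a polynomial of exact degree $u+1$, which after the swap becomes the new $W_0$. The new $N_1$ comes from $N_0 - \alpha N_1$ (or $N_0$ itself in type~2); because $\deg N_1 \leq k+u-1 < k+u = \deg N_0$, the leading term of $N_0$ cannot be cancelled, so the new $N_1$ has exact degree $k+u$. This matches the Case B statement for $j+1-k = 2u+1$. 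The parity $j-k=2u+1$ is entirely symmetric: the exact-degree hypothesis $\deg W_0 = u+1$ dominates the upper bound $\deg W_1 \leq u$, so $W_0 - \alpha W_1$ preserves its leading term, and multiplying $N_1$ by an affine polynomial yields exact degree $k+u+1$; after the swap we recover the desired exact degrees for round $j+1-k = 2(u+1)$.

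The only subtlety — and the step that really makes the whole induction go through — is verifying that the type~1 subtractions never kill the leading term of the dominant polynomial in each pair. This is immediate from the strict degree gap between the two polynomials of a given pair, guaranteed by combining the exact-degree inductive hypothesis on one polynomial with the upper bound from \textbf{Proposition~\ref{Prop:BSupDegre}} on the other. Hence the argument reduces to a straightforward bookkeeping of degrees under the two update formulas and the final swap, and no genuine obstacle arises.
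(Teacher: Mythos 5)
Your proof is correct and follows essentially the same route as the paper: induction on $j$, split by the parity of $j-k$, where the exact degree from the inductive hypothesis combined with the upper bounds of \textbf{Proposition~\ref{Prop:BSupDegre}} shows that the type-1 subtraction cannot cancel the dominant leading term, while the other polynomial gains exactly one degree via the affine left multiplication and the swap redistributes the labels. The only cosmetic differences are that you anchor the induction at $j=k$ (after initialization, via \textbf{Theorem~\ref{thm:degann}}) rather than at $j=k+1$, and that you make explicit how the secondary loop eliminates the type-3 configuration, a point the paper addresses separately after \textbf{Theorem~\ref{Theo:AlgoWB}}.
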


\begin{proof}
Let $\mathcal{P}_{j}$ for all $k \leq j \leq n-1$ the property:  
\begin{itemize}
   \item  $\deg(N_1) = k+u, ~\deg(W_0) = u+1$, if $j = k + 2u+ 1$.
   \item  $\deg(N_0) = k+u, ~\deg(W_1) = u$, if $j=k+2u$.
\end{itemize}

From the initialization round in the proof of {\bf Proposition~\ref{Prop:BSupDegre}}, $\mathcal{P}_{k+1}$ is satisfied. 
Suppose now that  $\mathcal{P}_{k+2u + 1}$ is satisfied,  we show that 
 $\mathcal{P}_{k+2u + 2}$ is satisfied. 

Combining the  induction property and the upper bounds of {\bf Proposition~\ref{Prop:BSupDegre}},  
at the end of round  $j = k+2u+1$, we have:  
\[
\begin{array}{ll}
  \deg(N_0) \leq k+u, & ~  \deg(W_0) = u+1, \\ 
  \deg(N_1) =   k+u,&  ~ \deg(W_1) \leq u.
\end{array}
 \]
Hence at the beginning of round $j+1 = k + 2u +2$, the same bounds and equalities hold. 

Since $\deg(W_0) > \deg(W_1)$  the  updates in the loop show that 
 at the end of the round $k + 2u +2$,  $\deg(W_1) = u+1$. The polynomial  $N_0$ 
is obtained from $N_1$ by a left multiplication by an affine monic polynomial. Hence $\deg(N_0) = k+u+1$
at the end of round $k + 2u +2$. The upper bounds come from {\bf Proposition~\ref{Prop:BSupDegre}}. Hence
  $\mathcal{P}_{k+2u + 2}$ is satisfied. 

By using  the same arguments we show that if $\mathcal{P}_{k+2u}$ is satisfied then $\mathcal{P}_{k+2u+1}$ is also satisfied. 
Since we proved that  $\mathcal{P}_{k+1}$ was satisfied, by induction we proved that $\mathcal{P}_{j}$ is satisfied for all   
$k+1 \leq j \leq n$. \qed 
\end{proof}

A direct consequence of this proposition is that $W = 0$  can never occur for any pair of polynomials in the {\em Interpolation step}.
Namely, this would imply that the corresponding polynomial $N$ satisfies $\Eval{N}{g_i}=0$,  for all $1 \leq i \leq j$. However, 
$\deg(N) \leq k + \lfloor \frac{j-k}{2} \rfloor < j$, for $j \ge k+1$. Since the $g_i$'s are linearly independent by the hypotheses on the input of the algorithm,  this implies that $N=0$. 
But from  {\bf Proposition \ref{Prop:NonDegenere}}, it is not possible to have the two polynomials of the 
pair which do not reach the upper bound on the degree.  

This is the reason why the algorithm  returns the pair of polynomials of smallest degree. 
Now by combining all the previous results, we obtain: 
\begin{theorem}[Proof of the algorithm]
\label{Theo:AlgoWB}
Let 
    \begin{itemize}
    \item $K \hookrightarrow L$ be a cyclic field extension;
    \item $\theta$ be a generator of its automorphism group $\Auto_K(L)$;
      \item $\vectgr{g}=\VecteurLigneCoins{g_1}{g_n} \in L^n$ be $K$-linearly independent elements;
      \item $\vectgr{y}=\VecteurLigneCoins{y_1}{y_n} \in L^n$;
      \item an integer $k \leq n \in \N$
    \end{itemize} 
Then the pair $(N_1,W_1)$ returned by {\bf Algorithm \ref{algo:ta}} is a solution of 
$\LR(n, k, t=\lfloor (n-k)/2 \rfloor, \vectgr{g},  \vectgr{y})$. 
\end{theorem}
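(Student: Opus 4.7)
The plan is to verify, one by one, the three defining conditions of $\LR(n, k, t, \vectgr{g}, \vectgr{y})$ from \textbf{Definition~\ref{def:LRP}} for the pair $(N_1, W_1)$ returned by {\bf Algorithm~\ref{algo:ta}}, using essentially only the three propositions just established in this subsection: \textbf{Propositions~\ref{Prop:Interpolation}, \ref{Prop:BSupDegre}, \ref{Prop:NonDegenere}}. Throughout, set $t = \lfloor (n-k)/2 \rfloor$.

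First, I would handle the interpolation condition $\Eval{W_1}{y_i} = \Eval{N_1}{g_i}$ for $i = 1, \ldots, n$. The base case (positions $1, \ldots, k$) is immediate from the initialization: $(N_0, W_0) = (\Annul{<g_1, \ldots, g_k>}, 0)$ satisfies the relation by definition of the annihilator, while $(N_1, W_1) = (\Interpol{[g_1, \ldots, g_k]}{[y_1, \ldots, y_k]}, \Monome{X}{0})$ satisfies it by definition of the interpolating polynomial. The inductive step over the rounds $j = k+1, \ldots, n$ follows from \textbf{Proposition~\ref{Prop:Interpolation}}. The only subtlety is the secondary loop (lines~13--21): the swap on line~20 simply reorders interpolation points and so does not affect the condition; and an early return at line~\ref{line:early} (when $j = n+1$) occurs precisely when the discrepancies $u_{1,\ell} = \Eval{N_1}{g_\ell} - \Eval{W_1}{y_\ell}$ vanish for every remaining $\ell \in \{i, \ldots, n\}$, so the interpolation condition for $(N_1, W_1)$ already holds on all $n$ positions with no further update needed.

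Next, I would read off the degree bounds at $j = n$ directly from \textbf{Proposition~\ref{Prop:BSupDegre}}: $\deg W_1 \leq \lfloor (n-k)/2 \rfloor = t$ and $\deg N_1 \leq k - 1 + \lfloor (n-k+1)/2 \rfloor$, which simplifies to $k + t - 1$ when $n-k$ is even and to $k + t$ when $n-k$ is odd. This matches exactly the degree requirements in \textbf{Definition~\ref{def:LRP}}. In the early-return case, the polynomials are those at the end of some round $i-1 < n$, where the bounds of Prop.~\ref{Prop:BSupDegre} are only smaller and hence a fortiori satisfied.

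The main obstacle is checking the non-triviality condition $W_1 \neq 0$. If $n-k$ is even, \textbf{Proposition~\ref{Prop:NonDegenere}} directly forces $\deg W_1 = t \geq 0$, so $W_1 \neq 0$. The delicate case is $n - k$ odd: Prop.~\ref{Prop:NonDegenere} only gives $\deg N_1 = k + t$ and leaves open $W_1 = 0$. I would rule this out by a dimensional argument: if $W_1 = 0$, the interpolation condition reduces to $\Eval{N_1}{g_i} = 0$ for $i = 1, \ldots, n$; since $\deg N_1 \leq k + t = (n+k-1)/2 < n$ (as $k < n$) and the $g_i$'s are $K$-linearly independent, \textbf{Proposition~\ref{crl:hdim}} (which is a consequence of $\Hdim$) forces $N_1 = 0$, contradicting $\deg N_1 = k+t$. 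The same $\Hdim$-based argument also covers the early-termination case. Combining the three steps shows $(N_1, W_1)$ is a valid solution of $\LR(n, k, t, \vectgr{g}, \vectgr{y})$.
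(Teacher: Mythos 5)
Your proof is correct and follows essentially the same route as the paper: you invoke \textbf{Proposition~\ref{Prop:Interpolation}} for the interpolation conditions, \textbf{Proposition~\ref{Prop:BSupDegre}} for the degree bounds, and \textbf{Proposition~\ref{Prop:NonDegenere}} combined with the linear-independence of the $g_i$'s (i.e.\ $\Hdim$, via \textbf{Proposition~\ref{crl:hdim}}) to rule out $W_1 = 0$, handling the secondary-loop/early-return cases the same way the paper does in the surrounding paragraphs. The only cosmetic difference is that the paper externalizes the ``$W\ne 0$'' and secondary-loop arguments to the text immediately before and after the theorem, while you fold them into the proof body with a slightly more explicit even/odd case split; the underlying logic is identical.
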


\begin{proof}
    {\bf Proposition~\ref{Prop:Interpolation}} shows that the returned pair satisfies the interpolation step for all $j = 1,\ldots,n$. 
Let $t = \lfloor (n-k)/2 \rfloor$. 
From {\bf Proposition~\ref{Prop:BSupDegre}}  the degrees of the returned pair $(N_1,W_1)$ satisfy 
\[
\begin{array}{l}
\deg(N_1) \leq  
\begin{cases}
 k -1 + t,& \mbox{~if~} n-k = 2t \\
 k+t,&      \mbox{~if~} n-k = 2t+1 
\end{cases} \\ 
 \deg(W_1) \leq t  
\end{array}
\]
and {\bf Proposition~\ref{Prop:NonDegenere}} ensures that $W_1 \ne 0$, therefore from (\ref{Eq:DegCond}) the degree conditions are satisfied. \qed
\end{proof}

Now we deal with the case of what occurs in the case where the  {\em secondary loop} (lines $10$ to $21$) is activated. 
In the case where at round $i$, 
$u_{1,i} = 0$ and $u_{0,i} \ne 0$, we search for the first position $i < s \leq n$ such that either $u_{1,s} \ne 0$ or 
 $u_{1,s} = 0 = u_{0,s}$. There are two cases :
\begin{itemize}
  \item Either there exists some  $i < s \leq n$ satisfying  either $u_{1,s} \ne 0$ or 
 $u_{1,s} = 0 = u_{0,s}$. In that case the positions $i$ and $s$ are exchanged. This corresponds to a permutation of one position along
all the input vectors and has no impact on the interpolation and degree conditions. 
\item Or such an $s$ does not exist and this means that the pair of polynomials $(N_1,W_1)$ at the beginning of round $i$ satisfies the interpolation conditions (the discrepancy vector $\vectgr{u}_1$ is equal to $\vectgr{0}$), and from {\bf Proposition~\ref{Prop:BSupDegre}}  and {\bf Proposition~\ref{Prop:NonDegenere}} the degree conditions are satisfied. 

\end{itemize}

	\subsection{Complexity of the decoding}
\label{part:cplx}

The complexity of the decoding procedure consists of adding
\begin{itemize}
\item The complexity of {\bf Algorithm \ref{algo:ta}} returning the $\theta$-polynomials $(N_1,W_1)$;
\item The complexity of the left Euclidean division of $N_1$ by $W_1$.
\end{itemize}

\subsubsection{Prerequisite}
Before studying the complexity of the {\bf Algorithm \ref{algo:ta}}, we give the complexity of elementary functions used.
We count the number of additions in $L$, multiplications in $L$, uses of $\theta$ and divisions in $L$.
For certain field extensions $K \hookrightarrow L$, an element of $L$ is represented by its coefficients in a $K$-basis of $L$, 
and the automorphism consists in permuting these coefficients.
So uses of $\theta$ do not need computations in these cases, for example in Kummer or cyclotomic extensions.

Let $A, B \in L[X;\theta]$ be $\theta$-polynomials of degrees $a$ and $b$, and let $x \in L$.
The complexity of arithmetic operations is given in the following table.

\bigskip

\begin{center}
\begin{tabular}{|c|c|c|c|c|}
\hline 
operation			& additions 			& multiplications 	& uses of $\theta$ 	& divisions 	\\ 
\hline 
$A+B$ 				& $1+\min(a, b)$ 		& $0$ 			& $0$			& $0$ 		\\ 
\hline 
$A \cdot B$ 		& $ab$ 				& $(1+a)(1+b)$ 	& $a(1+b)$ 		& $0$ 		\\ 
\hline 
$\Eval{A}{x}$ 		& $a$ 				& $a+1$ 			& $a$ 			& $0$ 		\\
\hline 
Euclidean division	& 					&				& 				&			\\
$A = B \cdot Q + R$	& $(a-b)b$			& $(a-b)(b+1)$		& $(a-b)(2b)$		& $(a-b)$		\\
\hline
\end{tabular} 
\end{center}

\bigskip

The multiplication algorithm is the naive one, but in our case it is optimal since we only  multiply by polynomials of degree $1$.

\subsubsection{Error correction}	

At the  {\em Initialization step}, we have to compute the annihilator and the interpolating polynomials.
They can be simultaneously computed with {\bf Algorithm~\ref{algo:AnnEtInt}}.
 
\begin{algorithm}[!h]
\caption{Annihilator and Interpolator polynomials}
\label{algo:AnnEtInt}
\begin{algorithmic}[1]
\ION \REQUIRE $g_1, \ldots, g_k \in L$ $K$-linearly independent
\IOB \REQUIRE $y_1, \ldots, y_k \in L$
\ION \ENSURE $\Ann(X),~\Int(X)$
\STATE $\Ann:= \Monome{X}{0}$	
\STATE $\Int:= 0$
\FOR {$1 \leq i \leq k$}
\STATE $\Int:=\Int + \frac{y_i-\Eval{\Int}{g_i}}{\Eval{\Ann}{g_i}} \cdot \Ann$
\STATE $\Ann:=(\Monome{X}{1}-\frac{\theta(\Eval{\Ann}{g_i})}{\Eval{\Ann}{g_i}}\Monome{X}{00}) \cdot \Ann$
\ENDFOR
\RETURN $\Ann,\Int$
\end{algorithmic}
\end{algorithm}

Additionally we  have to compute the discrepancy vectors: 
\[
\begin{array}{l}
\vectgr{u}_0 = (0,\ldots,0,\Eval{\mathcal{A}}{g_{k+1}},\ldots,\Eval{\mathcal{A}}{g_n} ) \\
\vectgr{u}_1 = (0,\ldots,0,\Eval{\mathcal{I}}{g_{k+1}} - y_{k+1},\ldots,\Eval{\mathcal{I}}{g_n} - y_n) 
\end{array}
\]

Therefore the complexity of the {\em Initialization step} is: 
\medskip 
\begin{center}
\begin{tabular}{|c|c|c|c|c|}
\hline 
	& additions 	& multiplications 	& uses of $\theta$ 	& divisions\\ 
\hline 
{\bf Algorithm~\ref{algo:AnnEtInt}} & $2k^2-2k$ 	& $2k^2-k$ 		& $1.5k^2-0.5k$  	& $2k$	\\ 
\hline 
$\vectgr{u}_0$ and $\vectgr{u}_1$   & $(2k-1)(n-k)$               & $(2k+1)(n-k)$           & $(k-1)(n-k)$                & 0  \\
\hline    
\end{tabular} 
\end{center}
\medskip


In the  {\em Interpolation step} the number of arithmetic operations can be upper bounded by considering only updates 
of type $1$.  The involved operations consist of:
\begin{enumerate}
    \item updates of $(N_0, W_0)$ and $(N_1,W_1)$ at round $j=k+i$: 

From {\bf Proposition~\ref{Prop:BSupDegre}} the degrees of the $\theta$-polynomials $N_0$, $N_1$, $W_0$ and $W_1$ are respectively less or equal to  
$k+\lfloor \frac{i-1}{2} \rfloor$, $k-1+\lfloor \frac{i}{2} \rfloor$, 
$\lfloor \frac{i}{2} \rfloor$ and $\lfloor \frac{i-1}{2} \rfloor$. 

    \item updates of the discrepancy vectors  $\vectgr{u}_0$ and $\vectgr{u}_1$.  
The discrepancy vectors  $\vectgr{u}_0$,  $\vectgr{u}_1$ at round $j+1$ in the 
loop can be obtained from  $\vectgr{u}_0$,  $\vectgr{u}_1$  at round $j$ by performing the same updates 
on the vectors than for the corresponding pairs of polynomials. For instance if we consider updates of type $1$: 
\[
   \begin{array}{lcl}
\vectgr{u'}_1 &\longleftarrow& \theta(\vectgr{u}_1) - \frac{\theta(u_{1,j})}{u_{1,j}} \vectgr{u}_1, \\
\vectgr{u'}_0 &\longleftarrow& \vectgr{u}_0 - \frac{u_{0,j}}{u_{1,j}} \vectgr{u}_1,
\end{array}
\]
where $\theta$ acts on $\vectgr{u}_1$ component by component. 
\end{enumerate}

Therefore, at round $j$, an upper bound on the complexities is: 
\medskip 
\begin{center}
\begin{tabular}{|c|c|c|c|c|}
\hline 
{}										& additions	& multiplications       & uses of $\theta$		& divisions\\ 
\hline
Up. $N_0$, $N_1$, $W_0$ and $W_1$		& $2j-1$	& $2j-1$ 				& $j+1$					& $2$	\\
\hline
Up. $\vectgr{u}_0$ and $\vectgr{u}_1$	& $2(n-j)$	& $2(n-j)$              & $ n-j  $				&    $0$	\\
\hline
Total 									& $2n-1$	& $2n-1$				& $n+1$					& $2$	\\
\hline
\end{tabular} 
\end{center}
\medskip

To obtain the full cost one has to sum the complexities for $j=k+1,\ldots,n$. Finally to complete the decoding complexity analysis  
it remains to evaluate the complexity of the final left Euclidean division of 
$N_1$ by $W_1$ which is: 
\medskip 
\begin{center}
\begin{tabular}{|c|c|c|c|c|}
\hline 
 {}					& additions	& multiplications        & uses of $\theta$		& divisions\\ 
\hline
Left Euclidean division 	& $(k-1)\frac{n-k}{2}$	& $(k-1)\frac{n-k}{2}$              & $ (n-k)(k-1) $			&    $0$	\\
\hline
\end{tabular} 
\end{center}
\medskip
Since the {\em secondary loop} consists only on tests and permutations, it has no effect on the complexity. 
All these evaluation lead to the following theorem: 

\begin{theorem}[Decoding complexity]
\label{Theo:DecWB}
The complexity of solving 
$\Dec(K,L,\theta,n,k,\lfloor \frac{n-k}{2} \rfloor , \vectgr{g},\vectgr{y})$ by using 
{\bf Algorithm~\ref{algo:ta}} is $O(n^2)$ operation in $L$. 
More precisely the number of different field operations is upper-bounded by:
\begin{itemize}
	\item $2n^2-2n					+(k-1)\frac{n-k}{2}$ 				additions in $L$,
	\item $2n^2-k					+(k-1)(\frac{n-k}{2})$ 			multiplications in $L$,
	\item $n^2+0.5k^2-2n+1.5k^2 	+(n-k)(k-1)$ 				        uses of $\theta$,
	\item $2n	$ 														divisions in $L$.
\end{itemize}
\end{theorem}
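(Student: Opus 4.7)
The plan is to aggregate the per-stage arithmetic costs already tabulated in the preceding subsection and sum them. Algorithm~\ref{algo:ta} breaks into four disjoint arithmetic stages: (i) the simultaneous construction of the annihilator $\mathcal{A}$ and the interpolating polynomial $\mathcal{I}$ by Algorithm~\ref{algo:AnnEtInt}; (ii) the computation of the two initial discrepancy vectors $\vectgr{u}_0$ and $\vectgr{u}_1$ via evaluation of $\mathcal{A}$ and $\mathcal{I}$ at $g_{k+1},\dots,g_n$; (iii) the $n-k$ iterations of the main interpolation loop for $j = k+1,\ldots,n$; and (iv) the concluding left Euclidean division of $N_1$ by $W_1$. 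The secondary loop (lines~10--21) consists only of zero-tests and index permutations, so it contributes no field operations and may be ignored. For stages (i) and (ii) the counts are read off directly from the initialization table.

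For each round of stage~(iii), I would first invoke {\bf Proposition~\ref{Prop:BSupDegre}} to bound the degrees of $N_0,N_1,W_0,W_1$ at round $j$ by $k + \lfloor (j-k)/2 \rfloor$, $k-1 + \lfloor (j-k+1)/2 \rfloor$, $\lfloor (j-k+1)/2 \rfloor$ and $\lfloor (j-k)/2 \rfloor$ respectively, all of which are at most $j$. Combined with the elementary complexities listed in the prerequisite table (addition, left multiplication by a degree-one polynomial, scalar multiplication), the worst-case type-1 update --- which dominates types~2 and~3 --- contributes at most $2j-1$ additions, $2j-1$ multiplications, $j+1$ applications of $\theta$, and $2$ divisions for the polynomial updates, together with $2(n-j)$ additions, $2(n-j)$ multiplications and $n-j$ applications of $\theta$ for the discrepancy-vector updates. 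Using the uniform bound $j \leq n$ and summing over the $n-k$ rounds then gives loop totals of $(n-k)(2n-1)$ additions, $(n-k)(2n-1)$ multiplications, $(n-k)(n+1)$ applications of $\theta$, and $2(n-k)$ divisions.

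For stage~(iv), {\bf Proposition~\ref{Prop:BSupDegre}} yields $\deg N_1 \leq k + \lfloor (n-k)/2 \rfloor$ and $\deg W_1 \leq \lfloor (n-k)/2 \rfloor$, so the Euclidean-division entry of the prerequisite table gives $(k-1)\lfloor (n-k)/2 \rfloor$ additions and multiplications, $(n-k)(k-1)$ applications of $\theta$ and no divisions. It then remains to add the four contributions for each operation type. For example, the total number of additions is $(2k^2-2k) + (2k-1)(n-k) + (n-k)(2n-1) + (k-1)(n-k)/2$, in which $(2k-1 + 2n-1)(n-k) = 2(n+k-1)(n-k) = 2n^2 - 2k^2 - 2n + 2k$, so that the $2k^2$ and $2k$ terms cancel and the sum collapses to $2n^2 - 2n + (k-1)(n-k)/2$; the multiplication and $\theta$-application counts reduce analogously, while the division count $2k + 2(n-k) = 2n$ is immediate. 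Since the dominant term in each count is $2n^2$ or $n^2$, the overall complexity is $O(n^2)$ operations in $L$.

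The main obstacle is purely bookkeeping: verifying that the type-1 update in {\bf Table~\ref{tbl:updates}} really upper-bounds types~2 and~3 so that a single per-round estimate is legitimate; ensuring that the degree bounds from {\bf Proposition~\ref{Prop:BSupDegre}} are applied with $j \leq n$ uniformly; and carrying out the final algebraic simplifications cleanly. One also has to observe that the early return on line~\ref{line:early} of Algorithm~\ref{algo:ta} can only skip remaining rounds and therefore preserves the worst-case estimate.
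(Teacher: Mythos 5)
Your proposal is correct and follows the same decomposition the paper uses: the cost of Algorithm~\ref{algo:AnnEtInt}, the initial discrepancy evaluations, the per-round interpolation updates bounded by considering only type-1 updates (whose per-round cost is independent of $j$ since the $j$-terms in the polynomial and discrepancy updates cancel to give $2n-1$), and the final left Euclidean division, all read off the same prerequisite tables and summed. The algebraic simplification you carry out for the additions count matches the paper's, and the observation that the secondary loop and the early return contribute no extra field operations is the same justification the paper gives.
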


\begin{remark}
	Since $k(n-k) \leq \frac{n^2}{4}$, the total number of multiplication is upper-bounded by $2.125 n^2$.
\end{remark}

\subsubsection{Complexity of the errors and erasures correction}

Correcting errors and erasures corresponds to remove the erasures and then correcting the residual rank errors. 
As seen at section \ref{sssec:eff}, this corresponds to:
\begin{itemize}
  \item Removing the column erasures by either directly puncturing the columns or computing some Gaussian elimination and then puncturing;
  \item Removing row  erasures by computing $\mathcal{V}_r$ and evaluate it on $(n-s_c)$ elements of $L$;
  \item Decoding errors in a generalized Gabidulin code of length $n-s_c$ and dimension $k+s_r$;
  \item Divide the obtained solutions by  $\mathcal{V}_r$.
\end{itemize}

Overall a direct corollary of the previous theorem is 
\begin{corollary}
Given a generalized Gabidulin code of parameters $[n,k,d]$ and a received word $\vectgr{y}$ 
with  $s_r$ row erasures and $s_c$ column erasures and the maximal number of full-errors, 
the complexity of recovering the information polynomial $f$ is:
\begin{itemize}
\item $O(n s_c m)$ operations in $K$;
\item $O(n^2)$ multiplications in $L$.
\end{itemize}
\end{corollary}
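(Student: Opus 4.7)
The plan is to bound the cost of each step in \textbf{Algorithm~\ref{algo:erassido}} separately, carefully distinguishing operations in $K$ (for the erasure-handling linear algebra) from multiplications in $L$ (for the core rank-error decoder). The final complexity is the sum of these costs.

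First, I would analyze the column-erasure elimination. Computing a matrix $\matrice{U} \in M_n(K)$ that puts the $s_c \times n$ matrix $\matrice{\widehat{B_c}}$ into column echelon form by Gaussian elimination costs $O(n s_c^2) = O(n s_c m)$ operations in $K$, since $s_c \leq m$. Only $s_c$ columns of $\matrice{U}$ encode non-trivial combinations, so applying $\matrice{U}$ to the $m \times n$ expansion of $\vectgr{y}$ over the fixed basis $\mathcal B$ (and likewise to the expansion of $\vectgr{g}$) costs $O(n s_c m)$ operations in $K$ each. These $K$-operations dominate the total $K$-cost.

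Second, I would handle the remaining steps, all of which live in $L$. The annihilator $\mathcal V_r(X)$ of the $K$-span of the columns of $\matrice{\widehat{A_r}}$ is built by the recursion of \textbf{Theorem~\ref{thm:degann}} in $O(s_r^2)$ operations in $L$; evaluating it on the $n-s_c$ surviving abscissas $\vectgr{\widetilde y}$ adds $O(s_r n)$ further operations in $L$. Then \textbf{Algorithm~\ref{algo:dec}} is invoked on a generalized Gabidulin code of length $n-s_c$ and dimension $k+s_r$, which by \textbf{Theorem~\ref{Theo:DecWB}} uses $O(n^2)$ multiplications in $L$. Finally, a left Euclidean division by $\mathcal V_r(X)$, of degree $s_r$, recovers $f(X)$ in $O((k+s_r)\,s_r) = O(n^2)$ operations in $L$, using the division bound from Section~\ref{part:cplx}. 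Summing these contributions yields the claimed $O(n s_c m)$ operations in $K$ and $O(n^2)$ multiplications in $L$.

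The only delicate point — more a matter of careful bookkeeping than a conceptual obstacle — is to justify that the action of $\matrice{U}$ is genuinely performed on the $K$-coordinate expansion of vectors of $L^n$, so that it contributes to the $K$-cost rather than to the $L$-multiplication count; this is the reason the cost is linear in $s_c m$ instead of quadratic in $n$ over $L$. Once this split is made explicit, every remaining step is either a standard $K$-linear computation or a direct application of the complexity estimates of Section~\ref{part:cplx}.
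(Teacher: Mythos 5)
Your proof is correct and takes essentially the same route the paper has in mind: the paper states the corollary as a ``direct consequence'' of \textbf{Theorem~\ref{Theo:DecWB}} after listing the same four steps (eliminate column erasures by puncturing/Gaussian elimination, eliminate row erasures via $\mathcal V_r$, run the error decoder on the $[n-s_c,k+s_r]$ code, divide by $\mathcal V_r$), and you simply supply the step-by-step bookkeeping the paper omits. The one point worth flagging: a naive application of an $n\times n$ matrix $\matrice{U}$ to an $m\times n$ matrix over $K$ would cost $O(mn^2)$, so your observation that only the last $n-s_c$ columns of $\matrice{U}$ are needed and each of them has at most $s_c+1$ nonzero entries is exactly the refinement needed to reach $O(n\,s_c\,m)$; you state this, and it is correct.
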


	\subsection{Some improvements}
\label{sssec:amel}

In this section we present two ways to improve the complexity of {\bf Algorithm~\ref{algo:ta}}. 
In a first time, we design a variant without divisions in $L$. This can be of interest since 
although in finite fields, the complexity  of multiplication and division is roughly identical, 
this is  not necessarily the case in other fields.

Second, we use the property that the polynomials $N_0$ and $N_1$ lie by construction in the left module generated by 
the annihilator and the interpolating polynomials computed at the {\em Initialization step}. This enables to reduce significantly the cost
of the updates by updating polynomials of smaller degrees.

\subsubsection{A division-free variant}

As we will see in a next section by computing in integer rings of rational fields, it can be very interesting to 
process the algorithm without making divisions, so that when inputs are integer values, there is no fraction along the computation. 
This can be accomplished modifying {\bf Algorithm~\ref{algo:ta}} to avoid divisions (see Table~\ref{tbl:DFupdates}).

\begin{itemize}
\item Consider the polynomials $\mathcal{A},~\mathcal{I}$ returned by {\bf Algorithm~\ref{algo:DFAnnEtInt}} 
on the first $k$ positions of the input vectors of  {\bf Algorithm~\ref{algo:ta}}. We have
  \begin{itemize}
  \item $\mathcal{A} = \mu \Annul{<g_1, \ldots, g_k>}$.
  \item $\mathcal{I} = \lambda \Interpol{[g_1, \ldots, g_k]}{[y_1, \ldots, y_k]}$.
  \end{itemize}
  Both polynomials are computed without divisions. It is not difficult to  see that $\lambda \ne 0$ and $\mu \ne 0$. 
Therefore, by replacing lines  $4$ and $5$ of  {\bf Algorithm~\ref{algo:ta}} by 
$N_1(X) \longleftarrow \mathcal{I}$ and $W_1(X) \longleftarrow \lambda$, 
we obtain polynomials still satisfying the interpolation conditions and of the same degree.  

\item Concerning the interpolation step of   {\bf Algorithm~\ref{algo:ta}}, we modify the updates so that there is no more divisions by 
$u_1$: 

\begin{center}
\begin{table}[!h]
\[
\begin{array}{|c|c|r|l|c|}
\hline 
	u_{0,i}    
	& u_{1,i}    
	& A'_1                                                     		 
	& A'_0							
	& \text{type}\\ 
\hline 
	* 	
	& \neq 0 
	& \left(u_{1,i}\Monome{X}{1} - \theta(u_{1,i}) \Monome{X}{00} \right) \cdot A_1 
	& u_{1,i} A_0 - u_{0,i} A_1			
	& 1   \\ 
\hline 
	= 0    
	& = 0    
	&  \Monome{X}{1} \cdot A_1                                  		 
	& A_0       						
	& 2   \\ 
\hline 
	\ne  0    
	& = 0 
	& \multicolumn{2}{c|}{\text{no update}}												  		
	& 3   \\ 
\hline 
\end{array} 
\]
\label{tbl:DFupdates}
\caption{Division-free updates}
\end{table}
\end{center}

\end{itemize}

\begin{algorithm}[!h]
\caption{Annihilator and Interpolator polynomials (Division-free variant)}
\label{algo:DFAnnEtInt}
\begin{algorithmic}
\ION \REQUIRE $g_1, \ldots, g_k \in L$ $K$-linearly independent
\IOB \REQUIRE $y_1, \ldots, y_k \in L$
\ION \ENSURE $\Ann(X)$ such that $\Eval{\Ann}{x_i}=0$
\IOB \ENSURE $\Int(X)$ and $\lambda$ such that $\Eval{\Int}{x_i}=\lambda y_i$
\STATE $\Ann \longleftarrow \Monome{X}{0}$	
\STATE $\Int \longleftarrow 0$
\STATE $\lambda \longleftarrow 1$
\FOR {$1 \leq i \leq k$}
\STATE $\Int \longleftarrow \Eval{\Ann}{x_i} \cdot \Int + (\lambda y_i-\Eval{\Int}{x_i})\cdot \Ann$
\STATE $\lambda \longleftarrow \Eval{\Ann}{x_i} \cdot \lambda$
\STATE $\Ann \longleftarrow (\Eval{\Ann}{x_i}\Monome{X}{1}-\theta(\Eval{\Ann}{x_i})\Monome{X}{00}) \cdot \Ann$
\ENDFOR
\RETURN $\Ann,\Int,\lambda$
\end{algorithmic}
\end{algorithm}

Since $W_1$ is not monic anymore, this requires more multiplications. 
Concerning the Euclidean division  $ W_1 \backslash N_1$, since $W_1$ is not monic, this amounts to $k-1$ additional divisions, but 
these are  exact divisions.


Hence this   division-free variant of  {\bf Algorithm~\ref{algo:ta}} requires roughly $1.5$ times more 
multiplications in $L$.

\subsubsection{Polynomials with lower degree}

In the algorithm  $N_0$ and $N_1$ are updated using  additions and left-multiplications.
Therefore they lie in the left-module generated by $\mathcal{A}$ and $\mathcal{I}$. 
This implies that  at every round $k+1 \leq j \leq n$ 
they can be expressed under the form 
\[
N_i= P_i \cdot \Ann + Q_i \cdot \Int , ~i=0,1, 
\]
for some polynomials $P_i$ and $Q_i$, which are updated similarly to the corresponding polynomial $N_i$.  
Moreover, the polynomials $Q_i$ are initialized by $Q_1=W_1=\Monome{X}{0}$ and $Q_0=W_0=0$. Since the polynomials have the 
same initialization and the same update, they are equal, that is at every round we have: 
\[
N_i= P_i \cdot \Ann + W_i \cdot \Int , ~i=0,1. 
\]
If we replace the lines $25$ and $27$ in the algorithm with 
\[
\begin{array}{lcl}
P'_1 &\longleftarrow& (\Monome{X}{1} - \frac{\theta(u_{1,i})}{u_{1,i}} \Monome{X}{00}) \cdot P_1,\\
P'_0 &\longleftarrow& P_0 - \frac{u_{0,i}}{u_{1,i}} P_1,
\end{array}
\]
and lines $31$ and $33$ accordingly, we now  update $W_i$ and $P_i$, such that $\deg{P_i} = \deg(N_i) - k$.
The discrepancy vectors are updated as before. Hence the number of operations at round  $j$ is now upper bounded by  
 \begin{center}
\begin{tabular}{|c|c|c|c|c|}
\hline 
 {}					& additions	& multiplications        & uses of $\theta$		& divisions\\ 
\hline
 $\vectgr{u}_0$ and $\vectgr{u}_1$	& $2(n-j)$	& $2(n-j)$              & $ n - j +1  $			&    $0$	\\
\hline
Up. ($P_j$ and $W_j$)	& $2j-k-1$	&                  $2j-k-1$ 			& $j-k+1$				& $2$	\\
\hline
\end{tabular} 
\end{center}

The gain in arithmetic operations for the interpolation step is $k(n-k)$
The final division is modified and becomes:
\[
W_1 \backslash (P_1 \Ann) + \Int. 
\]
Since the polynomials in the division have the same degree than in the basic algorithm, the complexity is the same.
We need $k+1$ additional operations to modify the final division. However this effect is largely compensated by 
the fact that the complexity of computing the   $P_i$'s is much smaller than the complexity of computing the  $N_i$'s. 

\begin{theorem}[Complexity improvement]
If one consider using the polynomials $P_i$ rather than $N_i$ in  {\bf Algorithm~\ref{algo:ta}}, 
the number of arithmetic operations over the field $L$ is reduced by  $k(n-k)$.
\end{theorem}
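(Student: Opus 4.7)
The plan is to exploit the module-theoretic observation stated in the paragraph preceding the theorem: because $N_0$ and $N_1$ are constructed by repeated left multiplications and $L$-linear combinations starting from $\Ann$ and $\Int$, they lie in the left module $L[X;\theta]\cdot \Ann + L[X;\theta]\cdot \Int$. I would formalize this as the invariant
\[
N_i \;=\; P_i \cdot \Ann + W_i \cdot \Int, \qquad i=0,1,
\]
maintained throughout the interpolation loop, and then show that each per-round update on $P_i$ costs exactly $k$ fewer arithmetic operations than the corresponding update on $N_i$.

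First I would verify the invariant by induction on the round number $j$. At initialization (lines 2--5 of \textbf{Algorithm~\ref{algo:ta}}), choosing $(P_0,W_0) = (\Monome{X}{0},0)$ and $(P_1,W_1) = (0,\Monome{X}{0})$ clearly gives $N_0 = \Ann$ and $N_1 = \Int$. For the inductive step, each of the three update types in \textbf{Table~\ref{tbl:updates}} acts on the pair $(N_0,N_1)$ as a left-linear transformation with coefficients in $L[X;\theta]$ (either a left multiplication by $\Monome{X}{1}-c\Monome{X}{00}$ or a constant scalar combination). Since $\Ann$ and $\Int$ are fixed, applying exactly the same transformation to $(P_0,P_1)$ and $(W_0,W_1)$ preserves the identity termwise, by distributivity of the symbolic product on the left.

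Next I would combine the invariant with the bounds of \textbf{Proposition~\ref{Prop:BSupDegre}} to conclude that $\deg(P_i) = \deg(N_i) - k$. Indeed, $\deg(\Ann)=k$ and $\deg(\Int)\le k-1$, while $\deg(W_i)\le \deg(N_i)-k$ by \textbf{Proposition~\ref{Prop:BSupDegre}}, so $\deg(W_i\cdot \Int) < \deg(N_i)$, which forces $\deg(P_i\cdot\Ann) = \deg(N_i)$ and hence $\deg(P_i)=\deg(N_i)-k$. Consequently, at round $j$, the two updates $P_1'\gets (\Monome{X}{1}-c\Monome{X}{00})\cdot P_1$ and $P_0'\gets P_0 - \lambda P_1$ each involve $\theta$-polynomials of degree $k$ smaller than their $N_i$-counterparts, so they cost $k$ fewer additions and $k$ fewer multiplications per round than the updates of $N_i$; this is precisely the difference between the $2j-1$ figure of the basic algorithm and the $2j-k-1$ figure in the revised per-round table.

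Summing the per-round savings over the $n-k$ iterations of the interpolation step gives the announced total of $k(n-k)$ operations. The modified final step $W_1 \backslash (P_1\cdot \Ann) + \Int$ has the same cost as the original left Euclidean division, since the dividend $P_1\cdot\Ann$ has the same degree as the former $N_1$ and $W_1$ is unchanged; the $O(1)$ extra operations to reconstruct and add $\Int$ are absorbed. The main obstacle is the careful bookkeeping: one must check that the invariant survives the degenerate update type where $u_{0,i}=u_{1,i}=0$ (trivial, since then only a left multiplication by $\Monome{X}{1}$ is applied) and the swap introduced by the secondary loop (which permutes components of $\vectgr{y}$ and $\vectgr{g}$ only, hence leaves $\Ann$, $\Int$ and the invariant unaffected), and that the degree gap of exactly $k$ persists throughout, so that the savings are uniformly $k$ per round.
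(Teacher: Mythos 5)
Your decomposition $N_i = P_i \cdot \Ann + W_i \cdot \Int$ with initialization $(P_0,W_0)=(1,0)$, $(P_1,W_1)=(0,1)$, and the inductive verification that the invariant is preserved by every update, coincide with the paper's approach. The degree argument, however, has a genuine gap. You claim that $\deg(W_i) \leq \deg(N_i) - k$ follows from Proposition~\ref{Prop:BSupDegre}; this is false. At the end of round $j=k+1$, Proposition~\ref{Prop:NonDegenere} gives $\deg(W_0)=1$ while Proposition~\ref{Prop:BSupDegre} caps $\deg(N_0)$ at $k$, so $\deg(W_0)=1 > 0 \geq \deg(N_0)-k$; the same overshoot happens for $W_0$ at every odd round $j-k$ and for $W_1$ at every even round $j-k$. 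Consequently the inference ``$\deg(W_i\cdot\Int) < \deg(N_i)$, hence $\deg(P_i\cdot\Ann) = \deg(N_i)$, hence $\deg(P_i) = \deg(N_i)-k$'' does not hold, and the claimed equality itself can fail: after round $k+1$ the type~$1$ update applied to the initial $P_1=0$ yields $P_0=0$ although $N_0\neq 0$.

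What is both true and sufficient for the cost count is an upper bound. Because $\Ann$ is monic of degree exactly $k$ and $L[X;\theta]$ has no zero divisors, $\deg(P_i)+k = \deg(P_i\cdot\Ann) = \deg(N_i - W_i\cdot\Int) \leq \max\{\deg(N_i),\,\deg(W_i)+k-1\}$. Substituting the bounds of Proposition~\ref{Prop:BSupDegre} on \emph{both} $N_i$ and $W_i$ and checking the two parities of $j-k$, this maximum never exceeds the Proposition~\ref{Prop:BSupDegre} bound on $\deg(N_i)$, so the bound on $\deg(P_i)$ is uniformly $k$ less than the bound on $\deg(N_i)$. (Equivalently: the pair $(P_0,P_1)$ runs exactly the same recurrence as $(N_0,N_1)$ but from initial degrees $(0,-\infty)$ in place of $(k,k-1)$, so the inductive proof of Proposition~\ref{Prop:BSupDegre} transfers verbatim with the additive constant $k$ dropped.) With this corrected inequality the per-round saving of $k$ and the total of $k(n-k)$ go through, and your remarks about the secondary loop, the degenerate update, and the unchanged cost of the modified final division are fine.
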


	
\section{The case of number fields}
\label{sssec:compNF}

	When the code alphabet is an infinite field, like a number field, the
proposed decoding algorithm has the disadvantage that the bit-size of intermediate
coefficients  grows a lot, in such a way that the decoding algorithm
is not practical at all. A standard computer algebra way of
circumventing this problem is to perform computations modulo a large enough
prime. In the Section, we discuss this technique in the number field
case, and relate a generalized Gabidulin code built with integral
elements to its reduction modulo a prime, which turns out to be  a classical
Gabidulin code over a finite field.

Assuming that the receiver knows an \textit{a priori} bound on the size of the
message, or on the size of the error, it can apply this technique with
a large enough prime, to get the exact result over the number field,
only by doing computations modulo the chosen prime.




	\subsection{Basic algebraic number theory}
In this Section, we recall some definitions and properties about
number fields, their integer rings,  ideals and ramification.  We
refer the reader to~\cite[§4.8.1 and~§4.8.2]{cohen1993course}.

Let $\Q\hookrightarrow F$ be a number field of degree $[F:\Q]=m$, and
denote by $\OF$ its \emph{integer ring}.  The integer rings of number
fields are \emph{Dedekind rings}: a non zero ideal is prime if and
only if it is maximal.
A prime ideal of $\OF$ restricted to $\Z$ is a prime ideal $p\Z$ of $\Z$.
Conversely, let $p\Z$ be a prime ideal of $\Z$.  The ideal
$\mathfrak{p}$ generated by $p\Z$ in $F$ is generally not a prime
ideal, and we have the following decomposition into prime ideals:
\begin{align} 
\mathfrak{p}=\prod_{i=1}^q \mathfrak{p}_i^{e_i} 
\end{align} 
where the $\mathfrak{p}_i$'s are
the prime ideals of $\OF$ whose restriction to $\Z$ is $p\Z$.  The
exponent $e_i$ is called the \emph{ramification index}.  The
\emph{residue field} $\OF / \mathfrak{p}_i$ is an extension of
$\Z/p\Z$, its extension degree $f_i$ is called the \emph{residual
  degree}.  We say that $\mathfrak{p}_i$ is \emph{above}
$\mathfrak{p}$, and conversely that $\mathfrak{p}$ is \emph{below}
$\mathfrak{p}_i$,

We have the relation $ \sum_{i=1}^q f_i e_i = m $ and when
$\Q \hookrightarrow F$ is a Galois extension, we have that all the
indices $e_i$ are equal to the same $e$, and all residual degrees
$f_i$ are equal to the same $f$, with $qfe=m$.

We say that $p$ is \emph{ramified} if there is some $i$ such that $e_i > 1$. 
It happens only for a finite number of prime numbers $p$.
The number $p$ is \emph{inert} if it is unramified and $q=1$ (thus $f=m$).
Conversely, $p$ \emph{splits totally} in $\Z$ if it is unramified and $q=m$ (thus $e_i=f_i=1$).

Assume that $F=\Q[z]$ where $z$ is an algebraic integer. Let $T(X)\in\Q[X]$ be the minimal polynomial of $z$. 
Then for any prime which doesn't divide the index $[\OF:\Z[z]]$, 
we can obtain the prime decomposition of $\mathfrak{p}$ from the decomposition of $T(X)$ modulo $p$.
In particular, the prime $p$ is inert if $T(X)$ is irreducible modulo~$p$~\cite[Thm 4.8.13]{cohen1993course}. 

\begin{example}
	In the ring $\Z[i]$, 
	\begin{itemize}
		\item $2$ is ramified. Indeed, $2=(1+i)(1-i)$ but $i+1$ and $i-1=i(i+1)$ span the same ideal.
		\item $3$ is inert, since it is a prime number in $\Z[i]$. Thus, $\Z[i]/(3) \simeq \F_9$.
		\item $5$ splits totally. Indeed, $5=(2+i)(2-i)$ and these factors span two distinct ideals. Thus, $\Z[i]/(5) \simeq \F_5 \times \F_5$.
	\end{itemize}
\end{example}
Actually, to build relevant examples of generalized Gabidulin codes, we shall
need a more elaborate situation, with a base field
being already an extension of $\Q$. Let
$\Q\hookrightarrow K \hookrightarrow L$ be an extension of number
fields, $p\in \N$ a prime number, $\mathfrak p$ a prime ideal of $\OK$
above $p$, let $\OL$ be the ring of the integers of $L$, and
$\mathfrak P$ a prime ideal of $\OL$ above $\mathfrak p$, i.e.\
$\mathfrak p \in \mathfrak P$. Then $\OL/\mathfrak{P}$ is finite
degree extension of $\OK/\mathfrak{p}$. We can describe the Galois group of
$\OK/\mathfrak{p} \hookrightarrow \OL/\mathfrak{P}$ from the Galois
group $\Auto_K( L)$.  Recall that the \emph{decomposition group} of
$\mathfrak{P}$ is the following subgroup of the Galois group:
\[
D_{\mathfrak P}=\{\theta \in \Auto_K(L) : \theta(\mathfrak{P})=\mathfrak{P} \}
\]
whose cardinal is $ef$~\cite{cohen1993course}.  Consider the map
\begin{equation}\label{eq:psi}
\psi_{\mathfrak P} : 
\begin{array}[t]{rcl}
D_{\mathfrak P} &\rightarrow & \Auto_{\OK/\mathfrak p}(\OL/\mathfrak{P})\\
\theta & \mapsto& \overline{\theta}
\end{array}
\end{equation}
where  $\overline\theta$ is as follows:
\begin{equation}\label{eq:thetabar}
\overline\theta:
\begin{array}[t]{rcl}
 \OL/\mathfrak P &\rightarrow & \OL/\mathfrak P\\
x+\mathfrak{P} & \mapsto&\theta(x)+ \mathfrak{P}
\end{array}.
\end{equation}
Then $\psi_{\mathfrak P}$ is a morphism of groups with kernel
\[ I_{\mathfrak P}= \left\lbrace \theta \in D_{\mathfrak P} : \forall x \in \OL, \theta(x)-x \in \mathfrak{P} \right\rbrace \]
which is the \emph{inertia group} of $\mathfrak{P}$.

\begin{proposition}\cite[§6.2]{samuel1971theorie} \cite[§6.2]{samuel1971theorie}
\label{prop:thno}
	With the previous notation,
	\begin{enumerate}
		\item $\OL/\mathfrak{P}$ is a Galois extension of $\OK/\mathfrak{p}$, of degree $f$,
		\item $\psi_{\mathfrak P}$ is a surjective morphism
                  from $D_{\mathfrak P}$ to the Galois group $\Auto_{\OK/\mathfrak p}(\OL/\mathfrak{P})$,
		\item $I_{\mathfrak P}$ has cardinal $e$.
	\end{enumerate}
\end{proposition}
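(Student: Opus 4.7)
The plan is to prove the three items essentially in the order (i), (ii), (iii), exploiting the recalled identities $|D_{\mathfrak P}| = ef$ and $\sum_i e_i f_i = m$ (which in the Galois setting yields $qef = m$). As a preliminary, I would check that $\psi_{\mathfrak P}$ is well-defined and a group morphism. Since $\theta \in D_{\mathfrak P}$ stabilizes $\mathfrak P$, the formula~(\ref{eq:thetabar}) gives a well-defined map on $\OL/\mathfrak P$; since $\theta$ fixes $K$ pointwise and hence $\OK$ pointwise, $\overline\theta$ fixes $\OK/\mathfrak p \subset \OL/\mathfrak P$ pointwise. Composition in $\OL$ descends to composition on the quotient, so $\psi_{\mathfrak P}$ is a morphism whose kernel is, by definition, exactly $I_{\mathfrak P}$.

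For item (i), the residue fields $\OK/\mathfrak p$ and $\OL/\mathfrak P$ are finite extensions of $\Z/p\Z$, hence finite fields; any finite extension of a finite field is automatically separable and normal, so the extension is Galois. Its degree is $f$ by the very definition of the residual degree.

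The real work is item (ii), surjectivity of $\psi_{\mathfrak P}$. The plan is a standard primitive-element-plus-CRT argument. Pick a generator $\overline\alpha$ of the cyclic extension $\OL/\mathfrak P$ over $\OK/\mathfrak p$ and lift it to some $\alpha \in \OL$; by the Chinese Remainder Theorem applied to the finitely many primes of $\OL$ above $\mathfrak p$, I may further arrange $\alpha \in \mathfrak P'$ for every prime $\mathfrak P' \neq \mathfrak P$ above $\mathfrak p$. Let $P(X) \in \OK[X]$ be the minimal polynomial of $\alpha$ over $K$; since $L/K$ is Galois, $P$ splits in $L$ as $\prod_{\sigma} (X - \sigma(\alpha))$ with $\sigma$ running over a suitable set of representatives of $\Auto_K(L)$. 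Reducing modulo $\mathfrak P$ produces a polynomial $\overline P \in (\OK/\mathfrak p)[X]$ annihilating $\overline\alpha$, whose roots in $\OL/\mathfrak P$ are the $\overline{\sigma(\alpha)}$. The key observation, using the CRT choice, is that $\sigma(\alpha) \in \mathfrak P$ iff $\alpha \in \sigma^{-1}(\mathfrak P)$, which holds precisely when $\sigma \notin D_{\mathfrak P}$; the non-zero roots of $\overline P$ are therefore exactly $\{\overline{\sigma(\alpha)} : \sigma \in D_{\mathfrak P}\}$. Every conjugate of $\overline\alpha$ under $\Auto_{\OK/\mathfrak p}(\OL/\mathfrak P)$ is a non-zero root of $\overline P$, hence of the form $\overline{\sigma(\alpha)} = \overline\sigma(\overline\alpha)$ for some $\sigma \in D_{\mathfrak P}$. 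Since $\overline\alpha$ generates the residue extension, $\psi_{\mathfrak P}$ is surjective.

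Item (iii) is then immediate from the first isomorphism theorem: $|D_{\mathfrak P}|/|I_{\mathfrak P}| = |\Auto_{\OK/\mathfrak p}(\OL/\mathfrak P)| = f$, and since $|D_{\mathfrak P}| = ef$ we get $|I_{\mathfrak P}| = e$. The main obstacle is the primitive-element-plus-CRT step in (ii); the ingredients are classical, but one must execute the CRT choice carefully in order to separate the behaviour of $\alpha$ at $\mathfrak P$ from its behaviour at the other primes above $\mathfrak p$, which is what forces the non-zero roots of $\overline P$ to come exactly from the decomposition group.
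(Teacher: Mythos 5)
The paper does not give a proof of this proposition; it simply cites Samuel, §6.2. Your argument is a correct reconstruction of the classical textbook proof (primitive element plus Chinese Remainder Theorem), and it matches the standard treatment in Samuel and in essentially all algebraic number theory references. A few small points worth tightening: you should explicitly arrange $\overline\alpha\ne 0$ (automatic when $f>1$; take $\overline\alpha=1$ if $f=1$), since the dichotomy ``nonzero roots of $\overline P$ come exactly from $D_{\mathfrak P}$'' rests on $\alpha\notin\mathfrak P$. Also, the fact that the minimal polynomial of $\alpha$ over $K$ lies in $\OK[X]$ uses that $\OK$ is integrally closed; this is implicit in your ``$P\in\OK[X]$'' and is fine, but is worth naming since it is the reason reduction modulo $\mathfrak p$ makes sense. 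Finally, note that your item~(i) shortcut (finite residue fields are perfect, so the extension is automatically Galois) is specific to the number-field setting of the paper; Samuel's general argument proves normality of the residue extension directly and applies to any Dedekind base, but for the purposes of this paper your shorter route is perfectly valid.
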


\subsection{Integral codes}

We now can give some definitions about codes with integer
coefficients.  Let $\Q\hookrightarrow K \hookrightarrow L$ be a number field and $\OL$
its integer ring (also called maximal order).  We also consider an
integral basis $\mathcal{B} = \VecteurLigneCoins{b_1}{b_\ell}$ of
$\OL$ (see~\cite[§4.4]{cohen1993course}). An integral basis is a basis
of the $\Z$-module $\OL$, and any element $x \in \OL$ can be uniquely
decomposed as
\[
 x = \sum_{i=1}^{\ell} x_i b_i, x_i \in \Z. 
\]
For any element $x \in L$, there exists $\lambda \in \Z$ such that
$\lambda x \in \OL$.  Let $\mathcal{C} \subset L^n$ be a code with
generating matrix $G \in \M_{k,n}(L)$.  Then $\mathcal C$ admits a
generating matrix $G' \in \M_{k,n}(\OL)$, with $\lambda\in \Z$ such
that $\lambda G=G'$.  For Gabidulin codes, we will furthermore consider
a support $\vectgr g$ of integral elements.
\begin{prop}
  Let $\Q\hookrightarrow K \hookrightarrow L$ be an extension of number
  fields, and $\OK$, $\OL$ the corresponding integers rings.  Let
  $\vecteurgras{g} \in L^n$ be a vector of $K$-linearly independent
  elements of $L$. Let $\theta \in \Auto_K(L)$, $0\leq k\leq n$, and
  $\mathcal C$ be the code $\mathcal C=\Gab_{\theta,k}(\vectgr
  g)$. Then there exists a support $\vecteurgras{g'} \in \OL^n$ such that $C=\Gab_{\theta,k}(\vectgr g')$.
\end{prop}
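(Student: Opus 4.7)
The plan is to produce $\vectgr{g'}$ by scaling $\vectgr g$ by a single nonzero rational integer $\lambda \in \Z$ chosen so that all components $\lambda g_i$ become algebraic integers, and then to verify that the resulting evaluation code coincides with $\mathcal C$.

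First, I would invoke the standard fact that every $x \in L$ admits some nonzero $d \in \Z$ with $dx \in \OL$: starting from the minimal polynomial of $x$ over $\Q$, clear denominators to bring it into $\Z[X]$, then multiply $x$ by the leading coefficient so as to make it a root of a monic integer polynomial. Applying this coordinatewise to $g_1,\dots,g_n$ and taking the product of the resulting denominators, one obtains $\lambda \in \Z \setminus \{0\}$ with $\lambda g_i \in \OL$ for every $i$. Setting $\vectgr{g'} = \lambda \vectgr g$ then yields an element of $\OL^n$, and since $\lambda \in K^{\ast}$, the $K$-linear independence of $\vectgr g$ is preserved, so $\vectgr{g'}$ is a valid support.

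The core of the argument is then the equality $\Gab_{\theta,k}(\vectgr{g'}) = \Gab_{\theta,k}(\vectgr g)$. The crucial observation is that because $\theta$ fixes $K$ pointwise, $\theta^j(\lambda) = \lambda$ for every $j$, so that for any $f = \sum_j a_j X^j \in L[X;\theta]$ with $\deg f < k$,
\[
  \Eval{f}{\lambda g_i} \;=\; \sum_j a_j \theta^j(\lambda)\, \theta^j(g_i) \;=\; \lambda\, \Eval{f}{g_i}.
\]
This gives $\Gab_{\theta,k}(\vectgr{g'}) = \lambda \cdot \Gab_{\theta,k}(\vectgr g)$. Since $\lambda \in K$ is central in $L[X;\theta]$, the map $f \mapsto \lambda f$ is a degree-preserving bijection of $\{f \in L[X;\theta] : \deg f < k\}$, so $\Gab_{\theta,k}(\vectgr g)$ is stable under multiplication by both $\lambda$ and $\lambda^{-1}\in L$, whence $\lambda\cdot \Gab_{\theta,k}(\vectgr g) = \Gab_{\theta,k}(\vectgr g)$, which closes the argument.

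There is no real obstacle in this proof; the only conceptual point is that the scaling factor must be chosen in $K = L^{\theta}$ so that it commutes with the $\theta$-action implicit in the evaluation operator, which is precisely why a rational integer (or, more generally, any suitable nonzero element of $\OK$) is enough. \qed
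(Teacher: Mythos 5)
Your proof is correct and follows the same route the paper takes (clear denominators with a single nonzero $\lambda\in\Z$ and set $\vectgr{g'}=\lambda\vectgr g$); the paper merely asserts that $\lambda\vectgr g$ ``defines the same code,'' while you supply the details -- the computation $\Eval{f}{\lambda g_i}=\lambda\,\Eval{f}{g_i}$ using $\theta^j(\lambda)=\lambda$ for $\lambda\in K$, and the observation that $L$-linearity of the code then gives $\lambda\cdot\Gab_{\theta,k}(\vectgr g)=\Gab_{\theta,k}(\vectgr g)$. Your remark that the scaling factor must lie in $L^\theta=K$ for this to work is exactly the point the paper leaves implicit.
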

\begin{proof}
  Let $\lambda\in \Z$ such that $\lambda g_i\in \OL$,
  $i=1,\dots,n$. Such a $\lambda$ exists since we have a finite number
  of $g_i$'s. Then $\vecteurgras{g'}=\lambda\vecteurgras{g}\in \OL^n$
  is an integral support which defines the same code.\qed\end{proof}
\begin{definition}
Let $\Q\hookrightarrow K\hookrightarrow L$ be a number field and $\OL$
its integer ring (also called maximal order). Let $\mathcal C$ be a code in $L^n$.
Its associated \emph{integral code} is $\OC =\mathcal C \cap \OL^n$.
\end{definition}
\begin{definition}
  Let $Q\hookrightarrow K \hookrightarrow L$ be an extension of number
  fields, and let $C$ be Gabidulin code with  an integral
  support $\vecteurgras{g} = (g_1, \ldots, g_n) \in \OL^n$, its
  \emph{restricted code} is
\[ 	
	\mathcal{G} =
	\left \{ \VecteurLigneCoins{\Eval{f}{g_1}}{\Eval{f}{g_n}}~:~
	f \in \OL[X;\theta], \deg(f)<k 
	\right\}.
\]
\end{definition}

\begin{remark}
	The name of this code due to the coefficients of $f$, restricted to $\OL$.
	The restricted code is included in the integral code of the Gabidulin code.
	Nevertheless, the inclusion is strict. Consider for example a code whose support is 
	$(2, 2\alpha, 2\alpha^2, \ldots)$, and the information word $f=\frac{1}{2}X^0$.
	The corresponding codeword is $(1, \alpha, \alpha^2, \ldots)$, 
	which belongs to the integral code but not to the restricted code.
\end{remark}

\subsection{The intermediate growth of coefficients}


\begin{definition}
  Let $\Q \hookrightarrow K \hookrightarrow L$ be a number field
  extension of degree $\ell=[L:\Q]$, and let $\OK$ and $\OL$ be the
  corresponding integer rings, and
  $\mathcal{B} = \VecteurLigneCoins{b_1}{b_\ell}$ an integral basis of
  $\OL$. 
The \emph{size} of $x\in \OL$, with  $x$ uniquely written
\[
 x = \sum_{i=1}^{\ell} x_i b_i, x_i \in \Z,
\] is defined as
$ \vert x \vert = \log_2 (\max_i (\vert x_i \vert)) $, where
$\vert x_i \vert$ denotes the absolute value of $x_i \in \Z$.  The
\emph{size} of a polynomial in $\OL[X;\theta]$ or, of a vector in $L^n$, is the
maximal size of its coefficients.
\end{definition}
An observation is that even if the inputs of {\bf Algorithm
  \ref{algo:ta}} are small, and also if the output polynomial $f(X)$ is
small, the size of the intermediate $\theta$-polynomials $N(X)$ and
$W(X)$ can be quite large, even though their division $f(X)$ is small.
If we consider the division-free variant of the algorithm (see
algorithm \ref{algo:DFAnnEtInt} and table of updates
\ref{tbl:DFupdates}) the size of the $\theta$-polynomials $(N_i,W_i)$
is roughly doubled at every step.

\begin{example}
  We consider a generalized Gabidulin code with dimension $k=4$ and
  length $n=8$ over the cyclotomic extension of $\Q$ with degree
  $m=10$.  We choose an information polynomial with size $1$, and we
  add an error of size $2$ and rank $2$.  After the initialization
  step of the algorithm, polynomials $(N_0,W_0)$ and $(N_1,W_1)$ are
  of size $12$.  After the last step they have size $191$ (not very
  far from $2^4$ and $2^8$). In {\bf Table~\ref{tbl:tps-comp-L}}, we report  timings  for decoding codewords of codes of length up to $16$.

\end{example}

\subsection{Integer Gabidulin codes modulo a prime ideal}
In this Section, we consider codes defined on the integer ring $\OL$.
We define their reduction modulo an ideal of $\OL$,
and study the case of Gabidulin codes.
\begin{definition}
	\label{Defi:CodeQuotient}
	Let $\Q\hookrightarrow K \hookrightarrow  L$ be a number field, $\mathcal{C} \subset \OL^n$ be an integral code and  $\mathfrak P$ be an ideal of $\OL$.
	The reduction of the code modulo $\mathfrak P$ is 
	\[ \overline{\mathcal{C}} = \{ (\overline{c}_1, \dots, \overline{c}_n) : (c_1, \dots, c_n) \in \mathcal{C} \} \]
	where $\overline{x}$ denotes the reduction of $x$ modulo $\mathfrak P$.
\end{definition}
We want to study a Gabidulin code modulo $\mathfrak P$.  Under
 conditions given in the following theorem, this code is well defined and is a Gabidulin code
over a finite field.
\begin{theorem}
  Let $\Q \hookrightarrow K \hookrightarrow L$ be a number field extension and
  let $\Z$, $\OK$ and $\OL$ be the associated integer rings.  Let
  $\theta$ be a generator of the Galois group $\Auto_{K}(L)$. Let
  $\vectgr{g} = \VecteurLigneCoins{g_1}{g_n} \in \OL^n$ be a
  support of integral elements,  and $ \Gab_{\theta,k}(\vectgr{g}) $ be a generalized
  Gabidulin code with  support
  $\vectgr{g}$, whose associated
  restricted Gabidulin code is 
	\[ 
		\OGab_{\theta,k}(\vectgr{g}) = \left \{ 
		\VecteurLigneCoins{\Eval{f}{g_1}}{\Eval{f}{g_n}} :
		f \in \OL[X;\theta], \deg(f)<k 
		\right\}.
	\]
	Finally, let $\mathfrak{P}$ be a prime ideal of $\OL$, 
	and let $\mathfrak{p}\subset \OK$ denote the ideal below $\mathfrak{P}$.
If the following conditions hold
	\begin{enumerate}
		\item $\theta(\mathfrak{P})=\mathfrak{P}$,
		\item $\overline{g_1}, \ldots, \overline{g_n}$ are
                  $\OK/\mathfrak{p}$-linearly independent,
	\end{enumerate}
then the following code
	\[ 
		\overline{\mathcal{G}} =\left \{
		\VecteurLigneCoins{\Eval{\overline{f}}{\overline{g_1}}}{\Eval{\overline{f}}{\overline{g_n}}} :
		\overline{f} \in (\OL/\mathfrak{P})[X;\overline{\theta}], \deg(\overline{f})<k 
		\right\},
	\]
	where $\overline{\theta}$ is defined in Eq.~\ref{eq:thetabar},
        is a classical Gabidulin code
        $ \Gab_{\overline\theta,k}(\overline{\vectgr g}) $ defined
        using the extension of finite fields
        $\OK/\mathfrak{p}\hookrightarrow\OL/\mathfrak P$, and the
        automorphism $\overline\theta$. Furthermore, any
        $\overline c\in \overline{\mathcal{G}} $ is the reduction
        modulo $\mathfrak P$ of a codeword $c\in\OGab_{\theta,k}(\vectgr{g}) $.
\end{theorem}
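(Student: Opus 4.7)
The plan is to unpack $\overline{\mathcal{G}}$ as literally a classical Gabidulin code over the residue field extension, and then produce explicit lifts to obtain the surjectivity statement. The argument has three pieces: check that $\overline{\theta}$ is a legitimate generator, check that reduction mod $\mathfrak{P}$ commutes with $\theta$-polynomial evaluation, and finally lift codewords.

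First I would use condition~(1) to place $\theta$ inside the decomposition group $D_{\mathfrak{P}}$, so that the morphism $\psi_{\mathfrak{P}}(\theta)$ of~\eqref{eq:psi} is defined and equals $\overline{\theta}$ as given in~\eqref{eq:thetabar}; it is an $\OK/\mathfrak{p}$-automorphism of $\OL/\mathfrak{P}$ because $\theta$ fixes $\OK$ pointwise. Since the ambient Gabidulin code is assumed cyclic with $\Auto_K(L)=\langle\theta\rangle$, Proposition~\ref{prop:thno} (surjectivity of $\psi_{\mathfrak P}$) forces $\overline{\theta}$ to generate the Galois group $\Auto_{\OK/\mathfrak p}(\OL/\mathfrak P)$ of the finite-field extension $\OK/\mathfrak{p}\hookrightarrow\OL/\mathfrak{P}$, which is automatically cyclic. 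Together with condition~(2), all the ingredients required by \textbf{Definition~\ref{Def:GabiGen}} are present, so $\Gab_{\overline\theta,k}(\overline{\vectgr g})$ is a bona fide classical Gabidulin code.

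Next I would observe that reduction $\pi_\mathfrak P:\OL\to\OL/\mathfrak P$ is a ring homomorphism and that condition~(1) gives $\pi_\mathfrak P\circ\theta^i=\overline{\theta}^i\circ\pi_\mathfrak P$ for every $i\geq 0$. Extending $\pi_\mathfrak P$ coefficient-wise defines a ring morphism $\OL[X;\theta]\to(\OL/\mathfrak P)[X;\overline\theta]$, $f\mapsto\overline f$, which is degree-nonincreasing. Expanding $\Eval{f}{g}=\sum_i a_i\theta^i(g)$ and reducing yields the compatibility
\[
\overline{\Eval{f}{g}}=\sum_i \overline{a_i}\,\overline\theta^{\,i}(\overline g)=\Eval{\overline f}{\overline g},\qquad f\in\OL[X;\theta],\ g\in\OL.
\]
This identifies the set $\overline{\mathcal G}$ defined in the statement with $\Gab_{\overline\theta,k}(\overline{\vectgr g})$, establishing the first assertion.

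For the second assertion, given $\overline c\in\overline{\mathcal G}$, write $\overline c=\VecteurLigneCoins{\Eval{\overline f}{\overline g_1}}{\Eval{\overline f}{\overline g_n}}$ for some $\overline f=\sum_{i=0}^{k-1}\overline{a_i}X^i\in(\OL/\mathfrak P)[X;\overline\theta]$. Pick any representatives $a_i\in\OL$ of the $\overline{a_i}$ and set $f=\sum_{i=0}^{k-1}a_iX^i\in\OL[X;\theta]$, which satisfies $\deg f<k$. The codeword $c=\VecteurLigneCoins{\Eval{f}{g_1}}{\Eval{f}{g_n}}$ lies in $\OGab_{\theta,k}(\vectgr g)$ by construction, and the compatibility just proved gives $\pi_\mathfrak P(c)=\overline c$, as required. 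The one nontrivial point in the whole argument is the first step: producing a generator of $\Auto_{\OK/\mathfrak p}(\OL/\mathfrak P)$ from $\theta$. Everything else is routine once the morphism $\OL[X;\theta]\to(\OL/\mathfrak P)[X;\overline\theta]$ is in place.
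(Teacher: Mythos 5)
Your proof is correct and follows essentially the same route as the paper's: you invoke condition~(1) to place $\theta$ in the decomposition group $D_{\mathfrak P}$, use Proposition~\ref{prop:thno} to deduce that $\overline\theta=\psi_{\mathfrak P}(\theta)$ generates $\Auto_{\OK/\mathfrak p}(\OL/\mathfrak P)$, and use condition~(2) to make $\overline{\vectgr g}$ a valid support, which is exactly what the paper does. The only difference is that you make explicit two points the paper leaves implicit, namely the coefficient-wise compatibility $\overline{\Eval{f}{g}}=\Eval{\overline f}{\overline g}$ (which hinges on $\theta(\mathfrak P)=\mathfrak P$ making $\pi_{\mathfrak P}\circ\theta=\overline\theta\circ\pi_{\mathfrak P}$) and the lifting argument for the final surjectivity claim, so your write-up is a more complete version of the same argument.
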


\begin{proof}
  Condition 2 makes $\overline{\vectgr g}$ a valid support of linearly
  elements in $\OL/\mathfrak P$. Then $\overline\theta$ is a generator
  element of the Galois Group of the extension of finite fields
  $\OK/\mathfrak{p}\hookrightarrow\OL/\mathfrak P$, from {\bf
    Proposition~\ref{prop:thno}}. Actually $\overline\theta $ is a
  power of the Frobenius automorphism $x\mapsto x^q$, with
  $q=\size{\OK/\mathfrak{p}}$.\qed
\end{proof}

\begin{remark}
	To get $\overline{g_1}, \ldots, \overline{g_n}$ linearly independent over $\OK/\mathfrak{p}$,
	we need $n \leq [\OL/\mathfrak P:\OK/\mathfrak{p}]=f$.
	Usually, codes are designed with $n=m$, or with $n$ close to $m$.
	Thus, inert primes (for which $f=m$) are of a particular interest.
\end{remark}

\subsection{Decoding using a prime ideal}

We exhibits a simple link between decoding a generalized Gabidulin
code and decoding its reduction modulo a prime. First we provide a Lemma.
\begin{lemma}\label{imm:b}
	Let $\vectgr{e}=(e_1,\ldots,e_n) \in \mathcal{O}_L$ be a vector of $K$-rank $t$. 
	Then the $\mathcal{O}_K/\mathfrak{p}$-rank of  $\overline{\vectgr{e}}=\vectgr{e} \pmod{\mathfrak{P}}$ is at most $ t$.
\end{lemma}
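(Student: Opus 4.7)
My plan is to reduce the rank comparison to the standard fact that the rank of a matrix with entries in an integral domain cannot strictly grow upon reduction modulo a prime ideal, and then to translate between vector rank and Moore-type matrix rank on both sides of the reduction. First I would invoke \textbf{Theorem~\ref{thm:equivrangL}} to rewrite the hypothesis as $\rang_L \matrice{V_{e,\theta}}=t$, where $\matrice{V_{e,\theta}}$ is the matrix whose $(i+1)$-th row is $(\theta^i(e_1),\dots,\theta^i(e_n))$. Since each $e_j$ lies in $\OL$ and $\theta$ preserves $\OL$, all entries of $\matrice{V_{e,\theta}}$ are already in $\OL$.

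The key step is the minor argument: $\rang_L \matrice{V_{e,\theta}}=t$ is equivalent to the vanishing in $L$ of every $(t+1)\times(t+1)$ minor of $\matrice{V_{e,\theta}}$, equivalently the vanishing of those minors as elements of $\OL$. Reducing modulo $\mathfrak{P}$ then turns these into vanishing $(t+1)\times(t+1)$ minors of $\overline{\matrice{V_{e,\theta}}}$ in $\OL/\mathfrak{P}$, which immediately gives $\rang_{\OL/\mathfrak{P}}\,\overline{\matrice{V_{e,\theta}}}\leq t$.

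To finish, I would identify $\overline{\matrice{V_{e,\theta}}}$ with the analogous Moore-type matrix $\matrice{V_{\overline e,\overline\theta}}$ attached to $\overline{\vectgr{e}}$ and $\overline{\theta}$. This relies on the hypothesis $\theta(\mathfrak{P})=\mathfrak{P}$ already in force in the preceding theorem, which ensures both that $\overline{\theta}$ is well-defined on $\OL/\mathfrak{P}$ and that reduction modulo $\mathfrak{P}$ commutes with applying $\theta$. A final application of \textbf{Theorem~\ref{thm:equivrangL}} to the cyclic Galois extension of finite fields $\OK/\mathfrak{p}\hookrightarrow\OL/\mathfrak{P}$ with automorphism $\overline{\theta}$ (which automatically satisfies $\Hdim$ by \textbf{Theorem~\ref{thm:framework}}) yields $\rang_{\OK/\mathfrak{p}}(\overline{\vectgr{e}})=\rang_{\OL/\mathfrak{P}}\matrice{V_{\overline e,\overline\theta}}\leq t$. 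The only delicate point is the bookkeeping between the matrix before and after reduction, and this is handled entirely by the hypothesis $\theta(\mathfrak{P})=\mathfrak{P}$, so I do not foresee any substantial obstacle.
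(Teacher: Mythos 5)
Your argument is correct, but it takes a genuinely different route from the paper. The paper works with the annihilator-polynomial characterization of rank ($\wa$): it takes the monic $\theta$-polynomial $E$ of degree $t$ annihilating the $e_i$'s, reduces it coefficient-wise modulo $\mathfrak{P}$, observes that $\overline E$ is monic of degree $t$ (hence nonzero) and still vanishes on the $\overline{e_i}$'s, and concludes immediately from the definition $\wa(\overline{\vectgr e})=\deg\Annul{\overline{\vectgr e}}\le t$. You instead pass through the Moore-matrix characterization $\wthL$: reduce the matrix $\matrice{V_{e,\theta}}$ entry-wise, control its rank via vanishing of $(t+1)\times(t+1)$ minors (which live in $\OL$), and then translate back via Theorem~\ref{thm:equivrangL}. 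Your version has a mild bookkeeping cost — you must note that $\overline{\matrice{V_{e,\theta}}}$ (with $s=\mathrm{ord}(\theta)$ rows) and $\matrice{V_{\overline e,\overline\theta}}$ (with $\mathrm{ord}(\overline\theta)$ rows) can differ in shape when $\overline\theta$ has smaller order, though since the extra rows are periodic repeats the ranks agree — but it also has a real advantage: the entries of the Moore matrix are manifestly in $\OL$, so reduction modulo $\mathfrak P$ is unproblematic. In contrast, the paper reduces $E=\Annul{\vectgr e}$ without comment, and the recursive construction of the annihilator involves divisions, so its coefficients need not lie in $\OL$ a priori; one has to argue (e.g.\ by localizing at $\mathfrak P$, or by first clearing denominators and using monicity to see the leading coefficient is a unit at $\mathfrak P$) that $\overline E$ makes sense. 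Your minor-based argument sidesteps that subtlety entirely, at the price of one more application of the equivalence of rank metrics.
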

\begin{proof}
  There exists a monic $\theta$-polynomial $E$ of degree $t$ which
  vanishes on the $e_i$'s.  Therefore
  $\Eval{\overline{E}}{\overline{e_i}} = \overline{\Eval{E}{e_i}}=0$
  and $\overline E$ is non zero, since $E$ is monic.  From our
  definition of rank metric with annihilator polynomials, this implies
  that $\overline{\vectgr{e}}$ has rank less than or equal to $ t$.\qed
\end{proof}

\begin{theorem}\label{imm:a}
  Let $\mathcal{G}$ be an restricted generalized Gabidulin code,
  $\mathfrak{p}$ be an inert prime of $\OK$, such that
  $\overline{\mathcal{G}}$ is a generalized Gabidulin code.  Suppose
  that $N,W \in \OL[X;\theta]$ is solution to
  $\LR(n,k,t,\vectgr{g}, \vectgr{y}=\VecteurLigneCoins{y_1}{y_n} \in
  \OL^n)$ (see {\bf Definition~\ref{def:LRP}}).  Let
  $\overline{N},\overline{W} \in \left( \OL/\mathfrak{P} \right)
  [X;\overline{\theta}]$ be the reduction of the $\theta$-polynomials
  $N,W$ modulo $\mathfrak{P}$.  Then $(\overline{N},\overline{W})$ is
  a solution to
  $\LR(n,k,t,\overline{\vectgr{g}},\overline{\vectgr{y}})$.
\end{theorem}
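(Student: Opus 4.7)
The plan is to verify each of the three conditions of Definition~\ref{def:LRP} for $(\overline N,\overline W)$ over the residue extension $\OK/\mathfrak p\hookrightarrow \OL/\mathfrak P$. By Proposition~\ref{prop:thno} and the assumption $\theta(\mathfrak P)=\mathfrak P$ carried over from the previous theorem, this is a finite cyclic Galois extension generated by $\overline\theta$, so by Theorem~\ref{thm:framework} the ring $(\OL/\mathfrak P)[X;\overline\theta]$ satisfies $\Hdim$. The reduction map $\OL\to\OL/\mathfrak P$ intertwines $\theta$ with $\overline\theta$, hence it extends to a ring homomorphism $\rho\colon\OL[X;\theta]\to(\OL/\mathfrak P)[X;\overline\theta]$ that is compatible with operator evaluation: for every $A\in\OL[X;\theta]$ and every $b\in\OL$, one has $\overline{\Eval{A}{b}}=\Eval{\overline A}{\overline b}$.

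The easy part is then to transfer the degree bounds (which can only drop under $\rho$) and the interpolation identities $\Eval{W}{y_i}=\Eval{N}{g_i}$ for $i=1,\dots,n$, which become $\Eval{\overline W}{\overline{y_i}}=\Eval{\overline N}{\overline{g_i}}$ after applying $\rho$ and the compatibility above.

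The main obstacle is ensuring $\overline W\neq 0$, since a priori every coefficient of $W$ could lie in $\mathfrak P$. I would handle this by first normalizing the solution: working in the localization $\OL_{\mathfrak P}$, which is a discrete valuation ring, let $v$ be the minimum $\mathfrak P$-adic valuation attained by a coefficient of $N$ or of $W$, and choose $\pi\in L^\times$ with $v_{\mathfrak P}(\pi)=1$; then $(\pi^{-v}N,\pi^{-v}W)$ is still a solution of $\LR(n,k,t,\vectgr g,\vectgr y)$, now with coefficients in $\OL_{\mathfrak P}$ not all in $\mathfrak P\OL_{\mathfrak P}$. Since $\OL_{\mathfrak P}/\mathfrak P\OL_{\mathfrak P}\cong\OL/\mathfrak P$, reduction modulo $\mathfrak P$ of this rescaled pair produces the same residue class of polynomials, but now $(\overline N,\overline W)\neq(0,0)$.

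Finally, I would rule out the case $\overline W=0$ by contradiction. If $\overline W=0$, the reduced interpolation identities force $\Eval{\overline N}{\overline{g_i}}=0$ for every $i=1,\dots,n$. By the definition of $\LR$ we have $\deg\overline N\leq\deg N\leq k+t<n$ (using $t\leq\lfloor(n-k)/2\rfloor$), and the $\overline{g_i}$'s are $\OK/\mathfrak p$-linearly independent by hypothesis of the previous theorem; Proposition~\ref{crl:hdim}, applicable since $\Hdim$ holds in the residue extension, then yields $\overline N=0$, contradicting the primitivity obtained by normalization. Therefore $\overline W\neq 0$, and $(\overline N,\overline W)$ satisfies all the conditions of $\LR(n,k,t,\overline{\vectgr g},\overline{\vectgr y})$.
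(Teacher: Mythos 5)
Your verification of the degree conditions and the interpolation identities
$\overline{\Eval{W}{y_i}}=\Eval{\overline W}{\overline{y_i}}$,
$\overline{\Eval{N}{g_i}}=\Eval{\overline N}{\overline{g_i}}$
via the ring-morphism property of reduction modulo $\mathfrak P$ is exactly the
paper's argument, so that part is fine. Where you diverge is in noticing the
requirement $\overline W\neq 0$ from {\bf Definition~\ref{def:LRP}}, which the
paper's proof does not discuss at all. Your observation is correct: nothing in the
hypotheses prevents every coefficient of $W$ from lying in $\mathfrak P$ (indeed,
replacing $(N,W)$ by $(\pi N,\pi W)$ produces another valid LR solution over
$\OL$), and in that degenerate case $\overline W=0$ and the literal conclusion of
the theorem fails. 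Your rescaling by $\pi^{-v}$ in the localization
$\OL_{\mathfrak P}$, followed by the $\Hdim$ argument showing that
$\overline W=0$ would force $\overline N=0$, is a correct way to salvage the
content. Be careful, though, about your claim that the rescaled pair produces
``the same residue class'': when $v>0$ the original pair reduces to $(0,0)$,
which is genuinely different from the reduction of the rescaled pair. What you
are really proving is a slightly modified statement — that a suitably normalized
(primitive) integral LR solution reduces to an LR solution modulo $\mathfrak P$
— rather than the statement as written. That is the honest reading of the
theorem, and it is what one needs downstream, so this is a reasonable fix; but
you should present it as a correction to the hypotheses (``assume $(N,W)$ is
primitive at $\mathfrak P$'') rather than as a proof of the statement verbatim.
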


\begin{proof}
By hypothesis $\overline{\vectgr{g}}$ is formed with linearly independent elements, 
and since the conditions on the degrees of the polynomials remain unchanged by taking the polynomial modulo $\mathfrak{P}$, 
it is thus sufficient to check that 
\[
\begin{array}{l}
\overline{\Eval{W}{y_i}} =  \Eval{\overline{W}}{\overline{y_i}} \ForallInt{1}{i}{n}, \\
\overline{\Eval{N}{g_i}} =  \Eval{\overline{N}}{\overline{g_i}} \ForallInt{1}{i}{n}.
\end{array}
\]
This  is immediate from the definition of $\overline{\theta}$ in {\bf Definition~\ref{Defi:CodeQuotient}}
and from the fact that the operation of taking modulo $\mathfrak{P}$ is a ring morphism.\qed
\end{proof}
Suppose that one receives the vector
$ \vectgr{y} = \Eval{f}{\vectgr{g}} + \vectgr{e}$, where
$\Eval{f}{\vectgr{g}} \in \mathcal{G}$ such that the coefficients of
$f(X)$ are taken in the integer ring $\OL$, and $\vectgr{e}$, such that
$\rang{\vectgr{e}} \leq t \leq \lfloor (n-k)/2 \rfloor$, is also
formed with elements of $\OL$.  Then
\[
	\overline{\vectgr{y}} = \Eval{\overline{f}}{\overline{\vectgr{g}}} + \overline{\vectgr{e}},
\]
where
$ \Eval{\overline{f}}{\overline{\vectgr{g}}} \in
\overline{\mathcal{G}}$, and
$\rang(\overline{\vectgr{e}}) \leq t \leq \lfloor (n-k)/2 \rfloor$.
In the case where $\overline{\mathcal{G}}$ is a generalized Gabidulin
code over $\OL/\mathfrak{P}$, then $\overline{f}(X)$ can be recovered by
any algorithm solving the linear reconstruction problem like the {\bf
  Algorithm~\ref{algo:ta}}.  The same is true if we consider erasures.

With this method we limit the growth of intermediate values and the
bit-complexity of the decoding procedure, since all computations are
completed in the finite field $\OL/\mathfrak{P}$ which is isomorphic
to $\F_{q^m}$, where $q=\size{\OK/\mathfrak p}$.
Recovering $\overline{f}$ requires $O(n^2)$ operations in
$\F_{q^m}$. 


\begin{remark}
	Cyclotomic and Kummer extensions are well adapted to this purpose. 
	Indeed, in cyclotomic extensions, there are many prime numbers of $\Z$ that give inert ideals.
	Conversely, considering Kummer extensions, many prime numbers of $\Z$ splits totally in the first extension, 
	and their prime factors remain inert in the second extension.	
\end{remark}

\section{Example and timings}
	In this section, we provide a full example to illustrate the decoding
algorithm of generalised Gabidulin code over a number field.   All steps are
detailled.  This example includes the reduction of the code modulo a prime, the
correction of row and column erasures in the network coding model, the
reconstruction algorithm ({\bf Algorithm~\ref{algo:ta}} and the final
division).  Then, we present  timings for
decoding a code with integer coefficients, when decoding over $\OL$
(i.e. in the field $L$), and when decoding modulo an inert prime ideal.

\subsection{A full example}
In this example, we consider the extension 
\[ K=\Q \hookrightarrow L=K[\alpha] = \Q[Y] / (1+Y+\cdots+Y^6)\]
provided with the $K$-automorphism $\theta$ defined by
$\theta : \alpha \mapsto \alpha^3$.  The family
$(1, \alpha, \ldots, \alpha^5) $ is a $K$-basis of $L$.

We consider a code of length $n=6$ and dimension $k=2$.
The information word has the form 
$f(X)=f_0\Monome{X}{00}+f_1\Monome{X}{1}, f_i \in L$,
where the $f_i$'s have the form 
$f_i = f_{i,0} + \cdots + f_{i,5} \alpha^5$. Furthermore, suppose
that only small message are encoded, i.e.\ $ f_{i,j} \in \{0, 1\}$,
and that this fact is known at the receiving end.

Since the coefficients of the information polynomial are small and belong to the integer ring of $L$,
we can reduce the code modulo an inert prime ideal such as $3\OL$.
Moreover, the possibles values $\set{0,1}$ of the coefficients $f_{i,j}$ are distinct modulo $3\OL$,
thus knowing $f$ modulo $3\OL$ enables to know $f$ in $L$.

This example takes place in the network coding erasure model.  The
support $\vectgr g=(g_1,\dots,g_6)$ of the code and the received word
$\vectgr y=(y_1,\dots,y_6)$ are the following :
\[ 
\begin{array}{ll}
	g_1=1,	\qquad		& y_1 =\alpha^5 + \alpha^3 - \alpha^2 + 2\alpha + 2\\
	g_2=\alpha,			& y_2 =\alpha^5 - \alpha^4 + \alpha^3 + \alpha^2 - 1\\
	g_3=\alpha^2,		& y_3 =-2\alpha^5 + 4\alpha^4 + \alpha^2 - 2\alpha\\
	g_4=\alpha^3,		& y_4 =-\alpha^5 + 2\alpha^4 + \alpha^3 - \alpha^2 + 3\\
	g_5=\alpha^4,		& y_5 =-2\alpha^5 - 2\alpha^2\\
	g_6=\alpha^5,		& y_6 =-\alpha^5 - \alpha^4 + \alpha^3 - 2\alpha^2 - \alpha + 2\\
\end{array}
\]
and the receiver also knows the following matrices :
\[ \matrice{A_r}=\begin{pmatrix} 1\\-1\\0\\1\\1\\-1 \end{pmatrix}
\text{ and }
\matrice{B_c}=\begin{pmatrix} 1&0&-1&0&0&1 \end{pmatrix} \]
that describe erasures in the network coding model.

We first reduce the $g_i$'s, the $y_i$'s and the matrices $A_r$ and $B_c$ modulo $3\OL$.
The support of the code and the received word become the following:
\[
\begin{array}{ll}
g_1=1,\qquad 				&y_1=\overline{\alpha}^5 + \overline{\alpha}^3 + 2\overline{\alpha}^2 + 2\overline{\alpha} + 2\\
g_2=\overline{\alpha},		&y_2=\overline{\alpha}^5 + 2\overline{\alpha}^4 + \overline{\alpha}^3 + \overline{\alpha}^2 + 2\\
g_3=\overline{\alpha}^2,	&y_3=\overline{\alpha}^5 + \overline{\alpha}^4 + \overline{\alpha}^2 + \overline{\alpha}\\
g_4=\overline{\alpha}^3,	&y_4=2\overline{\alpha}^5 + 2\overline{\alpha}^4 + \overline{\alpha}^3 + 2\overline{\alpha}^2\\
g_5=\overline{\alpha}^4,	&y_5=\overline{\alpha}^5 + \overline{\alpha}^2\\
g_6=\overline{\alpha}^5,	&y_6=2\overline{\alpha}^5 + 2\overline{\alpha}^4 + \overline{\alpha}^3 + \overline{\alpha}^2 + 2\overline{\alpha} + 2\\
\end{array}
\]
Then, we correct column erasures.
We notice that the column operations 
$C_1 \leftarrow C_1-C_6$
and
$C_3 \leftarrow C_3+C_6$
give a column reduced echelon form (but with pivots on the right).
We do the same operation on $\vectgr{g}$ and $\vectgr{y}$ 
and remove their last component (which contains all  erasures) :
\[ 
\begin{array}{ll}
	g_1=2\overline{\alpha}^5 + 1, \qquad	
	& y_1 =2\overline{\alpha}^5 + \overline{\alpha}^4 + \overline{\alpha}^2\\
	g_2=\overline{\alpha},		
	& y_2 =\overline{\alpha}^5 + 2\overline{\alpha}^4 + \overline{\alpha}^3 + \overline{\alpha}^2 + 2\\
	g_3=\overline{\alpha}^5 + \overline{\alpha}^2,	
	& y_3 =\overline{\alpha}^3 + 2\overline{\alpha}^2 + 2\\
	g_4=\overline{\alpha}^3,		
	& y_4 =2\overline{\alpha}^5 + 2\overline{\alpha}^4 + \overline{\alpha}^3 + 2\overline{\alpha}^2\\
	g_5=\overline{\alpha}^4,		
	& y_5 =\overline{\alpha}^5 + \overline{\alpha}^2\\
\end{array}
\]
Then, we correct row erasures.
The column matrix $\matrice{A_r}$ modulo $3\OL$ correspond to the element 
$2\overline{\alpha}^5 + \overline{\alpha}^4 + \overline{\alpha}^3 + 2\overline{\alpha} + 1 \in \OL$.
We compute the annihilator polynomial $\mathcal V_r(X)$ of the columns of $\matrice{A_r}\bmod 3\OL$:
\[\mathcal V_r(X)= (\overline{\alpha}^5 + \overline{\alpha}^2)X^0 +(1)X^1 \] and
evaluate this polynomial over the $y_i$'s.  We are now looking for
$F(X)=\mathcal V_r (X)\cdot f(X)$ instead of $f(X)$.  The evaluations
$z_i$, $i=1,\dots,5$  of $F(X)$ are:
\[ 
\begin{array}{ll}
	g_1=2\overline{\alpha}^5 + 1,\qquad	& 
	z_1 =2\overline{\alpha}^5 + 2\overline{\alpha}^4 + 2\overline{\alpha}^2 + 1\\
	g_2=\overline{\alpha},		& 
	z_2 =2\overline{\alpha}^5 + \overline{\alpha}^4 + \overline{\alpha}^3 + 2\overline{\alpha}^2 + 2\overline{\alpha} + 1\\
	g_3=\overline{\alpha}^5 + \overline{\alpha}^2,	& 
	z_3 =\overline{\alpha}^5 + \overline{\alpha}^3 + \overline{\alpha}^2 + 2\overline{\alpha} + 2\\
	g_4=\overline{\alpha}^3,		& 
	z_4 =2\overline{\alpha}^5 + \overline{\alpha}^4 + \overline{\alpha}^3 + 2\overline{\alpha}^2 + 2\overline{\alpha}\\
	g_5=\overline{\alpha}^4,		& 
	z_5 =2\overline{\alpha}^5 + 2\overline{\alpha}^2 + 1\\
\end{array}
\]
We now have to solve the reconstruction problem to get $N$ and $W$.
We now have a code of dimension $3$ and length $5$, no more erasures,
and rank errors only, and we can apply our Welch-Berlekamp
algorithm. The initialisation will consider the first three
evaluations and there will be $2$ iterations.  We use the
division-free variant for this example.

At  initialisation, we compute the following polynomials.
\[ \begin{array}{ll}
N_0(X)=	&(\overline{\alpha}^5 + \overline{\alpha}^4 + \overline{\alpha}^3 + 2\overline{\alpha}^2 + 2\overline{\alpha} + 2)X^0
		+(\overline{\alpha}^5 + 2\overline{\alpha}^2 + 2\overline{\alpha} + 2)X^1\\
		&+(\overline{\alpha}^4 + 2\overline{\alpha} + 1)X^2
		+(1)X^3
\\
W_0(X)=	&0
\\	
N_1(X)=	&(2\overline{\alpha}^5 + \overline{\alpha}^4 + \overline{\alpha}^2 + 2\overline{\alpha})X^0
		+(2\overline{\alpha} + 2)X^1\\
		&+(2\overline{\alpha}^5 + 2\overline{\alpha}^4 + 2\overline{\alpha}^3 + 2\overline{\alpha} + 1)X^2
\\														
W_1(X)=	&(1)X^0
\\
\end{array} \]
We also initialise the discrepancies:
\begin{align*}
	\vectgr{u}_0&=(0,0,0,2\overline{\alpha} + 2, \overline{\alpha}^5 + \overline{\alpha}^4 + \overline{\alpha}^3 + 2\overline{\alpha}^2 + \overline{\alpha} + 2)\\
	\vectgr{u}_1&=(0,0,0,\overline{\alpha}^5 + 2\overline{\alpha}^4 + 2\overline{\alpha}^3 + \overline{\alpha}^2 + 2\overline{\alpha} + 2,  2\overline{\alpha}^5 + \overline{\alpha}^3)
\end{align*}
We begin the first iteration by extracting discrepancies:
\[ \begin{array}{ll}
u_{0,4} &=N_0(g_4)-W_0(z_4)\\
    &=2\overline{\alpha} + 2\\
u_{1,4} &=N_1(g_4)-W_1(z_4)\\
    &=\overline{\alpha}^5 + 2\overline{\alpha}^4 + 2\overline{\alpha}^3 + \overline{\alpha}^2 + 2\overline{\alpha} + 2\\
\end{array} \]
Then we update the polynomials by the following formulae:
\[ \begin{array}{ll}
N'_0(X)=&(X^1-\frac{\theta(u_1)}{u_1}X^0) \times N_1(X)\\
W'_0(X)=&(X^1-\frac{\theta(u_1)}{u_1}X^0) \times W_1(X)\\
N'_1(X)=&N_0-\frac{u_0}{u_1}N_1(X)\\
W'_1(X)=&W_0-\frac{u_0}{u_1}W_1(X).
\end{array} \]
 We schitch them to get:
\[ \begin{array}{ll}
N_0(X)=N'_1(X)=	&(2\overline{\alpha}^5 + 2\overline{\alpha}^4 + \overline{\alpha} + 1)X^0
			+(2\overline{\alpha}^5 + \overline{\alpha}^4 + \overline{\alpha}^3 + 2\overline{\alpha}^2)X^1\\
			&+(\overline{\alpha}^5 + 2\overline{\alpha}^4 + \overline{\alpha}^3 + \overline{\alpha}^2 + 2\overline{\alpha})X^2
			+(2\overline{\alpha}^5 + 2\overline{\alpha}^3 + 2\overline{\alpha}^2 + 2\overline{\alpha} + 1)X^3\\
W_0(X)=W'_1(X)=	&(2\overline{\alpha}^5 + \overline{\alpha}^4 + 2\overline{\alpha}^3 + 2\overline{\alpha} + 1)X^0
			+(1)X^1\\
N_1(X)=N'_0(X)=	&(2\overline{\alpha}^3 + \overline{\alpha}^2 + 2\overline{\alpha} + 2)X^0
			+(\overline{\alpha}^5 + 2\overline{\alpha}^4 + 2\overline{\alpha}^3 + 2\overline{\alpha}^2 + \overline{\alpha})X^1\\
			&+(2\overline{\alpha}^4 + \overline{\alpha}^3 + 2\overline{\alpha})X^2
			+(1)X^3\\
W_1(X)=W'_0(X)=	&(2\overline{\alpha}^5 + 2\overline{\alpha} + 1)X^0\\
\end{array} \]
We also update and switch discrepancies, to get:
\[ \begin{array}{llllll}
	\vectgr{u}_0&=(0,&0,&0,&0, &2\overline{\alpha}^5 + 2\overline{\alpha}^4 + 2\overline{\alpha}^3 + 2\overline{\alpha} + 1)\\
	\vectgr{u}_1&=(0,&0,&0,&0, &2\overline{\alpha}^5 + 2\overline{\alpha}^4 + 2\overline{\alpha}^3 + \overline{\alpha}^2 + 2\overline{\alpha} + 1)\\
\end{array} \]
During the second iteration, we get:
\[ \begin{array}{ll}
u_{0,5} &=N_0(g_4)-W_0(g_5)\\
    &=2\overline{\alpha}^5 + 2\overline{\alpha}^4 + 2\overline{\alpha}^3 + 2\overline{\alpha} + 1\\
u_{1,5} &=N_1(g_4)-W_1(z_5)\\
    &=2\overline{\alpha}^5 + 2\overline{\alpha}^4 + 2\overline{\alpha}^3 + \overline{\alpha}^2 + 2\overline{\alpha} + 1\\
\end{array} \]
and 
\[ \begin{array}{ll}
N_0(X)=N'_1(X)=	&(2\overline{\alpha}^4 + \overline{\alpha}^3 + 2\overline{\alpha}^2 + \overline{\alpha} + 1)X^0
			+(2\overline{\alpha}^5 + \overline{\alpha}^4 + \overline{\alpha}^2 + 2)X^1\\
			&+(\overline{\alpha}^4 + \overline{\alpha}^3 + \overline{\alpha}^2 + 1)X^2
			+(2\overline{\alpha}^5)X^3\\
			&+(1)X^4\\
W_0(X)=W'_1(X)=	&(2\overline{\alpha}^4 + 2\overline{\alpha}^3 + 2\overline{\alpha}^2 + 2\overline{\alpha} + 1)X^0
			+(2\overline{\alpha}^3 + 2\overline{\alpha} + 1)X^1\\
N_1(X)=N'_0(X)=	&(2\overline{\alpha}^5 + 2\overline{\alpha}^3)X^0
			+(\overline{\alpha}^5 + \overline{\alpha}^3 + \overline{\alpha}^2 + \overline{\alpha} + 2)X^1\\
			&+(\overline{\alpha}^5 + 2\overline{\alpha}^4 + 2\overline{\alpha}^3 + 2\overline{\alpha}^2 + \overline{\alpha} + 1)X^2
			+(\overline{\alpha}^3)X^3\\
W_1(X)=W'_0(X)=	&(\overline{\alpha}^4 + \overline{\alpha}^3 + 2\overline{\alpha}^2 + \overline{\alpha} + 1)X^0
			+(1)X^1\\
\end{array} \]
Updating and switching discrepancies, we get
\[ \begin{array}{llllll}
	\vectgr{u}_0&=(0,&0,&0,&0, &0)\\
	\vectgr{u}_1&=(0,&0,&0,&0, &0)\\
\end{array} \]
which satisy the interpolation condition.
Then, the polynomials
\[ \begin{array}{ll}
N(X)=N_1(X)=	&(2\overline{\alpha}^5 + 2\overline{\alpha}^3)X^0
		+(\overline{\alpha}^5 + \overline{\alpha}^3 + \overline{\alpha}^2 + \overline{\alpha} + 2)X^1\\
		&+(\overline{\alpha}^5 + 2\overline{\alpha}^4 + 2\overline{\alpha}^3 + 2\overline{\alpha}^2 + \overline{\alpha} + 1)X^2
		+(\overline{\alpha}^3)X^3\\
W(X)=W_1(X)=	&(\overline{\alpha}^4 + \overline{\alpha}^3 + 2\overline{\alpha}^2 + \overline{\alpha} + 1)X^0
		+(1)X^1\\
\end{array} \]
are solution of the reconstruction problem.
We divide them to get 
\[ \begin{array}{ll}
F(X)=&(\overline{\alpha}^4 + 1)X^0
+(2\overline{\alpha}^5 + 2\overline{\alpha}^4 + 2\overline{\alpha}^2 + 2\overline{\alpha})X^1
+(\overline{\alpha})X^2,\\
\end{array} \]
then, we divide it by $\mathcal V_r(X)$ to recover
\[ \begin{array}{ll}
f(X)=&(\overline{\alpha}^2)X^0
+(\overline{\alpha}^5)X^1.\\
\end{array} \]
Thus, since the possible $f_{i,j}$ are $0$ or $1$ modulo $\mathfrak{P}$,
we have recovered the information polynomial $f(X)=\alpha^2X^0+\alpha^5X^1$.


\subsection{Timings}
For timings measurements, our generalized Gabidulin codes are constructed over a
cyclotomic extension
$\Q \hookrightarrow L=\Q[\alpha]=\Q[Y]/(1+Y+\cdots+Y^{m-1})$, for
small values of $m$, see Table~\ref{table:cyclofields}.  The
information words are on the form
$f(X)=\sum_{i=0}^{k-1}f_i\Monome{X}{i}$.  The error is constructed by making the
product of a vector $(e_1, \ldots, e_t)$ by a matrix of $t$ rows and
$n$ columns over $K$.  In order to have small coefficients, the
$f_i$'s and the $e_i$'s are on the form $\sum_{j=1}^{m} x_{j}\alpha^j$
with $x_j \in \{0;1\}$ and the expanded coefficients are chosen in
$\{-1; 0; 1\}$.

First, computations are done in $\OL$, since all coefficients are
integral.  Then, the received words are reduced modulo an inert prime
ideal $\mathfrak{P}$.  These reduced received words are decoded over
$\OL/\mathfrak{P}$.  This ideal is generated by the smallest prime
number of $\Z$ inert in $\OL$.  The time required for computation are
respectively presented in {\bf Tables~\ref{tbl:tps-comp-L}
  and~\ref{tbl:tps-comp-OL}}.

The algorithm has been writted in Magma V2.20-9.
The machine has 24 processors intel xeon X5690, 96 gigas of RAM, 
is 3.47gHz clocked, and has distribution ubuntu 14.
Time computation are obained with the Cputime function.
It corresponds to the time required to $50$ decodings.

\begin{table}
\centering
\begin{tabular}{|c|c|c|c|c|c|c|c|c|}
\hline
length $n$ of the code		&$4$	&$6$	&$8$	&$10$	&$12$	&$14$	&$16$	\\
\hline
degree $[\Q[\alpha]:\Q]$	&$4$	&$6$	&$10$	&$10$	&$12$	&$16$	&$16$	\\
\hline
prime ideal $\mathfrak{P}$	&$2\OL$	&$3\OL$	&$2\OL$	&$2\OL$	&$2\OL$	&$3\OL$	&$3\OL$	\\
\hline
\end{tabular}
\caption{Cyclotomic extensions used in our timings}\label{table:cyclofields}
\end{table}

\begin{table}
\centering
\begin{tabular}{|c|c|c|c|c|c|c|c|c|}
\hline 
$n \backslash k	$&$ 2 $&$ 4 $&$ 6 $&$ 8 $&$ 10 $&$ 12 $&$ 14 $&$ 16 $\\ 
\hline 
$4 				$&$ 0.07 $&$ 0.08 $&$ {} $&$ {} $&$ {} $&$ {} $&$ {} $&$ {} $\\ 
\hline 
$6 				$&$ 0.23 $&$ 0.24 $&$ 0.21 $&$ {} $&$ {} $&$ {} $&$ {} $&$ {} $\\ 
\hline 
$8 				$&$ 0.84 $&$ 0.85 $&$ 0.87 $&$ 0.85 $&$ {} $&$ {} $&$ {} $&$ {} $\\ 
\hline 
$10 			$&$ 1.78 $&$ 1.91 $&$ 2.11 $&$ 2.32 $&$ 2.51 $&$ {} $&$ {} $&$ {} $\\ 
\hline 
$12 			$&$ 10.59 $&$ 11.00 $&$ 12.52 $&$ 15.15 $&$ 17.72 $&$ 20.32 $&$  {} $&$ {} $\\ 
\hline 
$14 			$&$ 215.21 $&$ 196.06 $&$ 202.07 $&$ 242.59 $&$ 292.91 $&$ 345.89 $&$ 398.77 $&$ {} $\\ 
\hline 
$16 			$&$ 1522.54 $&$ 1320.98 $&$ 1405.90 $&$ 1722.54 $&$ 2061.72 $&$ 2503.12 $&$ 2887.52 $&$ 3223.13 $\\ 
\hline 
\end{tabular}
\caption{Timings over the number field $L$}
\label{tbl:tps-comp-L}
\end{table}

\begin{table}
\centering
\begin{tabular}{|c|c|c|c|c|c|c|c|c|}
\hline 
$n \backslash k	$&$ 2 $&$ 4 $&$ 6 $&$ 8 $&$ 10 $&$ 12 $&$ 14 $&$ 16 $\\ 
\hline 
$4 				$&$ 0.10 $&$ 0.09 $&$ {} $&$ {} $&$ {} $&$ {} $&$ {} $&$ {} $\\ 
\hline 
$6 				$&$ 0.33 $&$ 0.31 $&$ 0.25 $&$ {} $&$ {} $&$ {} $&$ {} $&$ {} $\\ 
\hline 
$8 				$&$ 0.83 $&$ 0.83 $&$ 0.76 $&$ 0.62 $&$ {} $&$ {} $&$ {} $&$ {} $\\ 
\hline 
$10 			$&$ 1.33 $&$ 1.37 $&$ 1.32 $&$ 1.19 $&$ 1.00 $&$ {} $&$ {} $&$ {} $\\ 
\hline 
$12 			$&$ 2.33 $&$ 2.46 $&$ 2.46 $&$ 2.31 $&$ 2.06 $&$ 1.78 $&$ {} $&$ {} $\\ 
\hline 
$14 			$&$ 5.13 $&$ 5.41 $&$ 5.42 $&$ 5.20 $&$ 4.78 $&$ 4.25 $&$ 3.72 $&$ {} $\\ 
\hline 
$16 			$&$ 6.69 $&$ 7.20 $&$ 7.31 $&$ 7.17 $&$ 6.74 $&$ 6.20 $&$ 5.53 $&$ 4.96 $\\ 
\hline 
\end{tabular}
\caption{Timings using the residue field $\OL/\mathfrak{P}$}\label{tbl:tps-comp-OL}
\end{table}


\section{Conclusion}
	Given any cyclic Galois extension $K \hookrightarrow L$ provided with
an automorphism $\theta$ generating the Galois group $\Auto_K(L)$, we
can design generalized Gabidulin codes.  Cyclotomic, Kummer or
Artin-Schreier extensions are examples of extensions that fulfill the
condition which enable to design the codes.

We also have provided various useful definitions of the rank
metric. These generalized Gabidulin codes have the same properties as
their analogues in finite fields, namely they are also MRD, and can be
decoded with an adaptation of the Welch-Berlekamp algorithm with
quadratic complexity in terms of operations in $L$.  Such a code, with
parameters $[n,k,d]_r$, enables to correct up to $s_c$ column
erasures, $s_r$ row erasures and an error of rank $t$ if $s_c+s_r+2t
\leq n-k$ in both line erasure and network coding erasure models.

Of course, over an infinite field, one obstacle is the growth of
intermediate coefficients.  We can circumvent this problem computing
modulo an inert prime ideal of $\Z$, chosen large enough when the size
of the message or the error is known, by observing that a Generalized
Gabidulin code modulo a prime ideal is a classical Gabidulin code over
a finite field.

\bibliographystyle{alpha}
\bibliography{biblio}

\end{document}